\providecommand{\U}[1]{\protect \rule{.1in}{.1in}}
\newenvironment{proof}[1][Proof]{\noindent \textbf{#1.} }{\  \rule{0.5em}{0.5em}}
\newtheorem{remark}{Remark}
\newtheorem{theorem}{Theorem}
\newtheorem{lemma}{Lemma}
\newtheorem{corollary}{Corollary}
\begin{document}
\begin{frontmatter}
\title{Extremum seeking in the presence of large delays via time-delay approach to averaging\thanksref{footnoteinfo}}

\thanks[footnoteinfo]{This work was supported by the National Natural Science
Foundation of China (Grant No. 61903102), the Fundamental Research Funds for the Central Universities £¨Grant No. HIT.OCEF.2023007), the Planning and Budgeting Committee (PBC) Fellowship from the Council for Higher Education in Israel, the Israel Science Foundation (Grant No. 673/19) and the C. and H. Manderman Chair at Tel Aviv University.}
\author[a]{Xuefei Yang}\ead{yangxuefei@hit.edu.cn}    % Add the
\author[b]{\quad Emilia Fridman}\ead{emilia@tauex.tau.ac.il}
\address[a]{Center for Control Theory and Guidance Technology, Harbin Institute of Technology, Harbin, China.}
\address[b]{School of Electrical Engineering, Tel-Aviv University, Israel.}

\begin{abstract}
In this paper, we study gradient-based classical extremum seeking (ES) for uncertain n-dimensional (nD) static quadratic maps in the presence of known large constant distinct input delays and large output constant delay with a small time-varying uncertainty. This uncertainty may appear due to network-based measurements. We present a quantitative analysis via a time-delay approach to averaging. We assume that the Hessian has a nominal known part and norm-bounded uncertainty, the extremum point belongs to a known box, whereas the extremum value to a known interval. By using the orthogonal transformation, we first transform the original static quadratic map into a new one with the Hessian containing a nominal diagonal part. We apply further a time-delay transformation to the resulting ES system and arrive at a time-delay system, which is a perturbation of a linear time-delay system with constant coefficients. Given large delays, we choose appropriate gains to guarantee stability of this linear system.
 To find a lower bound on the dither frequency for practical stability, we employ variation of constants formula and exploit the delay-dependent positivity of the fundamental solutions of the linear system with their tight exponential bounds. Sampled-data ES in the presence of large distinct input delays is also presented. Explicit conditions in terms of simple scalar inequalities depending on tuning parameters and delay bounds are established to guarantee the practical stability of the ES control systems. We show that given any large delays and initial box, by choosing appropriate gains we can
achieve practical stability for fast enough dithers and small enough uncertainties.
\end{abstract}

\begin{keyword}
Time-delay, Extremum seeking, Averaging, Sampled-data implementation, Practical stability.
\end{keyword}
\end{frontmatter}

\section{Introduction}

ES is a model-free, real-time on-line adaptive optimization control method.
In 2000, Krstic and Wang gave the first
rigorous stability analysis for an ES system by using averaging and singular
perturbations in \cite{kw00auto}, which laid a theoretical foundation for the
development of ES. Subsequently, a great amount of theoretical and applied
studies on ES are emerging, see e.g. non-local ES control in
\cite{tnm06auto}, Newton-based ES in \cite{gkn12auto,mmb10tac}, ES
via Lie bracket approximation in \cite{dsej13auto,lgke19auto}, stochastic ES
in \cite{lk12book} and ES control for wind
farm power maximization in \cite{eg17ifac}.

The time-delay phenomenon often exists in applications of ES, due to time
needed to measuring and processing of the data
(\cite{mk21auto,okt17tac,ok22book}). The existence of time-delay can cause
performance deterioration and even instability of the ES control system. To
address the challenges of delays in extremum seeking, Oliveira et al. in
\cite{okt17tac} provided classical predictors for delayed gradient and
Newton-based methods. This work was later extended to
single-variable ES with uncertain constant delay in \cite{rokg21ejc}, and with
time-varying delay in \cite{orddk2018acc,rokg21ijacs}. Recently, Malisoff et
al. in \cite{mk21auto} reconsidered the multi-variable ES for static maps with
arbitrarily long time constant delays by using a one-stage sequential
predictor, which can avoid the interference of the integral term appeared in
\cite{okt17tac}. However, these methods provide only qualitative analysis employing the classical averaging theory
in infinite dimensions (see \cite{hl90jiea}),
and cannot suggest quantitative bounds on the dither frequency that preserve
the stability.

Recently, a new constructive time-delay approach to the continuous-time
averaging was presented in \cite{fz20auto} with efficient and quantitative
bounds on the small parameter that ensures the stability. The time-delay
approach to averaging was successfully applied for the quantitative stability
analysis of continuous-time ES algorithms in \cite{zf22auto} and sampled-data
ES algorithms in \cite{zfo22tac} for static quadratic maps by constructing
appropriate Lyapunov-Krasovskii (L-K) functionals. However, the analysis via
L-K method is complicated and the results may be conservative. In our recent
paper \cite{yf22tac} we suggested a robust time-delay approach to ES, where we
presented the resulting time-delay model as a non-delayed one with
disturbances and further employed a variation of constants formula. The latter
can greatly simplify the stability analysis via L-K method, simplify the
conditions and reduce conservatism.

In this paper, we consider the multi-variable ES of uncertain static quadratic
maps with known large constant distinct input delays and large output constant delay
with a small uncertain fast-varying delay (without constraints on the delay
derivative). Note that each individual input channel may induce a different
delay (see \cite{bk17tac}), whereas the delay uncertainty may appear due to
network-based measurements (see \cite{fri14book}). We first transform the
original static quadratic map into a new one with the Hessian containing a
nominal diagonal part, and then apply a time-delay approach to the resulting
ES system to get a time-delay system, which is a perturbation of a linear
time-delay system with constant coefficients. Finally, we use the variation of
constants formula to quantitatively analyze the practical stability of the
retarded systems (and thus of the original ES systems). In the stability
analysis, we exploit positivity of the fundamental solution that corresponds
to the nominal time-delay system to obtain tighter bounds. Moreover, the
sampled-data ES in the presence of large distinct input delays is also
presented. Explicit conditions in terms of simple inequalities are established
to guarantee the practical stability of the ES control systems. Through the
solution of the constructed inequalities, we find upper bounds on the dither
period that ensures the practical stability, and also provide quantitative
ultimate bound (UB) on estimation error. We show that given any large delays
and initial box, by choosing appropriate gains we can achieve practical
stability for fast enough dithers and small enough uncertainties.
Note that if the Hessian is completely unknown, quantitative results seem to be not possible, since the Hessian
is used in ES algorithm to provide an estimate of the gradient.

We summarize the contribution as follows: 1) For the first time, quantitative
conditions are presented for ES of partly known static quadratic maps in the
presence of large input and output delays, where uncertain fast-varying
measurement delay (that may appear due to e.g. network-based measurements) and
distinct input delays are taken into account. Note that in the existing
qualitative results \cite{mk21auto,okt17tac} the case of distinct delays in
the map (and not in the inputs) is considered, whereas in \cite{orddk2018acc,rokg21ijacs} the delays
are known and slowly-varying (with the delay derivative less than one). 2)
Differently from the existing ES results via the time-delay approach
\cite{yf22tac,zf22auto,zfo21cdc,zfo22tac}, in the present paper we suggest to start
with diagonalization of the nominal part of the quadratic map and further
exploit positivity of the fundamental solutions of the time-delay equations
with their tight bounds. This leads to simplified and less conservative
conditions. 3) For the first time we present sampled-data implementation of ES
control of nD quadratic static map in the presence of large distinct input
delays. Note that in \cite{zfo21cdc,zfo22tac}, one and two input variables
with a single delay of the order of the small parameter were considered.

The paper's rest organization is as follows: In Section \ref{sec2}, we present
some notation and preliminaries. In Section \ref{sec3} and Section \ref{sec4},
we apply the time-delay approach to gradient-based classical ES and
sampled-data ES, respectively. The latter sections contain two parts: the
theoretical results and examples with simulations. Section \ref{sec5}
concludes this paper.

\section{Notation and Preliminaries\label{sec2}}

\textbf{Notation:} The notation used in this paper is fairly standard. The
notations $\mathbf{Z}_{+}$ and $\mathbf{N}$ refer to the set of nonnegative
integers and positive integers, respectively. The notation $P>0$ ($<0$) for
$P\in \mathbf{R}^{n\times n}$ means that $P$ is symmetric and positive
(negative) definite. $I_{n}$ is the $n\times n$ identity matrix. The notations
$\left \vert \cdot \right \vert $ and $\left \Vert \cdot \right \Vert $ refer to the
Euclidean vector norm and the induced matrix $2$ norm, respectively.

We will employ the variation of constants formula for delay differential
equations and some properties of the corresponding fundamental solutions as in
the following lemma, these results are brought from \cite{abbd12book} (see
Lemma 2.1, Theorem 2.7 and Corollary 2.14).

\begin{lemma}
\label{lemma1}Consider the following scalar delay differential equation:%
\begin{equation}
\left.  \dot{x}(t)+ax(t-g(t))=f(t),\text{ }t\geq t_{0}\right.  \label{eq30}%
\end{equation}
with the initial value%
\begin{equation}
x(t)=\varphi(t),\text{ }t<t_{0},\text{ }x(t_{0})=x_{0}, \label{eq31}%
\end{equation}
where $g:[t_{0},\infty)\rightarrow \mathbf{R}$ is a Lebesgue measurable
function, $f:[t_{0},\infty)\rightarrow \mathbf{R}$ is a Lebesgue measurable
locally essentially bounded function and $\varphi:(-\infty,t_{0}%
)\rightarrow \mathbf{R}$ is a piecewise continuous and bounded function.
Then there exists one and only one solution for (\ref{eq30})-(\ref{eq31}) as
in the following form%
\begin{equation}
\left.
\begin{array}
[c]{l}%
x(t)=X(t,t_{0})x_{0}+%
%TCIMACRO{\tint \nolimits_{t_{0}}^{t}}%
%BeginExpansion
{\textstyle \int \nolimits_{t_{0}}^{t}}
%EndExpansion
X(t,s)f(s)\mathrm{d}s\\
\text{ \  \  \  \  \  \  \  \ }-%
%TCIMACRO{\tint \nolimits_{t_{0}}^{t}}%
%BeginExpansion
{\textstyle \int \nolimits_{t_{0}}^{t}}
%EndExpansion
X(t,s)a\varphi(s-g(s))\mathrm{d}s,
\end{array}
\right.  \label{eq77}%
\end{equation}
where $\varphi(s-g(s))=0$ if $s-g(s)>t_{0}\mathbf{,}$ and the fundamental
solution $X(t,s)$ is the solution of%
\[
\left.
\begin{array}
[c]{l}%
\dot{x}(t)+ax(t-g(t))=0,\text{ }t\geq s,\\
x(t)=0,\text{ }t<s,\text{ }x(s)=1.
\end{array}
\right.
\]
Particularly, let $a>0,$ $0\leq g(t)\leq L,$ $t\geq t_{0}$ and%
\[
\left.  LA\leq \frac{1}{\mathrm{e}},\text{ }t\geq t_{0}.\right.
\]
Then for $t>s\geq t_{0},$%
\[
\left.  0<X(t,s)\leq \left \{
\begin{array}
[c]{ll}%
1, & s\leq t\leq s+L,\\
\mathrm{e}^{-a(t-s-L)}, & t\geq s+L.
\end{array}
\right.  \right.
\]

\end{lemma}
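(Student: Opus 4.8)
The plan is to treat the two assertions separately: first the existence, uniqueness and representation (\ref{eq77}), and then the qualitative bounds on the fundamental solution $X(t,s)$; the latter carries essentially all of the difficulty.

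For the representation I would argue by the method of steps, supplemented by the existence theory for linear functional differential equations with measurable coefficients. Partitioning $[t_0,\infty)$ along the points where the delayed argument $t-g(t)$ re-enters $[t_0,\infty)$, equation (\ref{eq30}) becomes on each step an ordinary linear equation whose right-hand side is already fixed by the solution on the previous step; this produces a unique locally absolutely continuous solution on all of $[t_0,\infty)$. To get the closed form, I would define $X(t,s)$ as the solution of the homogeneous problem carrying the unit jump at $t=s$ and verify by direct differentiation that the right-hand side of (\ref{eq77}) satisfies both (\ref{eq30}) and the data (\ref{eq31}); uniqueness then forces equality. The convention $\varphi(s-g(s))=0$ for $s-g(s)>t_{0}$ merely separates the history term from the forced term.

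For the bounds I would begin on the initial window $s\le t\le s+L$. Since $g(t)\le L$, the delayed argument satisfies $t-g(t)\ge t-L\ge s-L$, so $X(\cdot-g(\cdot))$ is evaluated either before $s$, where it vanishes, or on $[s,t]$. A first-crossing argument rules out a zero here: while $X\ge 0$ one has $\dot X(t)=-aX(t-g(t))\le 0$, so $X$ is non-increasing and bounded above by $X(s)=1$, while simultaneously $\dot X(t)\ge -a$, whence $X(t)\ge 1-a(t-s)\ge 1-aL>0$ because $aL\le 1/\mathrm{e}<1$. This yields $0<X(t,s)\le 1$ on $[s,s+L]$.

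The main obstacle is propagating positivity to all $t>s$, and this is exactly where the sharp threshold $aL\le 1/\mathrm{e}$ is needed. I would suppose, for contradiction, that $t^{\ast}>s+L$ is the first zero of $X$. On $[s,t^{\ast})$ the solution is positive, so the logarithmic derivative $u(t):=-\dot X(t)/X(t)=a\,X(t-g(t))/X(t)\ge 0$ is well defined, and integrating the identity $X(t)=\exp(-\int_{s}^{t}u)$ gives the self-consistent relation $u(t)=a\exp\!\big(\int_{t-g(t)}^{t}u(\sigma)\,\mathrm{d}\sigma\big)$. Bounding $\int_{t-g(t)}^{t}u\le L\sup u$ reduces the problem to the scalar fixed-point inequality $U=a\mathrm{e}^{LU}$, which admits a finite root precisely when $aL\le 1/\mathrm{e}$; a bootstrap from $u(s)\le a$ then keeps $u$ below the smallest such root on $[s,t^{\ast})$, so $X(t)=\exp(-\int_{s}^{t}u)$ stays strictly positive and cannot vanish at $t^{\ast}$, a contradiction. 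Once $X>0$ holds everywhere, $\dot X\le 0$ gives global monotonicity, so for $t\ge s+L$ the inequality $X(t-g(t))\ge X(t)$ yields $\dot X(t)\le -aX(t)$; Gr\"onwall's inequality with $X(s+L)\le 1$ then produces the claimed decay $X(t,s)\le \mathrm{e}^{-a(t-s-L)}$.
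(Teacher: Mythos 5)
The paper itself contains no proof of Lemma \ref{lemma1}: it is imported verbatim from \cite{abbd12book} (Lemma 2.1, Theorem 2.7 and Corollary 2.14), so your self-contained argument is necessarily a different route, and the comparison is really between your reconstruction and the citation. Your outline recovers the correct mechanisms: uniqueness plus direct verification for the representation (\ref{eq77}); the window estimate $X(t)\geq 1-a(t-s)\geq 1-aL>0$ on $[s,s+L]$; the reduction of global positivity to solvability of $U=a\mathrm{e}^{LU}$, which holds exactly when $aL\leq 1/\mathrm{e}$ (your identity $u(t)=a\exp\bigl(\int_{t-g(t)}^{t}u\bigr)$, with $u(t)=0$ where $t-g(t)<s$, is precisely the ``generalized characteristic equation'' of nonoscillation theory); and the decay bound from $X(t-g(t))\geq X(t)$ for $t\geq s+L$ plus Gr\"onwall. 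What the citation buys is greater generality (the book's results allow time-varying coefficients and delays under $\int_{t-g(t)}^{t}a(\sigma)\mathrm{d}\sigma\leq 1/\mathrm{e}$); what your argument buys is a self-contained justification of a lemma the paper treats as a black box, in the constant-coefficient form that is all the paper actually uses.

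Two steps need repair before your sketch is rigorous. First, the method of steps does not apply as stated: $g$ is only measurable and need not be bounded away from zero, so there is in general no partition of $[t_0,\infty)$ on which (\ref{eq30}) reduces to an ODE. Existence and uniqueness should instead come from the Volterra structure ($s-g(s)\leq s$), e.g.\ Picard iteration on compact intervals, which is presumably the general theory you allude to but should be made the primary argument. Second, and more substantively, your bootstrap runs a first-crossing argument on $u$ itself, but $u(t)=aX(t-g(t))/X(t)$ is only measurable---it inherits the discontinuities of $g$---so ``the first time $u$ exceeds the smallest root $U^{\ast}$'' is not well defined and a pointwise crossing argument fails as written. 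The cure is to transfer continuity to an integral: for instance, show that the continuous nondecreasing function $J(t)=\max_{\sigma\in[s,t]}\int_{\max(s,\sigma-L)}^{\sigma}u(r)\,\mathrm{d}r$ satisfies $J(t)\leq aL\,\mathrm{e}^{J(t)}$, so by the intermediate value theorem $J$ can never enter the forbidden region where $V>aL\,\mathrm{e}^{V}$, whence $u\leq U^{\ast}$ a.e.\ and $X(t)\geq\mathrm{e}^{-U^{\ast}(t-s)}>0$; alternatively, construct a fixed point of the characteristic relation by monotone iteration $u_{0}\equiv 0$, $u_{k+1}(t)=a\exp\bigl(\int_{t-g(t)}^{t}u_{k}\bigr)$ (set to zero when $t-g(t)<s$), which stays below $U^{\ast}$ by induction, and check that $\exp\bigl(-\int_{s}^{t}u\bigr)$ solves the homogeneous initial-value problem, hence equals $X(t,s)$ by uniqueness. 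With either repair, your proof goes through.
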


\section{ES with Large Distinct Delays\label{sec3}}

\subsection{Multi-variable static map}

Consider multi-variable static maps with measurement delay given by:
\begin{equation}
\left.
\begin{array}
[c]{l}%
y(t)=Q(\theta(t-D_{\mathrm{out}}(t)))\\
\text{ \  \  \  \ }=Q^{\ast}+\frac{1}{2}[\theta(t-D_{\mathrm{out}}(t))-\theta
^{\ast}]^{\mathrm{T}}\\
\text{ \  \  \  \  \  \  \ }\times H[\theta(t-D_{\mathrm{out}}(t))-\theta^{\ast}],
\end{array}
\right.  \label{eq1}
\end{equation}
where $y(t)\in \mathbf{R}$ is the measurable output, $\theta(t)\in
\mathbf{R}^{n}$ is the vector input, $Q^{\ast}\in \mathbf{R}$ and $\theta
^{\ast}\in \mathbf{R}^{n}$ are constants, $H=H^{\mathrm{T}}$ is the Hessian
matrix which is either positive definite or negative definite, and
$D_{\mathrm{out}}(t)=D+\Delta D(t).$ Here $D>0$ is a large and known constant
and $\Delta D(t)$ is a small time-varying uncertainty satisfying $\left \vert
\Delta D(t)\right \vert \leq \rho$, where $\rho \geq0$ is a small known constant.
The fast-varying (without any constraints on the delay derivative) delay
uncertainty may appear due to sampling of the measurement data or
network-based measurements (see \cite{fri14book}, Chapter 7). Moreover, we
also consider the ES in the presence of distinct known constant input delays
$D_{i}^{\mathrm{in}}>0$ $(i=1,\ldots,n)$.\ For future use, we denote for
$i=1,\ldots,n\mathbf{,}$
\begin{equation}
\left.
\begin{array}
[c]{l}%
D_{i}(t)=D_{i}^{\mathrm{in}}+D_{\mathrm{out}}(t),\text{ }\bar{D}_{i}%
=D_{i}^{\mathrm{in}}+D,\\
D_{i\mathrm{M}}=D_{i}^{\mathrm{in}}+D+\rho,\text{ }D_{\text{\textrm{M}}}%
=\max_{i=1,\ldots,n}D_{i\mathrm{M}}.
\end{array}
\right.  \label{eq40a}
\end{equation}
We assume that the nominal values of delays $\bar{D}_{i}$ are commensurable:

\textbf{A1.} $\bar{D}_{i}/\bar{D}_{j}$ $(\forall i,j=1,\ldots,n)$ are
rational, meaning that for some $m_{i}\in \mathbf{N}$ $(i=1,\ldots,n)$ the
following holds:
\begin{equation}
m_{i}\bar{D}_{j}=m_{j}\bar{D}_{i},\text{ }\forall i,j=1,\ldots,n\mathbf{.}
\label{eq61a}
\end{equation}

In the design, we will choose the corresponding $m_{i}$ $(i=1,\ldots,n)$ as
small as possible.

Without loss of generality, we assume that the quadratic map (\ref{eq1}) has a
minimum value $y(t)=Q^{\ast}$ at $\theta=\theta^{\ast}$, and then $H>0$. In
order to derive efficient quantitative conditions, we further assume that:

\textbf{A2. }The extremum point $\theta^{\ast}=[\theta_{1}^{\ast}%
,\ldots,\theta_{n}^{\ast}]^{\mathrm{T}}$ to be sought is uncertain from some
known box $\theta_{i}^{\ast}\in \lbrack \underline{\theta}_{i}^{\ast}%
,\bar{\theta}_{i}^{\ast}]$ with $\left \vert \bar{\theta}_{i}^{\ast}%
-\underline{\theta}_{i}^{\ast}\right \vert =\sigma_{0i},$ $i=1,\ldots,n.$

\textbf{A3.} The extremum value $Q^{\ast}$ is unknown, but it is subject to
$\left \vert Q^{\ast}\right \vert \leq Q_{\mathrm{M}}^{\ast}$ with
$Q_{\mathrm{M}}^{\ast}$ being known.

\textbf{A4.} The Hessian $H$ is uncertain and subject to $H=\bar{H}+\Delta H$
with $\bar{H}>0$ being known nominal part of $H$ and $\left \Vert \Delta
H\right \Vert \leq \kappa.$ Here $\kappa \geq0$ is a known scalar.

\begin{remark}
In classical ES, the Hessian $H$, the extremum value $Q^{\ast}$ and the
extremum point $\theta^{\ast}$ in (\ref{eq1}) are assumed to be unknown, where
tuning parameters may be found from simulations only. Here we study a "grey
box" model with Assumptions \textbf{A2-A4} and provide a quantitative analysis.
There is a tradeoff between the quantitative analysis with the plant
information and the qualitative analysis without the model knowledge.
\end{remark}

Since $\bar{H}>0$ is known, we can find an orthogonal matrix $U\in
\mathbf{R}^{n\times n}$ (obviously, $\left \Vert U\right \Vert =1$) such that
\begin{equation}
\left.  U\bar{H}U^{\mathrm{T}}=\mathrm{diag}\{ \bar{h}_{1},\ldots,\bar{h}%
_{n}\}>0.\right.  \label{eq32a}
\end{equation}
Let
\begin{equation}
\left.
\begin{array}
[c]{l}%
\mathcal{H}=UHU^{\mathrm{T}},\text{ }\mathcal{\bar{H}}=U\bar{H}U^{\mathrm{T}%
},\text{ }\Delta \mathcal{H}=U\Delta HU^{\mathrm{T}},\\
\vartheta(t)=U\theta(t),\text{ }\vartheta^{\ast}=U\theta^{\ast}.
\end{array}
\right.  \label{eq32}
\end{equation}
Then the cost function (\ref{eq1}) can be rewritten as
\begin{equation}
\left.
\begin{array}
[c]{l}%
y(t)=Q(U^{\mathrm{T}}\vartheta(t-D_{\mathrm{out}}(t)))\\
\text{ \  \  \  \ }=Q^{\ast}+\frac{1}{2}[\vartheta(t-D_{\mathrm{out}%
}(t))-\vartheta^{\ast}]^{\mathrm{T}}\\
\text{ \  \  \  \  \  \  \ }\times \mathcal{H}[\vartheta(t-D_{\mathrm{out}%
}(t))-\vartheta^{\ast}]
\end{array}
\right.  \label{eq26}
\end{equation}
with the diagonal nominal part $\mathcal{\bar{H}}$ of the Hessian
$\mathcal{H}$:
\begin{equation}
\mathcal{H}=\mathcal{\bar{H}}+\Delta \mathcal{H},\text{ }\mathcal{\bar{H}%
}=\mathrm{diag}\{ \bar{h}_{1},\bar{h}_{2},\ldots,\bar{h}_{n}\}. \label{eq11}
\end{equation}
Now define the real-time estimates $\hat{\theta}(t)$ and $\hat{\vartheta}(t)$
of $\theta^{\ast}$ and $\vartheta^{\ast}$, respectively, with the estimation
errors:
\begin{equation}
\left.  \tilde{\theta}(t)=\hat{\theta}(t)-\theta^{\ast},\text{ }%
\tilde{\vartheta}(t)=\hat{\vartheta}(t)-\vartheta^{\ast}.\right.
\label{eq55b}
\end{equation}
Design $\hat{\theta}(t)=U^{\mathrm{T}}\hat{\vartheta}(t),$ then from
(\ref{eq32}) and (\ref{eq55b}), we have $\tilde{\theta}(t)=U^{\mathrm{T}%
}\tilde{\vartheta}(t)$, which implies $|\tilde{\theta}(t)|=|\tilde{\vartheta
}(t)|.$ Therefore, it is sufficient to find bounds on $\tilde{\vartheta}(t)$.

In the ES, we use the dither signals $S(t)$ and $M(t)$ as:
\begin{equation}
\left.
\begin{array}
[c]{l}%
S(t)=[S_{1}(t),\ldots,S_{n}(t)]^{\mathrm{T}},S_{i}(t)=a_{i}\sin(\omega
_{i}t),\\
M(t)=[M_{1}(t),\ldots,M_{n}(t)]^{\mathrm{T}},M_{i}(t)=\frac{2}{a_{i}}%
\sin(\omega_{i}t),
\end{array}
\right.  \label{eq55}%
\end{equation}
where the frequencies $\omega_{i}\neq \omega_{j},i\neq j$ are non-zero,
$\omega_{i}/\omega_{j}$ are rational and $a_{i}$ are non-zero real numbers. Choose the
adaptation gain
\begin{equation}
\left.  K=\mathrm{diag}\{k_{1},\ldots,k_{n}\},\text{ }k_{i}<0\right.  \label{eq40}
\end{equation}
such that $K\mathcal{\bar{H}}$ is Hurwitz. Let the control law be in the form:
\begin{equation}
\left.
\begin{array}
[c]{l}%
u_{i}(t)=k_{i}M_{i}(t+D_{i}^{\mathrm{in}})y(t),\\
\vartheta_{i}(t)=\hat{\vartheta}_{i}(t)+S_{i}(t),\text{ }t\geq
D_{\text{\textrm{M}}}.
\end{array}
\right.  \label{eq41a}%
\end{equation}
Then the gradient-based classical ES algorithm in the presence of distinct
input delays is governed by ($i=1,\ldots
,n$)
\begin{equation}
\left.  \dot{\hat{\vartheta}}_{i}(t)=\left \{
\begin{array}
[c]{lc}%
0, & t\in \lbrack0,D_{\text{\textrm{M}}}),\\
u_{i}(t-D_{i}^{\mathrm{in}}), & t\in \lbrack D_{\text{\textrm{M}}},\infty),
\end{array}
\right.  \right.  \label{eq67}
\end{equation}
Control law (\ref{eq67}) means that we wait and start all control actions
$u_{i}(t-D_{i}^{\mathrm{in}})$ $(i=1,\ldots,n)$ at the same time
$t=D_{\text{\textrm{M}}}$, which simplifies the stability analysis.

By (\ref{eq26}), (\ref{eq55b}) and (\ref{eq55})-(\ref{eq67}), the estimation
error is governed by ($i=1,\ldots,n$)
\begin{equation}
\left.
\begin{array}
[c]{l}%
\dot{\tilde{\vartheta}}_{i}(t)=k_{i}M_{i}(t)Q^{\ast}+\frac{1}{2}k_{i}%
M_{i}(t)\\
\text{ \  \ }\times \lbrack S(t-D_{i}(t))+\tilde{\vartheta}(t-D_{i}%
(t))]^{\mathrm{T}}\mathcal{H}\\
\text{ \  \ }\times \lbrack S(t-D_{i}(t))+\tilde{\vartheta}(t-D_{i}(t))],\text{
}t\geq D_{\text{\textrm{M}}}\mathbf{,}\\
\tilde{\vartheta}_{i}(t)=\tilde{\vartheta}_{i}(D_{\text{\textrm{M}}}),\text{
}t\in \lbrack0,D_{\text{\textrm{M}}}].
\end{array}
\right.  \label{eq35}
\end{equation}

\begin{remark}
\label{remark1}It is seen from (\ref{eq35}) that in each subsystem, the delay
$D_{i}(t)$ keeps the same value for the whole state $\tilde{\vartheta}$, which
is different from \cite{mk21auto,okt17tac} with a simplified form
$\tilde{\vartheta}\left(  t-D\right)  :=[\tilde{\vartheta}_{1}\left(
t-D_{1}\right)  ,\ldots,\tilde{\vartheta}_{n}\left(  t-D_{n}\right)
]^{\mathrm{T}}.$ The latter form corresponds to an abstract map (\ref{eq26})
with $\vartheta(t-D_{\mathrm{out}%
}(t))$ changed by $\vartheta(t-D_{\mathrm{out}%
}(t))=[\vartheta_{1}(t-D_{1}%
),\ldots,\vartheta_{n}(t-D_{n})]^{\mathrm{T}}$ studied in these papers, but
not to the case of distinct input delays in the ES algorithm. Note that in
this simplified case, one do not need assumption \textbf{A4} on commensurable
delays $\bar{D}_{i}$ as well as on the special choice of dither frequencies
(as defined by (\ref{eq66a}) of the following Lemma \ref{lemma2}). The
time-varying delays case considered here is also more general than the
constant delay case in \cite{mk21auto,okt17tac}.
\end{remark}

\begin{remark}
Denote $U=[b_{ij}]_{n\times n}$. Since $\theta(t)=U^{\mathrm{T}}\vartheta(t)$
and $\hat{\theta}(t)=U^{\mathrm{T}}\hat{\vartheta}(t),$ then from
(\ref{eq41a}) and (\ref{eq67}) it is easy to present the gradient-based
classical ES algorithm for the original ES problem as follows ($i=1,\ldots
,n$):
\[
\left.  \dot{\hat{\theta}}_{i}(t)=\left \{
\begin{array}
[c]{lc}%
0, & t\in \lbrack0,D_{\text{\textrm{M}}}),\\%
%TCIMACRO{\tsum \limits_{j=1}^{n}}%
%BeginExpansion
{\textstyle \sum \limits_{j=1}^{n}}
%EndExpansion
b_{ji}u_{j}(t-D_{j}^{\mathrm{in}}), & t\in \lbrack D_{\text{\textrm{M}}}%
,\infty),
\end{array}
\right.  \right.
\]
with
\[
\left.
\begin{array}
[c]{l}%
u_{i}(t)=k_{i}M_{i}(t+D_{i}^{\mathrm{in}})y(t),\\
\theta_{i}(t)=\hat{\theta}_{i}(t)+%
%TCIMACRO{\tsum \limits_{j=1}^{n}}%
%BeginExpansion
{\textstyle \sum \limits_{j=1}^{n}}
%EndExpansion
b_{ji}S_{j}(t),\text{ }t\geq D_{\text{\textrm{M}}}.
\end{array}
\right.
\]
By using this algorithm and noting that $\tilde{\theta}(t)=U^{\mathrm{T}}%
\tilde{\vartheta}(t),$ we can obtain part (ii) in Theorem 1 below. The
corresponding analysis is also suitable for the sampled-data ES case in Section 4.
\end{remark}

To choose appropriate dither frequencies, we will employ the following lemma,
which is proved in Appendix:

\begin{lemma}
\label{lemma2}Under \textbf{A1, }consider the following positive numbers
(frequencies):
\begin{equation}
\left.  \omega_{i}=\frac{q\cdot2\pi im_{j}}{\bar{D}_{j}},\text{ }%
q\in \mathbf{N},\text{ }\forall i,j=1,\ldots,n\mathbf{.}\right.  \label{eq66a}
\end{equation}
Then the following relations hold:
\begin{equation}
\left.  \sin(\omega_{i}(t-\bar{D}_{j}))=\sin(\omega_{i}t),\text{ }\forall
i,j=1,\ldots,n\mathbf{.}\right.  \label{eq66}%
\end{equation}
\end{lemma}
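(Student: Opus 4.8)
The plan is to reduce the claimed identity to the elementary $2\pi$-periodicity of the sine function, with assumption \textbf{A1} entering precisely to guarantee that the proposed frequencies are well defined.

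First I would verify that the formula (\ref{eq66a}) actually prescribes a single number $\omega_{i}$ rather than an over-determined system: although its right-hand side formally contains the free index $j$, assumption \textbf{A1} removes this dependence. Rewriting the commensurability relation (\ref{eq61a}) as $m_{i}/\bar{D}_{i}=m_{j}/\bar{D}_{j}$ shows that the ratio $m_{j}/\bar{D}_{j}$ takes one common value for all $j=1,\ldots,n$; hence the right-hand side of (\ref{eq66a}) is independent of $j$, and $\omega_{i}$ is a well-defined positive number depending only on $i$ and the free parameter $q$.

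Next I would compute the phase shift $\omega_{i}\bar{D}_{j}$ directly from the definition, obtaining
\[
\omega_{i}\bar{D}_{j}=\frac{q\cdot 2\pi i m_{j}}{\bar{D}_{j}}\,\bar{D}_{j}=2\pi\,(qim_{j}).
\]
Since $q,i,m_{j}\in\mathbf{N}$, the quantity $qim_{j}$ is a positive integer, so $\omega_{i}\bar{D}_{j}$ is an integer multiple of $2\pi$. Finally I would conclude by periodicity: writing $\sin(\omega_{i}(t-\bar{D}_{j}))=\sin(\omega_{i}t-\omega_{i}\bar{D}_{j})$ and invoking $\sin(\alpha-2\pi k)=\sin\alpha$ for integer $k$ yields (\ref{eq66}) at once, for every $i,j=1,\ldots,n$ and every $t$.

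The computation in the last two steps is routine; the only substantive point is the first step. The apparent dependence on $j$ in (\ref{eq66a}) is exactly what commensurability \textbf{A1} eliminates, and without it the prescribed frequencies would not even be consistently defined, let alone satisfy the required shift-invariance. I therefore expect that observation, rather than any trigonometric manipulation, to be the crux of the argument.
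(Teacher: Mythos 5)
Your proposal is correct and follows essentially the same route as the paper's proof: both arguments use \textbf{A1} to show that the definition (\ref{eq66a}) is consistent (independent of the index $j$), then compute $\omega_{i}\bar{D}_{j}=2\pi\cdot qim_{j}$ and invoke the $2\pi$-periodicity of the sine. The paper phrases the consistency step constructively (building all $\omega_{i}$ from $\omega_{1}$ via $i\,\omega_{j}=j\,\omega_{i}$ and then applying (\ref{eq61a})), whereas you phrase it as well-definedness of the formula, but this is a cosmetic difference, not a different proof.
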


Based on Lemma \ref{lemma2}, we assume that:

\textbf{A5.} The frequencies are chosen according to (\ref{eq66a}) with some
$q\in \mathbf{N}$.

Due to (\ref{eq66a}), the frequencies $\omega_{i}=\omega_{i,q}$ $(i=1,\ldots
,n)$ can be always chosen larger than any $\omega^{\ast}>0$. Using Lemma
\ref{lemma2}, we have $\sin(\omega_{i}(t-D_{j}(t)))=\sin(\omega_{i}(t-\Delta
D(t)))$ ($i,j=1,\ldots,n$\textbf{),} which implies that $S(t-D_{i}%
(t))=S(t-\Delta D(t))$ ($i=1,\ldots,n$\textbf{)}. Then system (\ref{eq35}) can
be rewritten as
\begin{equation}
\left.
\begin{array}
[c]{l}%
\dot{\tilde{\vartheta}}_{i}(t)=k_{i}M_{i}(t)Q^{\ast}+\frac{1}{2}k_{i}%
M_{i}(t)\\
\text{ \  \ }\times \lbrack S(t-\Delta D(t))+\tilde{\vartheta}(t-D_{i}%
(t))]^{\mathrm{T}}\mathcal{H}\\
\text{ \  \ }\times \lbrack S(t-\Delta D(t))+\tilde{\vartheta}(t-D_{i}%
(t))],\text{ }t\geq D_{\text{\textrm{M}}},\\
\tilde{\vartheta}_{i}(t)=\tilde{\vartheta}_{i}(D_{\text{\textrm{M}}}),\text{
}t\in \lbrack0,D_{\text{\textrm{M}}}].
\end{array}
\right.  \label{eq35b}%
\end{equation}
Taking into account $S(t-\Delta D(t))=S(t)-%
%TCIMACRO{\tint \nolimits_{t-\Delta D(t)}^{t}}%
%BeginExpansion
{\textstyle \int \nolimits_{t-\Delta D(t)}^{t}}
%EndExpansion
\dot{S}(s)\mathrm{d}s,$ system (\ref{eq35b}) can be further expressed as
\begin{equation}
\left.
\begin{array}
[c]{l}%
\dot{\tilde{\vartheta}}_{i}(t)=k_{i}M_{i}(t)Q^{\ast}+\frac{1}{2}k_{i}%
M_{i}(t)S^{\mathrm{T}}(t)\mathcal{H}S(t)\\
+\frac{1}{2}k_{i}M_{i}(t)\tilde{\vartheta}^{\mathrm{T}}(t-D_{i}(t))\mathcal{H}%
\tilde{\vartheta}(t-D_{i}(t))\\
+k_{i}M_{i}(t)S^{\mathrm{T}}(t)\mathcal{H}\tilde{\vartheta}(t-D_{i}%
(t))+w_{i}(t),\text{ }t\geq D_{\text{\textrm{M}}},\\
\tilde{\vartheta}_{i}(t)=\tilde{\vartheta}_{i}(D_{\text{\textrm{M}}}),\text{
}t\in \lbrack0,D_{\text{\textrm{M}}}]
\end{array}
\right.  \label{eq112}
\end{equation}
with
\begin{equation}
\left.
\begin{array}
[c]{l}%
w_{i}(t)=-k_{i}M_{i}(t)S^{\mathrm{T}}(t)\mathcal{H}%
%TCIMACRO{\tint \nolimits_{t-\Delta D(t)}^{t}}%
%BeginExpansion
{\textstyle \int \nolimits_{t-\Delta D(t)}^{t}}
%EndExpansion
\dot{S}(s)\mathrm{d}s\\
+\frac{1}{2}k_{i}M_{i}(t)\left(
%TCIMACRO{\tint \nolimits_{t-\Delta D(t)}^{t}}%
%BeginExpansion
{\textstyle \int \nolimits_{t-\Delta D(t)}^{t}}
%EndExpansion
\dot{S}(s)\mathrm{d}s\right)  ^{\mathrm{T}}\mathcal{H}\left(
%TCIMACRO{\tint \nolimits_{t-\Delta D(t)}^{t}}%
%BeginExpansion
{\textstyle \int \nolimits_{t-\Delta D(t)}^{t}}
%EndExpansion
\dot{S}(s)\mathrm{d}s\right)  \\
-k_{i}M_{i}(t)\left(
%TCIMACRO{\tint \nolimits_{t-\Delta D(t)}^{t}}%
%BeginExpansion
{\textstyle \int \nolimits_{t-\Delta D(t)}^{t}}
%EndExpansion
\dot{S}(s)\mathrm{d}s\right)  ^{\mathrm{T}}\mathcal{H}\tilde{\vartheta
}(t-D_{i}(t)).
\end{array}
\right.  \label{eq33a}
\end{equation}
Due to
\begin{equation}
\left.  \frac{\bar{D}_{1}}{qm_{1}}=\cdots=\frac{\bar{D}_{n}}{qm_{n}%
}:=\varepsilon,\text{ }q\in \mathbf{N},\right.  \label{eq34}
\end{equation}
from (\ref{eq66a}) we have
\begin{equation}
\left.  \omega_{i}=\frac{2\pi i}{\varepsilon},\text{ }i=1,\ldots
,n\mathbf{.}\right.  \label{eq34a}
\end{equation}

For the stability analysis of the ES control system (\ref{eq112}), inspired by
\cite{fz20auto,zf22auto}, we first apply the time-delay approach to averaging
of (\ref{eq112}). Integrating from $t-\varepsilon$ to $t$ and dividing by
$\varepsilon$ on both sides of equation (\ref{eq112}), we get
\begin{equation}
\left.
\begin{array}
[c]{l}%
\frac{1}{\varepsilon}%
%TCIMACRO{\tint \nolimits_{t-\varepsilon}^{t}}%
%BeginExpansion
{\textstyle \int \nolimits_{t-\varepsilon}^{t}}
%EndExpansion
\dot{\tilde{\vartheta}}_{i}(\tau)\mathrm{d}\tau=\frac{1}{\varepsilon}%
%TCIMACRO{\tint \nolimits_{t-\varepsilon}^{t}}%
%BeginExpansion
{\textstyle \int \nolimits_{t-\varepsilon}^{t}}
%EndExpansion
k_{i}M_{i}(\tau)Q^{\ast}\mathrm{d}\tau \\
+\frac{1}{2\varepsilon}%
%TCIMACRO{\tint \nolimits_{t-\varepsilon}^{t}}%
%BeginExpansion
{\textstyle \int \nolimits_{t-\varepsilon}^{t}}
%EndExpansion
k_{i}M_{i}(\tau)S^{\mathrm{T}}(\tau)\mathcal{H}S(\tau)\mathrm{d}\tau \\
+\frac{1}{2\varepsilon}%
%TCIMACRO{\tint \nolimits_{t-\varepsilon}^{t}}%
%BeginExpansion
{\textstyle \int \nolimits_{t-\varepsilon}^{t}}
%EndExpansion
k_{i}M_{i}(\tau)\tilde{\vartheta}^{\mathrm{T}}(\tau-D_{i}(\tau))\mathcal{H}%
\tilde{\vartheta}(\tau-D_{i}(\tau))\mathrm{d}\tau \\
+\frac{1}{\varepsilon}%
%TCIMACRO{\tint \nolimits_{t-\varepsilon}^{t}}%
%BeginExpansion
{\textstyle \int \nolimits_{t-\varepsilon}^{t}}
%EndExpansion
k_{i}M_{i}(\tau)S^{\mathrm{T}}(\tau)\mathcal{H}\tilde{\vartheta}(\tau
-D_{i}(\tau))\mathrm{d}\tau \\
+\frac{1}{\varepsilon}%
%TCIMACRO{\tint \nolimits_{t-\varepsilon}^{t}}%
%BeginExpansion
{\textstyle \int \nolimits_{t-\varepsilon}^{t}}
%EndExpansion
w_{i}(\tau)\mathrm{d}\tau,\text{ }t\geq D_{\text{\textrm{M}}}+\varepsilon
,\text{ }i=1,\ldots,n.
\end{array}
\right.  \label{eq12}
\end{equation}
Note for $\forall i,j,k=1,\ldots,n\mathbf{,}$ there hold
\[
\left.
\begin{array}
[c]{l}%
\int \nolimits_{t-\varepsilon}^{t}\sin \left(  \frac{2\pi i}{\varepsilon}%
\tau \right)  \mathrm{d}\tau=0,\\
\int \nolimits_{t-\varepsilon}^{t}\sin \left(  \frac{2\pi i}{\varepsilon}%
\tau \right)  \sin \left(  \frac{2\pi j}{\varepsilon}\tau \right)  \sin \left(
\frac{2\pi k}{\varepsilon}\tau \right)  \mathrm{d}\tau=0,
\end{array}
\right.
\]
then for $i=1,\ldots,n\mathbf{,}$
\begin{equation}
\left.
\begin{array}
[c]{l}%
\frac{1}{\varepsilon}%
%TCIMACRO{\tint \nolimits_{t-\varepsilon}^{t}}%
%BeginExpansion
{\textstyle \int \nolimits_{t-\varepsilon}^{t}}
%EndExpansion
k_{i}M_{i}(\tau)Q^{\ast}\mathrm{d}\tau=0,\\
\frac{1}{2\varepsilon}%
%TCIMACRO{\tint \nolimits_{t-\varepsilon}^{t}}%
%BeginExpansion
{\textstyle \int \nolimits_{t-\varepsilon}^{t}}
%EndExpansion
k_{i}M_{i}(\tau)S^{\mathrm{T}}(\tau)\mathcal{H}S(\tau)\mathrm{d}%
\tau=0\mathbf{.}%
\end{array}
\right.  \label{eq48}
\end{equation}
For the third term on the right-hand side of (\ref{eq12}), we have
\begin{equation}
\left.
\begin{array}
[c]{l}%
\text{ \  \ }\frac{1}{2\varepsilon}%
%TCIMACRO{\tint \nolimits_{t-\varepsilon}^{t}}%
%BeginExpansion
{\textstyle \int \nolimits_{t-\varepsilon}^{t}}
%EndExpansion
k_{i}M_{i}(\tau)\tilde{\vartheta}^{\mathrm{T}}(\tau-D_{i}(\tau))\mathcal{H}%
\tilde{\vartheta}(\tau-D_{i}(\tau))\mathrm{d}\tau \\
=\frac{1}{2\varepsilon}%
%TCIMACRO{\tint \nolimits_{t-\varepsilon}^{t}}%
%BeginExpansion
{\textstyle \int \nolimits_{t-\varepsilon}^{t}}
%EndExpansion
k_{i}M_{i}(\tau)\mathrm{d}\tau \tilde{\vartheta}^{\mathrm{T}}(t-D_{i}%
(t))\mathcal{H}\tilde{\vartheta}(t-D_{i}(t))\\
\text{ \ }-\frac{1}{2\varepsilon}%
%TCIMACRO{\tint \nolimits_{t-\varepsilon}^{t}}%
%BeginExpansion
{\textstyle \int \nolimits_{t-\varepsilon}^{t}}
%EndExpansion
k_{i}M_{i}(\tau)[\tilde{\vartheta}^{\mathrm{T}}(t-D_{i}(t))\mathcal{H}%
\tilde{\vartheta}(t-D_{i}(t))\\
\text{ \ }-\tilde{\vartheta}^{\mathrm{T}}(\tau-D_{i}(\tau))\mathcal{H}%
\tilde{\vartheta}(\tau-D_{i}(\tau))]\mathrm{d}\tau \\
=-\frac{1}{\varepsilon}%
%TCIMACRO{\tint \nolimits_{t-\varepsilon}^{t}}%
%BeginExpansion
{\textstyle \int \nolimits_{t-\varepsilon}^{t}}
%EndExpansion%
%TCIMACRO{\tint \nolimits_{\tau-D_{i}(\tau)}^{t-D_{i}(t)}}%
%BeginExpansion
{\textstyle \int \nolimits_{\tau-D_{i}(\tau)}^{t-D_{i}(t)}}
%EndExpansion
k_{i}M_{i}(\tau)\tilde{\vartheta}^{\mathrm{T}}(s)\mathcal{H}\dot
{\tilde{\vartheta}}(s)\mathrm{d}s\mathrm{d}\tau,
\end{array}
\right.  \label{eq50}
\end{equation}
where we have used $%
%TCIMACRO{\tint \nolimits_{t-\varepsilon}^{t}}%
%BeginExpansion
{\textstyle \int \nolimits_{t-\varepsilon}^{t}}
%EndExpansion
k_{i}M_{i}(\tau)\mathrm{d}\tau=0.$ For the fourth term on the right-hand side
of (\ref{eq12}), we obtain
\begin{equation}
\left.
\begin{array}
[c]{l}%
\text{ \  \ }\frac{1}{\varepsilon}%
%TCIMACRO{\tint \nolimits_{t-\varepsilon}^{t}}%
%BeginExpansion
{\textstyle \int \nolimits_{t-\varepsilon}^{t}}
%EndExpansion
k_{i}M_{i}(\tau)S^{\mathrm{T}}(\tau)\mathcal{H}\tilde{\vartheta}(\tau
-D_{i}(\tau))\mathrm{d}\tau \\
=\frac{1}{\varepsilon}k_{i}%
%TCIMACRO{\tint \nolimits_{t-\varepsilon}^{t}}%
%BeginExpansion
{\textstyle \int \nolimits_{t-\varepsilon}^{t}}
%EndExpansion
M_{i}(\tau)S^{\mathrm{T}}(\tau)\mathrm{d}\tau \cdot \mathcal{H}\tilde{\vartheta
}(t-D_{i}(t))\\
\text{ \ }-\frac{1}{\varepsilon}%
%TCIMACRO{\tint \nolimits_{t-\varepsilon}^{t}}%
%BeginExpansion
{\textstyle \int \nolimits_{t-\varepsilon}^{t}}
%EndExpansion
k_{i}M_{i}(\tau)S^{\mathrm{T}}(\tau)\mathcal{H}\\
\text{ \ }\times \lbrack \tilde{\vartheta}(t-D_{i}(t))-\tilde{\vartheta}%
(\tau-D_{i}(\tau))]\mathrm{d}\tau \\
=k_{i}\bar{h}_{i}\tilde{\vartheta}_{i}(t-D_{i}(t))+C(k_{i})\Delta
\mathcal{H}\tilde{\vartheta}(t-D_{i}(t))\\
\text{ \ }-\frac{1}{\varepsilon}%
%TCIMACRO{\tint \nolimits_{t-\varepsilon}^{t}}%
%BeginExpansion
{\textstyle \int \nolimits_{t-\varepsilon}^{t}}
%EndExpansion%
%TCIMACRO{\tint \nolimits_{\tau-D_{i}(\tau)}^{t-D_{i}(t)}}%
%BeginExpansion
{\textstyle \int \nolimits_{\tau-D_{i}(\tau)}^{t-D_{i}(t)}}
%EndExpansion
k_{i}M_{i}(\tau)S^{\mathrm{T}}(\tau)\mathcal{H}\dot{\tilde{\vartheta}%
}(s)\mathrm{d}s\mathrm{d}\tau
\end{array}
\right.  \label{eq51}
\end{equation}
with $C(k_{i})=\left[  0,\ldots,0,k_{i},0,\ldots,0\right]  ,$ where we have
noted that $\mathcal{H}=\mathcal{\bar{H}}+\Delta \mathcal{H}$ with
$\mathcal{\bar{H}=}\mathrm{diag}\{ \bar{h}_{1},\ldots,\bar{h}_{n}\}$ and
\[
\left.  \int \nolimits_{t-\varepsilon}^{t}\frac{2a_{i}}{a_{j}}\sin \left(
\frac{2\pi i}{\varepsilon}\tau \right)  \sin \left(  \frac{2\pi j}{\varepsilon
}\tau \right)  \mathrm{d}\tau=\left \{
\begin{array}
[c]{cc}%
\varepsilon, & i=j,\\
0, & i\neq j.
\end{array}
\right.  \right.
\]
Let
\begin{equation}
\left.
\begin{array}
[c]{l}%
G_{i}(t)=\frac{1}{\varepsilon}\int \nolimits_{t-\varepsilon}^{t}(\tau
-t+\varepsilon)k_{i}M_{i}(\tau)\\
\text{ \  \  \  \  \  \  \  \  \  \ }\times \bar{Q}(\tilde{\vartheta}(\tau-D_{i}%
(\tau)))\mathrm{d}\tau,\text{ }t\geq D_{\mathrm{M}}+\varepsilon
\end{array}
\right.  \label{eq20}%
\end{equation}
with
\begin{equation}
\left.
\begin{array}
[c]{l}%
\bar{Q}(\tilde{\vartheta}(\tau-D_{i}(\tau)))=Q^{\ast}+\frac{1}{2}%
[\tilde{\vartheta}(\tau-D_{i}(\tau))+S(\tau)]^{\mathrm{T}}\\
\text{ \  \  \  \  \  \  \  \  \  \  \  \  \  \  \  \  \  \  \  \  \  \  \  \  \  \ }\times
\mathcal{H}[\tilde{\vartheta}(\tau-D_{i}(\tau))+S(\tau)].
\end{array}
\right.  \label{eq20c}%
\end{equation}
Then similar to \cite{zhf22auto}, we can present
\begin{equation}
\left.
\begin{array}
[c]{c}%
\frac{1}{\varepsilon}%
%TCIMACRO{\tint \nolimits_{t-\varepsilon}^{t}}%
%BeginExpansion
{\textstyle \int \nolimits_{t-\varepsilon}^{t}}
%EndExpansion
\dot{\tilde{\vartheta}}_{i}(\tau)\mathrm{d}\tau=\frac{\mathrm{d}}{\mathrm{d}%
t}[\tilde{\vartheta}_{i}(t)-G_{i}(t)]\\
-w_{i}(t)+\frac{1}{\varepsilon}\int \nolimits_{t-\varepsilon}^{t}w_{i}%
(\tau)\mathrm{d}\tau.
\end{array}
\right.  \label{eq52}
\end{equation}
Now we set
\begin{equation}
\left.
\begin{array}
[c]{l}%
Y_{1i}(t)=\frac{1}{\varepsilon}%
%TCIMACRO{\tint \nolimits_{t-\varepsilon}^{t}}%
%BeginExpansion
{\textstyle \int \nolimits_{t-\varepsilon}^{t}}
%EndExpansion%
%TCIMACRO{\tint \nolimits_{\tau-D_{i}(\tau)}^{t-D_{i}(t)}}%
%BeginExpansion
{\textstyle \int \nolimits_{\tau-D_{i}(\tau)}^{t-D_{i}(t)}}
%EndExpansion
k_{i}M_{i}(\tau)\tilde{\vartheta}^{\mathrm{T}}(s)\mathcal{H}\dot
{\tilde{\vartheta}}(s)\mathrm{d}s\mathrm{d}\tau,\\
Y_{2i}(t)=\frac{1}{\varepsilon}%
%TCIMACRO{\tint \nolimits_{t-\varepsilon}^{t}}%
%BeginExpansion
{\textstyle \int \nolimits_{t-\varepsilon}^{t}}
%EndExpansion%
%TCIMACRO{\tint \nolimits_{\tau-D_{i}(\tau)}^{t-D_{i}(t)}}%
%BeginExpansion
{\textstyle \int \nolimits_{\tau-D_{i}(\tau)}^{t-D_{i}(t)}}
%EndExpansion
k_{i}M_{i}(\tau)S^{\mathrm{T}}(\tau)\mathcal{H}\dot{\tilde{\vartheta}%
}(s)\mathrm{d}s\mathrm{d}\tau,\\
Y_{3i}(t)=C(k_{i})\Delta \mathcal{H}\tilde{\vartheta}(t-D_{i}(t)),\text{ }t\geq
D_{\text{\textrm{M}}}+\varepsilon,
\end{array}
\right.  \label{eq20a}
\end{equation}
and denote
\begin{equation}
\left.
\begin{array}
[c]{l}%
z_{i}(t)=\tilde{\vartheta}_{i}(t)-G_{i}(t),\text{ }t\geq \varepsilon,\\
G_{i}(t)=0,\text{ }t\in \lbrack \varepsilon,D_{\text{\textrm{M}}}+\varepsilon).
\end{array}
\right.  \label{eq52a}
\end{equation}
Then employing (\ref{eq48})-(\ref{eq51}) and (\ref{eq52})-(\ref{eq52a}), we
finally transform system (\ref{eq12}) into
\begin{equation}
\left.
\begin{array}
[c]{l}%
\dot{z}_{i}(t)=k_{i}\bar{h}_{i}z_{i}(t-D_{i}(t))\\
\text{ \  \  \  \  \  \  \  \ }+\bar{w}_{i}(t),\text{ }t\geq D_{\text{\textrm{M}}%
}+\varepsilon,\text{ }i=1,\ldots,n,\\
\bar{w}_{i}(t)=k_{i}\bar{h}_{i}G_{i}(t-D_{i}(t))-Y_{1i}(t)\\
\text{ \  \  \  \  \  \  \  \ }-Y_{2i}(t)+Y_{3i}(t)+w_{i}(t).
\end{array}
\right.  \label{eq21}
\end{equation}
For the stability analysis, we choose
\begin{equation}
\left.  \rho=\mu \varepsilon \right.  \label{eq21a}
\end{equation}
with $\mu \geq0$ being a small constant. Then $D_{i\mathrm{M}}$ and
$D_{\text{\textrm{M}}}$ defined in (\ref{eq40a}) can be rewritten as
\begin{equation}
\left.
\begin{array}
[c]{l}%
D_{i\mathrm{M}}=D_{i\mathrm{M}}(\varepsilon)=D_{i}^{\mathrm{in}}%
+D+\mu \varepsilon,\text{ }i=1,\ldots,n\mathbf{,}\\
D_{\text{\textrm{M}}}=D_{\text{\textrm{M}}}(\varepsilon)=\max_{i=1,\ldots
,n}D_{i}^{\mathrm{in}}+D+\mu \varepsilon.
\end{array}
\right.  \label{eq21b}
\end{equation}
Note that if $\tilde{\vartheta}_{i}(t)$ and $\dot{\tilde{\vartheta}}_{i}(t)$
(and thus $z_{i}(t)$) are of the order of \textrm{O}$(1),$ then the terms
$G_{i}(t)$ and $Y_{1i}(t),$ $Y_{2i}(t)$ defined by (\ref{eq20}) and
(\ref{eq20a}) are of the order of \textrm{O}$(\varepsilon),$ the term
$Y_{3i}(t)$ defined in (\ref{eq20a}) is of the order of \textrm{O}$(\kappa)$
and the term $w_{i}(t)$ defined by (\ref{eq33a}) is of the order of
\textrm{O}$(\mu).$ Therefore, $\bar{w}_{i}(t)$ in (\ref{eq21}) is of the order
of \textrm{O}$(\max \{ \varepsilon,\mu,\kappa \}).$ Similar to our previous work
\cite{yf22tac}, we will analyze (\ref{eq21}) as linear system w.r.t. $z_{i}$
(but in the present paper it is delayed system) with delayed disturbance-like
\textrm{O}$(\max \{ \varepsilon,\mu,\kappa \})$-term $\bar{w}_{i}(t)$ that
depends on the solutions of (\ref{eq112}). The resulting bound on $|z_{i}|$
will lead to the bound on $\tilde{\vartheta}_{i}:|\tilde{\vartheta}_{i}%
|\leq|z_{i}|+|G_{i}|.$ The bound on $z_{i}$ will be found by utilizing
solution representation formula in Lemma \ref{lemma1}.

We will find $k_{i}$ $(i=1,\ldots,n)$ from the inequalities
\begin{equation}
\left.  \left \vert k_{i}\bar{h}_{i}\right \vert \bar{D}_{i}-\frac{1}%
{\mathrm{e}}<0,\text{ }i=1,\ldots,n\right.  \label{eq23a}
\end{equation}
with $\bar{D}_{i}$ defined in (\ref{eq40a}), which guarantee the exponential
stability with a decay rate
\begin{equation}
\delta_{i}=\left \vert k_{i}\bar{h}_{i}\right \vert <\frac{1}{\mathrm{e}\bar
{D}_{i}}\label{eq23b}
\end{equation}
of the averaged system with $\rho=0:$
\begin{equation}
\left.  \dot{z}_{i}(t)=k_{i}\bar{h}_{i}z_{i}(t-\bar{D}_{i}),\text{ }%
i=1,\ldots,n.\right.  \label{eq23c}
\end{equation}
To formulate our main result we will use the following notations:
\begin{equation}
\left.
\begin{array}
[c]{l}%
 \Delta_{1}=Q_{\mathrm{M}}^{\ast}+%
%TCIMACRO{\tsum \limits_{j=1}^{n}}%
%BeginExpansion
{\textstyle \sum \limits_{j=1}^{n}}
%EndExpansion
\frac{\bar{h}_{j}}{2}\left(  \bar{\sigma}_{j}+\left \vert a_{j}\right \vert
\right)  ^{2}+\frac{\kappa}{2}\left(  \sqrt{%
%TCIMACRO{\tsum \limits_{j=1}^{n}}%
%BeginExpansion
{\textstyle \sum \limits_{j=1}^{n}}
%EndExpansion
\bar{\sigma}_{j}^{2}}+\sqrt{%
%TCIMACRO{\tsum \limits_{j=1}^{n}}%
%BeginExpansion
{\textstyle \sum \limits_{j=1}^{n}}
%EndExpansion
a_{j}^{2}}\right)  ^{2},\\
\Delta_{2}=4\pi%
%TCIMACRO{\tsum \limits_{j=1}^{n}}%
%BeginExpansion
{\textstyle \sum \limits_{j=1}^{n}}
%EndExpansion
\left \vert j\bar{h}_{j}a_{j}\right \vert \left(  \left \vert a_{j}\right \vert
+\pi \mu \left \vert ja_{j}\right \vert +\bar{\sigma}_{j}\right)  \\
+4\pi \kappa \sqrt{%
%TCIMACRO{\tsum \limits_{j=1}^{n}}%
%BeginExpansion
{\textstyle \sum \limits_{j=1}^{n}}
%EndExpansion
j^{2}a_{j}^{2}}\left(  \sqrt{%
%TCIMACRO{\tsum \limits_{j=1}^{n}}%
%BeginExpansion
{\textstyle \sum \limits_{j=1}^{n}}
%EndExpansion
a_{j}^{2}}+\pi \mu \sqrt{%
%TCIMACRO{\tsum \limits_{j=1}^{n}}%
%BeginExpansion
{\textstyle \sum \limits_{j=1}^{n}}
%EndExpansion
j^{2}a_{j}^{2}}+\sqrt{%
%TCIMACRO{\tsum \limits_{j=1}^{n}}%
%BeginExpansion
{\textstyle \sum \limits_{j=1}^{n}}
%EndExpansion
\bar{\sigma}_{j}^{2}}\right)  ,\\
\bar{\Delta}_{1}=2\left(  1+4\mu \right)  \Delta_{1}\left(
%TCIMACRO{\tsum \limits_{j=1}^{n}}%
%BeginExpansion
{\textstyle \sum \limits_{j=1}^{n}}
%EndExpansion
\frac{\left \vert \bar{h}_{j}k_{j}\right \vert \bar{\sigma}_{j}}{\left \vert
a_{j}\right \vert }+\kappa \sqrt{%
%TCIMACRO{\tsum \limits_{j=1}^{n}}%
%BeginExpansion
{\textstyle \sum \limits_{j=1}^{n}}
%EndExpansion
\bar{\sigma}_{j}^{2}}\sqrt{%
%TCIMACRO{\tsum \limits_{j=1}^{n}}%
%BeginExpansion
{\textstyle \sum \limits_{j=1}^{n}}
%EndExpansion
\frac{k_{j}^{2}}{a_{j}^{2}}}\right)  ,\\
\bar{\Delta}_{2}=2\left(  1+4\mu \right)  \Delta_{1}\left(
%TCIMACRO{\tsum \limits_{j=1}^{n}}%
%BeginExpansion
{\textstyle \sum \limits_{j=1}^{n}}
%EndExpansion
\left \vert \bar{h}_{j}k_{j}\right \vert +\kappa \sqrt{%
%TCIMACRO{\tsum \limits_{j=1}^{n}}%
%BeginExpansion
{\textstyle \sum \limits_{j=1}^{n}}
%EndExpansion
\bar{\sigma}_{j}^{2}}\sqrt{%
%TCIMACRO{\tsum \limits_{j=1}^{n}}%
%BeginExpansion
{\textstyle \sum \limits_{j=1}^{n}}
%EndExpansion
\frac{k_{j}^{2}}{a_{j}^{2}}}\right)  .
\end{array}
\right.  \label{eq38}
\end{equation}

\begin{theorem}
\label{theorem1}Assume that \textbf{A1-A5} hold. Let $k_{i}$ $(i=1,\ldots,n)$
satisfy (\ref{eq23a}) and $D_{\mathrm{M}},D_{i\mathrm{M}}$ $(i=1,\ldots,n)$ be
given by (\ref{eq21b}). Given tuning parameters $a_{i}$ $(i=1,\ldots,n)$ and
$\mu,$ $\kappa \geq0$ as well as $\bar{\sigma}_{i}>\bar{\sigma}_{0i}>0$
$(i=1,\ldots,n),$ let there exists $\varepsilon^{\ast}>0$ that satisfy
\begin{equation}
\left.
\begin{array}
[c]{l}%
\Phi_{1}^{i}=\left \vert k_{i}\bar{h}_{i}\right \vert D_{i\mathrm{M}%
}(\varepsilon^{\ast})-\frac{1}{\mathrm{e}}\leq0,\text{ }i=1,\ldots
,n\mathbf{,}\\
\Phi_{2}^{i}=\mathrm{e}^{\left \vert k_{i}\bar{h}_{i}\right \vert D_{i\mathrm{M}%
}(\varepsilon^{\ast})}\left[  \bar{\sigma}_{0i}+\frac{3\varepsilon^{\ast
}\left \vert k_{i}\right \vert \Delta_{1}}{\left \vert a_{i}\right \vert }%
+\frac{W_{i}(\varepsilon^{\ast},\mu,\kappa)}{\left \vert k_{i}\bar{h}%
_{i}\right \vert }\right] \\
\text{ \  \  \  \  \  \ }+\frac{\varepsilon^{\ast}\left \vert k_{i}\right \vert
\Delta_{1}}{\left \vert a_{i}\right \vert }-\bar{\sigma}_{i}<0,\text{
}i=1,\ldots,n\mathbf{,}%
\end{array}
\right.  \label{eq24a}
\end{equation}
where
\begin{equation}
\left.
\begin{array}
[c]{l}%
W_{i}(\varepsilon,\mu,\kappa)=\frac{\varepsilon \left \vert k_{i}\right \vert
}{\left \vert a_{i}\right \vert }\left(  \left \vert k_{i}\bar{h}_{i}\right \vert
\Delta_{1}+\bar{\Delta}_{1}+\bar{\Delta}_{2}\right) \\
\text{ \  \  \  \  \  \  \  \  \  \  \  \ }+\frac{\mu \left \vert k_{i}\right \vert
}{\left \vert a_{i}\right \vert }\Delta_{2}+\kappa \left \vert k_{i}\right \vert
\sqrt{%
%TCIMACRO{\tsum \limits_{j=1}^{n}}%
%BeginExpansion
{\textstyle \sum \limits_{j=1}^{n}}
%EndExpansion
\bar{\sigma}_{j}^{2}}%
\end{array}
\right.  \label{eq24b}
\end{equation}
with $\{ \Delta_{1},\Delta_{2},\bar{\Delta}_{1},\bar{\Delta}_{2}\}$ given by
(\ref{eq38}). Then for all $\varepsilon \in(0,\varepsilon^{\ast}]$ satisfying
(\ref{eq34}), the following holds:

\textbf{(i) }Solutions of (\ref{eq112}) with $|\tilde{\vartheta}%
_{i}(D_{\mathrm{M}})|\leq \bar{\sigma}_{0i}$ satisfy the following bounds:
\begin{equation}
\left.
\begin{array}
[c]{l}%
|\tilde{\vartheta}_{i}(t)|< \vert \tilde{\vartheta}_{i}(D_{\mathrm{M}})
\vert+\frac{2\varepsilon \left \vert k_{i}\right \vert \Delta_{1}}{\left \vert
a_{i}\right \vert }<\bar{\sigma}_{i},\text{ }t\in \lbrack D_{\mathrm{M}%
},D_{\mathrm{M}}+\varepsilon],\\
|\tilde{\vartheta}_{i}(t)|<\left(  1+\left \vert k_{i}\bar{h}_{i}\right \vert
D_{i\mathrm{M}}\right)  \left[  \vert \tilde{\vartheta}_{i}(D_{\mathrm{M}})
\vert+\frac{3\varepsilon \left \vert k_{i}\right \vert \Delta_{1}}{\left \vert
a_{i}\right \vert }\right] \\
\text{ \  \  \  \  \  \  \  \  \  \  \ }+D_{i\mathrm{M}}W_{i}(\varepsilon,\mu
,\kappa)+\frac{\varepsilon \left \vert k_{i}\right \vert \Delta_{1}}{\left \vert
a_{i}\right \vert }\\
\text{ \  \  \  \  \  \  \  \ }<\bar{\sigma}_{i},\text{ }t\in \lbrack D_{\mathrm{M}%
}+\varepsilon,D_{\mathrm{M}}+D_{i\mathrm{M}}+\varepsilon],\\
|\tilde{\vartheta}_{i}(t)|<\mathrm{e}^{-\left \vert k_{i}\bar{h}_{i}\right \vert
\left(  t-D_{\mathrm{M}}-2D_{i\mathrm{M}}-\varepsilon \right)  }\left[
\vert \tilde{\vartheta}_{i}(D_{\mathrm{M}}) \vert+\frac{3\varepsilon \left \vert
k_{i}\right \vert \Delta_{1}}{\left \vert a_{i}\right \vert }\right] \\
\text{ \  \  \  \  \  \  \  \  \  \  \ }+\frac{\mathrm{e}^{\left \vert k_{i}\bar{h}%
_{i}\right \vert D_{i\mathrm{M}}}}{\left \vert k_{i}\bar{h}_{i}\right \vert
}W_{i}(\varepsilon,\mu,\kappa)+\frac{\varepsilon \left \vert k_{i}\right \vert
\Delta_{1}}{\left \vert a_{i}\right \vert }\\
\text{ \  \  \  \  \  \  \  \ }<\bar{\sigma}_{i},\text{ }t\geq D_{\mathrm{M}%
}+D_{i\mathrm{M}}+\varepsilon.
\end{array}
\right.  \label{eq65a}
\end{equation}
These solutions are exponentially attracted to the box
\begin{equation}
\left.
\begin{array}
[c]{c}%
\vert \tilde{\vartheta}_{i}(t) \vert<\Omega_{i}\triangleq \frac{\mathrm{e}%
^{\left \vert k_{i}\bar{h}_{i}\right \vert D_{i\mathrm{M}}}W_{i}(\varepsilon
,\mu,\kappa)}{\left \vert k_{i}\bar{h}_{i}\right \vert }+\frac{\varepsilon
\left \vert k_{i}\right \vert \Delta_{1}}{\left \vert a_{i}\right \vert },\\
i=1,\ldots,n
\end{array}
\right.  \label{eq74}
\end{equation}
with decay rates $\delta_{i}=\left \vert k_{i}\bar{h}_{i}\right \vert ,$ which
are independent of $\varepsilon$ and delays $D_{i}(t).$

\textbf{(ii)} Consider $\hat{\theta}(t)=U^{\mathrm{T}}\hat{\vartheta}(t)$ with
$U=[b_{ij}]_{n\times n},$ where $\hat{\vartheta}(t)$ is defined by
(\ref{eq67}). Then the estimation errors $\tilde{\theta}_{i}(t)=\hat{\theta
}_{i}(t)-\theta^{\ast}$ such that $|\tilde{\theta}_{i}(D_{\mathrm{M}}%
)|\leq \sigma_{0i}$ with $%
%TCIMACRO{\tsum \nolimits_{i=1}^{n}}%
%BeginExpansion
{\textstyle \sum \nolimits_{i=1}^{n}}
%EndExpansion
\left \vert b_{ji}\right \vert \sigma_{0i}=\bar{\sigma}_{0j}$ $(j=1,\ldots
,n\mathbf{)}$ are exponentially attracted to the box $\vert \tilde{\theta}%
_{i}(t) \vert<\sum_{j=1}^{n}\left \vert b_{ji}\right \vert \Omega_{j}$
$(i=1,\ldots,n)$\ with a decay rate $\delta=\min_{i=1,\ldots,n}\delta_{i}%
=\min_{i=1,\ldots,n}\left \vert k_{i}\bar{h}_{i}\right \vert ,$ where
$\Omega_{j}$ is defined by (\ref{eq74}).
\end{theorem}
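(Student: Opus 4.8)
The plan is to treat the transformed system (\ref{eq21}) as $n$ decoupled scalar delay equations of exactly the form (\ref{eq30}) in Lemma \ref{lemma1}, with $a=|k_i\bar h_i|>0$ (recall $k_i<0$, $\bar h_i>0$), $g(t)=D_i(t)\in[0,D_{i\mathrm M}]$, and forcing $f(t)=\bar w_i(t)$. The inequality $\Phi_1^i\le0$ in (\ref{eq24a}) is precisely the hypothesis $aL\le1/\mathrm e$ of Lemma \ref{lemma1} evaluated at $\varepsilon=\varepsilon^\ast$; since $D_{i\mathrm M}(\varepsilon)$ is nondecreasing in $\varepsilon$, it continues to hold for every $\varepsilon\in(0,\varepsilon^\ast]$. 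Lemma \ref{lemma1} then furnishes both the representation (\ref{eq77}) for $z_i$ and the delay-dependent positivity with the two-piece exponential bound on the fundamental solution $X_i(t,s)$. These are the two facts I would lean on throughout.

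The argument is a \emph{bootstrap}. I would fix the a priori hypothesis $|\tilde\vartheta_j(s)|\le\bar\sigma_j$ $(j=1,\dots,n)$ on the interval under consideration and, under it, bound every ingredient of $\bar w_i$. The cost-type quantity $\bar Q$ in (\ref{eq20c}) obeys $|\bar Q|\le\Delta_1$ with $\Delta_1$ as in (\ref{eq38}): $Q^\ast$ is estimated through \textbf{A3}, the diagonal part through $\tfrac12\sum\bar h_j(\bar\sigma_j+|a_j|)^2$, and the $\Delta\mathcal H$ part through $\tfrac{\kappa}{2}(\sqrt{\sum\bar\sigma_j^2}+\sqrt{\sum a_j^2})^2$ via \textbf{A4}. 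Since $|M_i|\le2/|a_i|$, this immediately gives $|\dot{\tilde\vartheta}_i|\le2|k_i|\Delta_1/|a_i|$ from (\ref{eq112}), and direct integration over the length-$\varepsilon$ window $[D_{\mathrm M},t]$ yields the first line of (\ref{eq65a}). The same bound, combined with the weight $|\tau-t+\varepsilon|\le\varepsilon$, shows $|G_i|\le\varepsilon|k_i|\Delta_1/|a_i|$, so $G_i=\mathrm{O}(\varepsilon)$; substituting $\dot{\tilde\vartheta}$ from (\ref{eq112}) into $Y_{1i},Y_{2i}$ and invoking the sine averaging identities already used in (\ref{eq48})--(\ref{eq51}) produces the $\bar\Delta_1,\bar\Delta_2$ contributions, $Y_{3i}$ is $\mathrm{O}(\kappa)$ via \textbf{A4}, and $w_i$ in (\ref{eq33a}) is $\mathrm{O}(\mu)$ because $\rho=\mu\varepsilon$ cancels the $1/\varepsilon$ coming from $\dot S$. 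Collecting these gives $|\bar w_i(t)|\le W_i(\varepsilon,\mu,\kappa)$ with $W_i$ as in (\ref{eq24b}).

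With these bounds I would apply (\ref{eq77}) to $z_i$ from $t_0=D_{\mathrm M}+\varepsilon$. Using $0<X_i\le1$ on $[t_0,t_0+D_{i\mathrm M}]$ controls the forcing integral and the initial-history integral on the first delay window; adding $|G_i|$ through $|\tilde\vartheta_i|\le|z_i|+|G_i|$ gives the second line of (\ref{eq65a}), where the factor $3\varepsilon|k_i|\Delta_1/|a_i|$ simply collects the $\mathrm{O}(\varepsilon)$ displacement of $\tilde\vartheta_i$ off $\tilde\vartheta_i(D_{\mathrm M})$ together with the $G_i$ contributions entering the data and the forcing. For $t\ge t_0+D_{i\mathrm M}$ the exponential branch $X_i\le\mathrm e^{-|k_i\bar h_i|(t-s-D_{i\mathrm M})}$ gives the decaying homogeneous term, and integrating this kernel against the constant bound $|\bar w_i|\le W_i$ produces the steady residual $\mathrm e^{|k_i\bar h_i|D_{i\mathrm M}}W_i/|k_i\bar h_i|$; adding $|G_i|$ yields the third line of (\ref{eq65a}) and the ultimate box $\Omega_i$ of (\ref{eq74}), with decay rate $\delta_i=|k_i\bar h_i|$ independent of $\varepsilon$ and of $D_i(t)$. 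The bootstrap closes because $\Phi_2^i<0$ in (\ref{eq24a}) is exactly the statement that each such right-hand side is strictly below $\bar\sigma_i$, so the a priori hypothesis can never be violated; a standard continuity/contradiction argument then upgrades it to all $t$. Part (ii) follows mechanically: from $\tilde\theta=U^{\mathrm T}\tilde\vartheta$ one has $|\tilde\theta_i|\le\sum_j|b_{ji}|\,|\tilde\vartheta_j|$, so the box becomes $\sum_j|b_{ji}|\Omega_j$ and the common decay rate is $\min_i|k_i\bar h_i|$.

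The main obstacle is the implicit, self-referential character of the forcing: $\bar w_i$ depends on the solution both through $\tilde\vartheta(t-D_i(t))$ and, via $Y_{1i},Y_{2i}$, through $\dot{\tilde\vartheta}$. The crux is therefore to re-express $\dot{\tilde\vartheta}$ from the dynamics (\ref{eq112}) so that every term in $\bar w_i$ is controlled purely by the a priori box bound $\bar\sigma_j$ and the known dither data, and then to verify that the resulting closed-loop estimate is consistent, which is precisely what the strict inequality $\Phi_2^i<0$ encodes. Preserving the harmonic cancellations of the averaging step, so that $G_i,Y_{1i},Y_{2i}$ are genuinely $\mathrm{O}(\varepsilon)$ rather than $\mathrm{O}(1)$, is the delicate bookkeeping that makes the bound $\Omega_i$ shrink with $\varepsilon,\mu,\kappa$.
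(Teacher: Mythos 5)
Your proposal is correct and takes essentially the same route as the paper's own proof: the same bootstrap structure (a priori box bound $|\tilde\vartheta_j|\le\bar\sigma_j$, then bounding $\bar Q$ by $\Delta_1$, $\dot{\tilde\vartheta}_i$ by $2|k_i|\Delta_1/|a_i|$, and $G_i$, $Y_{1i}$, $Y_{2i}$, $Y_{3i}$, $w_i$ to obtain $|\bar w_i|<W_i$), the same application of Lemma \ref{lemma1}'s variation-of-constants formula with the positive, two-piece exponentially bounded fundamental solution split at the first delay window, and the same continuity/contradiction closure via $\Phi_2^i<0$, with part (ii) read off from $\tilde\theta=U^{\mathrm T}\tilde\vartheta$. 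The only cosmetic difference is attribution of the $\mathrm{O}(\varepsilon)$ size of $Y_{1i},Y_{2i}$ to harmonic cancellation, whereas it actually comes from the $\mathrm{O}(\varepsilon+\mu\varepsilon)$ length of the inner integration interval; this does not affect the argument.
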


\begin{proof}
See Appendix A2.
\end{proof}

\begin{remark}
\label{remark0}In the case of the single input delay $D_{i}^{\mathrm{in}%
}=D^{\mathrm{in}}>0$ $(i=1,\ldots,n)$, by choosing the dither signal
$S(t)=\left[  S_{1}(t+D^{\mathrm{in}}+D),\ldots,S_{n}(t+D^{\mathrm{in}%
}+D)\right]  ^{\mathrm{T}}$ with $S_{i}(t)$ given in (\ref{eq55}), the result
of Theorem \ref{theorem1} holds for all $\varepsilon \in(0,\varepsilon^{\ast}%
]$. Indeed, in this case we can directly obtain system (\ref{eq35b}) without
using Lemma \ref{lemma2}.
\end{remark}

\begin{remark}
\label{remark2}(regional finite-time practical stability) For the fixed
parameters $\{ \varepsilon,\mu,\kappa,a_{i},k_{i},\bar{\sigma}_{0i},\bar
{\sigma}_{i}\}$ $(i=1,\ldots,n)$ and given $\Delta \Omega_{i}>0$ $(i=1,\ldots
,n)$, we show that there exists a finite time $T_{i}>0$ such that
$|\tilde{\vartheta}_{i}(t)|<\Omega_{i}+\Delta \Omega_{i},$ $\forall t\geq
T_{i}.$ Denote $\Xi_{i}=\bar{\sigma}_{0i}+\frac{3\varepsilon \left \vert
k_{i}\right \vert \Delta_{1}}{\left \vert a_{i}\right \vert }.$ Then from
(\ref{eq65a}) we find $|\tilde{\vartheta}_{i}(t)|<\mathrm{e}^{-\left \vert
k_{i}\bar{h}_{i}\right \vert \left(  t-D_{\mathrm{M}}-2D_{i\mathrm{M}%
}-\varepsilon \right)  }\Xi_{i}+\Omega_{i},$ $t\geq0.$ Let $\mathrm{e}%
^{-\left \vert k_{i}\bar{h}_{i}\right \vert \left(  t-D_{\mathrm{M}%
}-2D_{i\mathrm{M}}-\varepsilon \right)  }\Xi_{i}+\Omega_{i}\leq \Omega
_{i}+\Delta \Omega_{i}.$ Then we can obtain $t\geq T_{i}=D_{\mathrm{M}%
}+2D_{i\mathrm{M}}+\varepsilon-\ln(\Delta \Omega_{i}/\Xi_{i})/\left \vert
k_{i}\bar{h}_{i}\right \vert ,$ which is the desired finite time.
\end{remark}

\begin{remark}
We will compare the exponential decay rate of the error system in our analysis
and in the one presented in \cite{mk21auto}. For simplicity, we consider a
single delay $D.$ When the Hessian $H$ is uncertain, Theorem \ref{theorem1}
allows a decay rate $\delta=\min_{i=1,\ldots,n}\left \vert k_{i}\bar{h}%
_{i}\right \vert \leq \frac{1}{\mathrm{e}D}$ for all $\varepsilon \in
(0,\varepsilon^{\ast}].$ In \cite{mk21auto}, the decay rate can be
approximated by $\delta=\min \left \{  \sqrt{\bar{\omega}/2},\sqrt{5/\left(
18\left(  D+1\right)  \right)  }\right \}  $ with $\bar{\omega}$ satisfying
$3\sqrt{n}\bar{\omega}D<1$ for unknown Hessian $H$, but only for
$\varepsilon \rightarrow0.$ Note that, for high dimension $n$ and not large
delay $D,$ our bound becomes larger for appropriate $k_{i}.$ For a large delay
$D$, our bound may be smaller than the one of \cite{mk21auto}.
\end{remark}

\begin{remark}
\label{remark3}We give a detailed discussion about the effect of tuning
parameters on the decay rate $\delta_{i},$ upper bound $\varepsilon^{\ast}$
and UB. For simplicity, we choose $\bar{\sigma}_{i}=\bar{\sigma},$ $k_{i}=k$
$(i=1,\ldots,n).$ For given $\kappa,$ $\mu,$ $a_{i}$ and $\bar{\sigma}%
>\bar{\sigma}_{0i}>0,$ it is clear that $\Phi_{1}^{i}$ and $\Phi_{2}^{i}$ in
(\ref{eq24a}) are increasing functions w.r.t $\left \vert k\right \vert $ and
$\varepsilon^{\ast}.$ Therefore, $\varepsilon^{\ast}$ decreases as $\left \vert
k\right \vert $ increases. On the other hand, the decay rate $\delta
_{i}=\left \vert k\bar{h}_{i}\right \vert $ increases as $\left \vert
k\right \vert $ increases. So we can adjust the gain $k$ to balance the decay
rate $\delta_{i}$ and $\varepsilon^{\ast}.$ Finally, for given available
tuning parameters $\kappa,$ $\mu,$ $\varepsilon,$ $\bar{\sigma}_{0i},$ $k_{i}$
and $a_{i},$ we explain how to find the UB as small as possible. Note that
$\Omega_{i}$ in (\ref{eq74}) are increasing functions of $\bar{\sigma}.$
Similar to the arguments in Remark 2 in \cite{yf22tac}, for the above given
parameters, we first solve the equations $\Phi_{2}^{i}=0$ ($\varepsilon^{\ast
}=\varepsilon$) to find the smallest positive $\bar{\sigma},$ and then
substitute it into (\ref{eq74}) to get values of $\Omega_{i}$ (denoted as
$\mathrm{\bar{B}}_{i1}$). If $\mathrm{\bar{B}}_{i1}<\bar{\sigma}_{0i}-\beta$
with some $\beta \in(0,\bar{\sigma}_{0i})$, from Remark \ref{remark2} it
follows that there exists a finite time $T_{i1}>0$ such that $|\tilde
{\vartheta}_{i}(t)|<\mathrm{\bar{B}}_{i1}+\beta,$ $\forall t\geq T_{i1}.$
Resetting $\bar{\sigma}_{0i}=\mathrm{\bar{B}}_{i1}+\beta$, repeat the
calculation procedure to obtain new smaller values of $\Omega_{i}$ (denoted as
$\mathrm{\bar{B}}_{i2}$)$.$ Repeating the above process until $\mathrm{\bar
{B}}_{ij}-\mathrm{\bar{B}}_{i,j+1}<\gamma$ (here $\gamma$ is a preset small
constant, for instance, $\gamma=10^{-3}$), and $\mathrm{\bar{B}}_{i,j+1}$ are
the final UB (denoted as $\mathrm{\bar{B}}_{i}$) for each $\tilde{\vartheta
}_{i}$ as we wanted. Finally, via part (ii) of Theorem \ref{theorem1}, a small
UB for $\tilde{\theta}_{i}$ (denoted as $\mathrm{B}_{i}$) can be calculated as
$\mathrm{B}_{i}=\sum_{j=1}^{n}\left \vert b_{ji}\right \vert \mathrm{\bar{B}%
}_{j}.$
\end{remark}

Finally, Theorem \ref{theorem1} guarantees for any delays $D$ and
$D_{i}^{\mathrm{in}}$ semi-global convergence for small enough $\varepsilon
^{\ast},$ $\mu$ and $\kappa$:

\begin{corollary}
\label{corollary0} Assume that \textbf{A1-A5} hold. Given any $D>0$ and
$D_{i}^{\mathrm{in}}>0$ $(i=1,\ldots,n)$ and $\bar{\sigma}_{0i}$ $>0$ (also
$\sigma_{0i}>0$) $(i=1,\ldots,n),$ and choosing $k_{i}$ $(i=1,\ldots,n)$ to
satisfy (\ref{eq23a}), the ES algorithm converges for small enough
$\varepsilon^{\ast},$ $\mu$ and $\kappa.$
\end{corollary}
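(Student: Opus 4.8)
The plan is to show that, given the data $D,D_i^{\mathrm{in}},\bar\sigma_{0i}$ and a gain $k_i$ already fixed to satisfy (\ref{eq23a}), all hypotheses of Theorem \ref{theorem1} can be met once $\varepsilon^{\ast},\mu,\kappa$ are taken sufficiently small, and then to read off convergence from the ultimate bound $\Omega_i$ in (\ref{eq74}). The key structural fact is that (\ref{eq23a}) holds with a strict positive margin, $|k_i\bar h_i|\bar D_i<1/\mathrm{e}$, so there is room to absorb the small perturbations $\mu,\varepsilon^{\ast}$. The remaining free tuning parameters are $\bar\sigma_i$ (subject only to $\bar\sigma_i>\bar\sigma_{0i}$) and the small quantities $\varepsilon^{\ast},\mu,\kappa$; my strategy is to pin down $\bar\sigma_i$ \emph{first} and only afterwards send the small parameters to zero.

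First I would choose $\bar\sigma_i$. Since $|k_i\bar h_i|\bar D_i<1/\mathrm{e}$ we have $e^{|k_i\bar h_i|\bar D_i}<e^{1/\mathrm{e}}$, a fixed finite number, so I may fix any $\bar\sigma_i$ with $\bar\sigma_i>e^{|k_i\bar h_i|\bar D_i}\bar\sigma_{0i}$; because $e^{|k_i\bar h_i|\bar D_i}\geq1$ this also guarantees $\bar\sigma_i>\bar\sigma_{0i}$, as required by the theorem. With $\bar\sigma_i$ now fixed, the quantities $\Delta_1,\Delta_2,\bar\Delta_1,\bar\Delta_2$ in (\ref{eq38}) are polynomials in $\mu,\kappa$ with fixed coefficients, hence depend continuously on $(\mu,\kappa)$ and stay bounded as $(\mu,\kappa)\to0$. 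Consequently $W_i(\varepsilon,\mu,\kappa)$ in (\ref{eq24b}) tends to $0$ as $(\varepsilon,\mu,\kappa)\to0$, since each of its three summands carries an explicit factor $\varepsilon$, $\mu$ or $\kappa$ in front of a bounded expression.

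Next I would verify the two conditions in (\ref{eq24a}). For $\Phi_1^i$, using $D_{i\mathrm{M}}(\varepsilon^{\ast})=\bar D_i+\mu\varepsilon^{\ast}$ from (\ref{eq21b}) gives $\Phi_1^i=\big(|k_i\bar h_i|\bar D_i-1/\mathrm{e}\big)+|k_i\bar h_i|\mu\varepsilon^{\ast}$; the bracket is strictly negative by (\ref{eq23a}), so $\Phi_1^i\leq0$ for all $\mu\varepsilon^{\ast}$ small enough. For $\Phi_2^i$, sending $(\varepsilon^{\ast},\mu,\kappa)\to0$ makes $D_{i\mathrm{M}}(\varepsilon^{\ast})\to\bar D_i$, while the terms $\tfrac{3\varepsilon^{\ast}|k_i|\Delta_1}{|a_i|}$, $\tfrac{W_i}{|k_i\bar h_i|}$ and $\tfrac{\varepsilon^{\ast}|k_i|\Delta_1}{|a_i|}$ all vanish, whence $\Phi_2^i\to e^{|k_i\bar h_i|\bar D_i}\bar\sigma_{0i}-\bar\sigma_i<0$ by the choice of $\bar\sigma_i$. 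By continuity, $\Phi_2^i<0$ holds for all sufficiently small $\varepsilon^{\ast},\mu,\kappa$.

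Finally, with (\ref{eq23a}) and (\ref{eq24a}) both satisfied, Theorem \ref{theorem1} applies directly: the errors $\tilde\vartheta_i$, and via part (ii) the errors $\tilde\theta_i$, are exponentially attracted to the box (\ref{eq74}). Since $W_i\to0$ and $\tfrac{\varepsilon|k_i|\Delta_1}{|a_i|}\to0$, the ultimate bound $\Omega_i\to0$ as $\varepsilon,\mu,\kappa\to0$, which is precisely the asserted (semi-global) convergence. I expect the only delicate point to be the bookkeeping that justifies the order of quantifiers — fixing $\bar\sigma_i$ before shrinking $\varepsilon^{\ast},\mu,\kappa$ — together with confirming that none of $\Delta_1,\Delta_2,\bar\Delta_1,\bar\Delta_2$ blows up in the limit; everything else is a routine continuity argument exploiting the strict slack in (\ref{eq23a}).
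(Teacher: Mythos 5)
Your proposal is correct and follows the same route the paper intends: the paper states this corollary as an immediate consequence of Theorem \ref{theorem1} (offering no separate proof), and your argument simply makes that deduction explicit — fixing $\bar\sigma_i>e^{|k_i\bar h_i|\bar D_i}\bar\sigma_{0i}$ first, then using the strict slack in (\ref{eq23a}) and the vanishing of $W_i$ and the $\varepsilon$-terms to verify feasibility of (\ref{eq24a}) for small $\varepsilon^{\ast},\mu,\kappa$, and finally reading off $\Omega_i\to0$ from (\ref{eq74}). The quantifier ordering you flag (choose $\bar\sigma_i$ before shrinking the small parameters) is exactly the bookkeeping the paper leaves implicit, and your handling of it is sound.
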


\subsection{Single-variable static map}

For the single-variable static map%
\begin{equation}
\left.  Q(\theta(t-D_{\mathrm{out}}(t)))=Q^{\ast}+\frac{h}{2}[\theta(t-D_{\mathrm{out}}(t))-\theta^{\ast}%
]^{2},\right.  \label{eq75b}%
\end{equation}
where $D_{\mathrm{out}}(t)=D+\Delta D(t)$
($D>0$ is large and known, $\left \vert \Delta D(t)\right \vert \leq \rho$),
we let \textbf{A3 }be satisfied, $h$ be unknown scalar satisfying%
\begin{equation}
h_{\mathrm{m}}\leq \left \vert h\right \vert \leq h_{\mathrm{M}} \label{eq75}%
\end{equation}
with known positive bounds $h_{\mathrm{m}}$ and $h_{\mathrm{M}},$ and the
extremum point $\theta^{\ast}$ satisfy the following assumption:

\textbf{A2'.} The extremum point $\theta^{\ast}\in \mathbf{R}$ to be sought is
uncertain from a known interval $\theta^{\ast}\in \lbrack \underline{\theta
}^{\ast},\bar{\theta}^{\ast}]$ with $\left \vert \bar{\theta}^{\ast}%
-\underline{\theta}^{\ast}\right \vert =\sigma_{0}>0.$

We also consider the ES with large known input delay $D^{\mathrm{in}}>0.$ Let the dither signals $S(t)$ and
$M(t)$ satisfy
\[
\left.  S(t)=a\sin(\omega \left(  t+D^{\mathrm{in}}+D\right)  ),M(t)=\frac
{2}{a}\sin(\omega t)\right.
\]
with $\omega=\frac{2\pi}{\varepsilon},$ here $\varepsilon$ is a small
parameter. For the stability analysis, we also choose $\rho=\mu \varepsilon$
with a small parameter $\mu \geq0$ and define%
\begin{equation}
\left.
\begin{array}
[c]{l}%
D(t)=D^{\mathrm{in}}+D+\Delta D(t),\\
D_{\mathrm{M}}=D_{\mathrm{M}}(\varepsilon)=D^{\mathrm{in}}+D+\mu \varepsilon.
\end{array}
\right.  \label{eq75a}%
\end{equation}
Then with the ES algorithm (\ref{eq67}) for $n=1$, the estimation error
$\tilde{\theta}(t)$ is governed by \vspace{-0.2cm}
\begin{equation}
\left.
\begin{array}
[c]{l}%
\dot{\tilde{\theta}}(t)=kM(t)Q^{\ast}+\frac{1}{2}khM(t)\\
\times \lbrack S(t)+\tilde{\theta}(t-D(t))]^{2}+w(t),\text{ }t\geq
D_{\text{\textrm{M}}},\\
\tilde{\theta}(t)=\tilde{\theta}(D_{\text{\textrm{M}}}),\text{ }t\in
\lbrack0,D_{\text{\textrm{M}}}]
\end{array}
\right.  \label{eq76a}%
\end{equation}
with $w(t)$ satisfying updated (\ref{eq33a}). Applying the time-delay approach
to averaging of (\ref{eq76a}), we finally arrive at \vspace{-0.2cm}
\[
\left.
\begin{array}
[c]{l}%
\dot{z}(t)=khz(t-D(t))\\
\text{ \  \  \  \  \  \  \  \ }+\bar{w}(t),\text{ }t\geq D_{\text{\textrm{M}}%
}+\varepsilon,\text{ }i=1,\ldots,n,\\
\bar{w}(t)=khG(t-D(t))-Y_{1}(t)\\
\text{ \  \  \  \  \  \  \  \ }-Y_{2}(t)+w(t)
\end{array}
\right.
\]
with $\{z(t),G(t),Y_{1}(t),Y_{2}(t),w(t)\}$ satisfying updated (\ref{eq52a}),
(\ref{eq20}), (\ref{eq20a}) and (\ref{eq33a}), respectively.

We will find $k$ from the inequality
\begin{equation}
\left.  \left \vert k\right \vert h_{\mathrm{M}}\left(  D^{\mathrm{in}%
}+D\right)  -\frac{1}{\mathrm{e}}<0,\right.  \label{eq78}%
\end{equation}
which guarantees the exponential stability with a decay rate
\begin{equation}
\left.  \delta=\left \vert k\right \vert h_{\mathrm{M}}<\frac{1}{\mathrm{e}%
\left(  D^{\mathrm{in}}+D\right)  }\right.  \label{eq78a}%
\end{equation}
of the averaged system with $\rho=0:$
\begin{equation}
\left.  \dot{z}(t)=khz(t-D^{\mathrm{in}}-D).\right.  \label{eq78b}%
\end{equation}
Following the arguments of Theorem \ref{theorem1}, we have the following corollary.

\begin{corollary}
\label{corollary2}Assume that \textbf{A2'},\textbf{ A3} and \textbf{(}%
\ref{eq75}\textbf{) }hold\textbf{. }let $k$ satisfy (\ref{eq78}) and
$D_{\mathrm{M}}$ be defined in (\ref{eq75a}). Given tuning parameters $a$ and
$\mu \geq0$ as well as $\sigma>\sigma_{0}>0$, let there exists $\varepsilon
^{\ast}>0$ that satisfy \vspace{-0.2cm}
\begin{equation}
\left.
\begin{array}
[c]{l}%
\Phi_{1}=\left \vert k\right \vert h_{\mathrm{M}}D_{\mathrm{M}}(\varepsilon
^{\ast})-\frac{1}{\mathrm{e}}\leq0\mathbf{,}\\
\Phi_{2}=\mathrm{e}^{\left \vert k\right \vert h_{\mathrm{M}}D_{\mathrm{M}%
}(\varepsilon^{\ast})}\left[  \sigma_{0}+\frac{3\varepsilon^{\ast}\Delta}%
{2}+W(\varepsilon^{\ast},\mu)\right] \\
\text{ \  \  \  \  \  \  \ }+\frac{\varepsilon^{\ast}\Delta}{2}-\sigma<0\mathbf{,}%
\end{array}
\right.  \label{eq76}%
\end{equation}
where%
\begin{equation}
\left.
\begin{array}
[c]{l}%
\Delta=\frac{2\left \vert k\right \vert }{\left \vert a\right \vert }\left[
Q_{\mathrm{M}}^{\ast}+\frac{h_{\mathrm{M}}}{2}\left(  \sigma+\left \vert
a\right \vert \right)  ^{2}\right]  ,\\
W(\varepsilon,\mu)=\varepsilon \frac{\left[  \left \vert a\right \vert +\left(
2+8\mu \right)  \left(  \sigma+\left \vert a\right \vert \right)  \right]
\Delta}{2\left \vert a\right \vert }\\
\text{ \  \  \  \  \  \  \  \  \  \  \ }+\mu4\pi \left(  \left \vert a\right \vert +\mu
\pi \left \vert a\right \vert +\sigma \right)  .
\end{array}
\right.  \label{eq109}%
\end{equation}
Then for all $\varepsilon \in(0,\varepsilon^{\ast}]$, the solution
$\tilde{\theta}(t)$ of system (\ref{eq76a}) with $|\tilde{\theta
}(D_{\mathrm{M}})|\leq \sigma_{0}$ satisfies%
\begin{equation}
\left.
\begin{array}
[c]{l}%
|\tilde{\theta}(t)|<\left \vert \tilde{\theta}(D_{\mathrm{M}})\right \vert
+\varepsilon \Delta<\sigma,\text{ }t\in \lbrack D_{\mathrm{M}},D_{\mathrm{M}%
}+\varepsilon],\\
|\tilde{\theta}(t)|<\left(  1+\left \vert k\right \vert h_{\mathrm{M}%
}D_{\mathrm{M}}\right)  \left[  \left \vert \tilde{\theta}(D_{\mathrm{M}%
})\right \vert +\frac{3\varepsilon \Delta}{2}\right] \\
\text{ \  \  \  \  \  \  \  \  \ }+\left \vert k\right \vert h_{\mathrm{M}}%
D_{\mathrm{M}}W(\varepsilon,\mu)+\frac{\varepsilon \Delta}{2}\\
\text{ \  \  \  \  \  \  \ }<\sigma,\text{ }t\in \lbrack D_{\mathrm{M}}%
+\varepsilon,2D_{\mathrm{M}}+\varepsilon].\\
|\tilde{\theta}(t)|<\mathrm{e}^{-\left \vert k\right \vert h_{\mathrm{m}}\left(
t-2D_{\mathrm{M}}-\varepsilon \right)  }\mathrm{e}^{\left \vert k\right \vert
h_{\mathrm{M}}D_{\mathrm{M}}}\left[  \left \vert \tilde{\vartheta
}(D_{\mathrm{M}})\right \vert +\frac{3\varepsilon \Delta}{2}\right] \\
\text{ \  \  \  \  \  \  \  \  \ }+\mathrm{e}^{\left \vert k\right \vert h_{\mathrm{M}%
}D_{\mathrm{M}}}W(\varepsilon,\mu)+\frac{\varepsilon \Delta}{2}\\
\text{ \  \  \  \  \  \  \ }<\sigma,\text{ }t\geq2D_{\mathrm{M}}+\varepsilon.
\end{array}
\right.  \label{eq81}%
\end{equation}
This solution is exponentially attracted to the interval%
\[
\left.  \left \vert \tilde{\theta}(t)\right \vert <\mathrm{e}^{\left \vert
k\right \vert h_{\mathrm{M}}D_{\mathrm{M}}}W(\varepsilon,\mu)+\frac
{\varepsilon \Delta}{2}\right.
\]
with a decay rate $\delta=\left \vert k\right \vert h_{\mathrm{m}}.$
\end{corollary}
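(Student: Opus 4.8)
The plan is to specialize the argument behind Theorem~\ref{theorem1} to the scalar case $n=1$, keeping its overall structure but adjusting for two features of the present setting. First, with a single variable there is no diagonalization, the commensurability assumption \textbf{A1} and Lemma~\ref{lemma2} are not needed, and the known part of the delay is absorbed directly by the shifted dither $S(t)=a\sin(\omega(t+D^{\mathrm{in}}+D))$, which gives $S(t-D(t))=a\sin(\omega(t-\Delta D(t)))$ exactly as in Remark~\ref{remark0}. Second, since the Hessian $h$ is not split into a nominal part plus a bounded perturbation, there is no $Y_3$/$\kappa$ contribution, and the uncertain coefficient $kh$ must be handled through the two-sided estimate $|k|h_{\mathrm{m}}\le|kh|\le|k|h_{\mathrm{M}}$. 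I would start from the already-derived transformed equation $\dot z(t)=khz(t-D(t))+\bar w(t)$, reading it as an $\mathrm{O}(\max\{\varepsilon,\mu\})$ perturbation of the exponentially stable linear delay equation (\ref{eq78b}).

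The core step is the variation of constants formula of Lemma~\ref{lemma1} for $\dot z(t)+az(t-g(t))=\bar w(t)$ with $a=|kh|$ and $g(t)=D(t)\in[0,D_{\mathrm{M}}]$. To apply the positivity and decay estimate of the fundamental solution $X(t,s)$ uniformly over all admissible $h$, I would check the hypothesis $aD_{\mathrm{M}}\le1/\mathrm{e}$ at the \emph{largest} coefficient $a=|k|h_{\mathrm{M}}$, which is exactly $\Phi_1\le0$. Lemma~\ref{lemma1} then yields $0<X(t,s)\le1$ for $s\le t\le s+D_{\mathrm{M}}$ and $X(t,s)\le\mathrm{e}^{-a(t-s-D_{\mathrm{M}})}$ beyond; writing this as $X(t,s)\le\mathrm{e}^{|k|h_{\mathrm{M}}D_{\mathrm{M}}}\,\mathrm{e}^{-|k|h_{\mathrm{m}}(t-s)}$ uses $h_{\mathrm{M}}$ for the growth factor and the \emph{smallest} coefficient $a\ge|k|h_{\mathrm{m}}$ for the slowest, worst-case decay, which is why the exponent in (\ref{eq81}) carries $h_{\mathrm{m}}$ while the prefactor carries $h_{\mathrm{M}}$.

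With $t_0=D_{\mathrm{M}}+\varepsilon$ I would split the time axis as in (\ref{eq81}). On $[D_{\mathrm{M}},D_{\mathrm{M}}+\varepsilon]$ one has $G\equiv0$ and $z=\tilde\theta$, and the delayed argument $s-D(s)$ lands in $[0,D_{\mathrm{M}}]$ where $\tilde\theta\equiv\tilde\theta(D_{\mathrm{M}})$, so the a priori bound is automatic; integrating (\ref{eq76a}) and bounding its right-hand side by $\Delta$ (built from $Q_{\mathrm{M}}^{\ast}$ and $\tfrac{h_{\mathrm{M}}}{2}(\sigma+|a|)^2$, see (\ref{eq109})) gives $|\tilde\theta(t)|<|\tilde\theta(D_{\mathrm{M}})|+\varepsilon\Delta$. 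The transition bounds $|G|\le\tfrac{\varepsilon\Delta}{2}$ together with $z(t_0)=\tilde\theta(t_0)-G(t_0)$ and $\tilde\theta=z+G$ account for the $\tfrac{3\varepsilon\Delta}{2}$ inside the brackets and the trailing $\tfrac{\varepsilon\Delta}{2}$ in (\ref{eq81}). On $[D_{\mathrm{M}}+\varepsilon,2D_{\mathrm{M}}+\varepsilon]$ I use $X\le1$, and for $t\ge2D_{\mathrm{M}}+\varepsilon$ the exponential estimate; in both cases the convolution $\int X(t,s)|\bar w(s)|\,\mathrm{d}s$ against the perturbation $\bar w=khG(t-D(t))-Y_1-Y_2+w$ produces the term $W(\varepsilon,\mu)$ of (\ref{eq109}), and letting the initial exponential decay to zero gives the advertised ultimate interval with decay rate $|k|h_{\mathrm{m}}$.

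The main obstacle, as in Theorem~\ref{theorem1}, is that $\bar w$ depends on the trajectory $\tilde\theta$ itself (through $G$, $Y_1$, $Y_2$, $w$), so the estimate is only implicit: the a priori bound $|\tilde\theta(t-D(t))|<\sigma$ is needed to control $\Delta$ and $W$ and must then be recovered. I would close this by a continuity/bootstrapping argument on the maximal interval where $|\tilde\theta|<\sigma$: condition $\Phi_2<0$ forces the right-hand sides of all three estimates in (\ref{eq81}) to remain \emph{strictly} below $\sigma$, so the solution never reaches the boundary and the bounds hold for all $t\ge D_{\mathrm{M}}$. The delicate point specific to this corollary is keeping the roles of $h_{\mathrm{M}}$ (growth, and the Lemma~\ref{lemma1} hypothesis $\Phi_1\le0$) and $h_{\mathrm{m}}$ (guaranteed decay rate) consistent throughout the bootstrap, since the true value of $h$ is never known.
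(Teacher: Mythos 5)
Your plan is the same as the paper's: Corollary \ref{corollary2} is obtained there by rerunning the three-part proof of Theorem \ref{theorem1} (a priori bounds, variation of constants with the positive fundamental solution from Lemma \ref{lemma1}, and a contradiction bootstrap) in the scalar case, with the shifted dither playing the role of Lemma \ref{lemma2}, no $Y_{3}$/$\kappa$ term, the Lemma \ref{lemma1} hypothesis checked at the largest coefficient $|k|h_{\mathrm{M}}$, and the guaranteed decay rate $|k|h_{\mathrm{m}}$. Your treatment of the first two time intervals, the $3\varepsilon\Delta/2$ and $\varepsilon\Delta/2$ bookkeeping through $z=\tilde\theta-G$, and the closure of the bootstrap via $\Phi_2<0$ all match the paper.

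There is, however, one step that fails as written: the perturbation (convolution) estimate for $t\ge 2D_{\mathrm{M}}+\varepsilon$. Note first that the corollary's $W(\varepsilon,\mu)$ is not the scalar specialization of Theorem \ref{theorem1}'s $W_i$; it already absorbs a division by the loop gain, and the sharp bound on the perturbation is $|\bar w(s)|\le |kh|\,W(\varepsilon,\mu)$, since every term of $\bar w=khG-Y_1-Y_2+w$ carries a factor of the true Hessian $h$ through $\mathcal{H}$. The bound $\mathrm{e}^{|k|h_{\mathrm{M}}D_{\mathrm{M}}}W(\varepsilon,\mu)$ claimed in (\ref{eq81}) is then obtained only by an exact cancellation: $\int_{D_{\mathrm{M}}+\varepsilon}^{t}X(t,s)|\bar w(s)|\,\mathrm{d}s\le |kh|\,W\int_{D_{\mathrm{M}}+\varepsilon}^{t}\mathrm{e}^{-|kh|(t-s-D_{\mathrm{M}})}\,\mathrm{d}s\le |kh|\,W\cdot\mathrm{e}^{|kh|D_{\mathrm{M}}}/|kh|=\mathrm{e}^{|kh|D_{\mathrm{M}}}W\le \mathrm{e}^{|k|h_{\mathrm{M}}D_{\mathrm{M}}}W$, where the fundamental solution must be integrated at its \emph{true} decay rate $|kh|$. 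Your plan instead convolves the uniform bound $X(t,s)\le\mathrm{e}^{|k|h_{\mathrm{M}}D_{\mathrm{M}}}\mathrm{e}^{-|k|h_{\mathrm{m}}(t-s)}$ against $\sup|\bar w|\le|k|h_{\mathrm{M}}W$, which yields $(h_{\mathrm{M}}/h_{\mathrm{m}})\,\mathrm{e}^{|k|h_{\mathrm{M}}D_{\mathrm{M}}}W$ --- larger by the factor $h_{\mathrm{M}}/h_{\mathrm{m}}>1$ whenever $h_{\mathrm{m}}<h_{\mathrm{M}}$. With this inflated constant the third inequality of (\ref{eq81}) and the stated ultimate interval are not established, and the bootstrap does not close either, because $\Phi_2<0$ in (\ref{eq76}) is calibrated exactly to $\mathrm{e}^{|k|h_{\mathrm{M}}D_{\mathrm{M}}}W$ and cannot absorb the extra $h_{\mathrm{M}}/h_{\mathrm{m}}$. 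The repair is to reserve the $h_{\mathrm{m}}$/$h_{\mathrm{M}}$ split for the transient terms only (those carry no $1/|kh|$, and there your reasoning is exactly right), while for the convolution term keeping $|kh|$ in both the bound on $\bar w$ and the exponent of $X$ so that it cancels before any worst-casing over $h$; your second-interval estimate with $X\le1$ is unaffected, since no $1/|kh|$ arises there.
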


\begin{remark}
For the non-delayed 1D map with $D(t)=0,$ (\ref{eq76}) and (\ref{eq81}) are
reduced to%
\[
\left.  \Phi=\sigma_{0}+\frac{\varepsilon^{\ast}\Delta \left(  7\left \vert
a\right \vert +2\sigma \right)  }{2\left \vert a\right \vert }-\sigma<0,\right.
\]
and%
\[
\left.
\begin{array}
[c]{l}%
\vert \tilde{\theta}(t) \vert<\left \vert \tilde{\theta}(0)\right \vert
+\varepsilon \Delta<\sigma,\text{ }t\in \lbrack0,\varepsilon],\\
\vert \tilde{\theta}(t) \vert<\mathrm{e}^{-\left \vert K\right \vert H_{m}\left(
t-\varepsilon \right)  }\left(  \left \vert \tilde{\theta}(0)\right \vert
+\frac{3\varepsilon \Delta}{2}\right) \\
\text{ \  \  \  \  \  \  \  \  \  \ }+\frac{\varepsilon \Delta \left(  2\left \vert
a\right \vert +\sigma \right)  }{\left \vert a\right \vert }<\sigma,\text{ }%
t\geq \varepsilon,
\end{array}
\right.
\]
which coincide with the results in Remark 5 of \cite{yf22tac}. Hence,
Corollary \ref{corollary2} recovers the results of \cite{yf22tac} for
non-delayed case.
\end{remark}

\subsection{Examples\label{sec2.2}}

Example 3.1. ($1\mathrm{D}$\textit{ static map}) Consider the single-input map
(\ref{eq1}) for $n=1$ with%
\begin{equation}
\left.
\begin{array}
[c]{l}%
\left \vert Q^{\ast}\right \vert \leq Q_{\mathrm{M}}^{\ast}=0.5,\\
H=\bar{H}+\Delta H,\bar{H}=2,\left \vert \Delta H\right \vert \leq \kappa.
\end{array}
\right.  \label{eq22}%
\end{equation}
Let the delays in (\ref{eq75a}) satisfy $D^{\mathrm{in}}=D=1.$ We select the
tuning parameters of the gradient-based ES as
\[
a=0.3, { \ } k=-0.003.
\]
To
calculate UB, we choose $\beta=\gamma=0.001$ following Remark \ref{remark3}.
The results that follow from Theorem \ref{theorem1} and Corollary
\ref{corollary2} are shown in Table \ref{table1}. By comparing the data, we
find that for the same values of $\sigma_{0},$ $\sigma$ and $\mu,$ the results
in Corollary \ref{corollary2} allow larger upper bounds $\kappa$ and
$\varepsilon^{\ast}$ (smaller lower bound frequency $\omega^{\ast}$) as well
as smaller UB than those in Theorem \ref{theorem1}.

For the numerical simulations, we choose $\varepsilon=0.74,$ $\mu=0.01,$
$\Delta D(t)=\rho \sin t$ with $\rho=\varepsilon \mu=0.74\cdot10^{-2},$
$Q^{\ast}=0,$ $H=2$ and the other parameters as shown above. Under the initial
condition $\hat{\theta}(s)=0.5,$ $s\in \lbrack0,2+\rho],$ the simulation result
is shown in Fig. \ref{fig0}, from which we can see that the value of UB shown
in Table \ref{table1} is confirmed.

\begin{table}[h]
\caption{Example 3.1: comparison of $\delta$, $\kappa$, $\varepsilon^{\ast}$, $\omega
^{\ast}=2\pi/\varepsilon^{\ast}$ and UB}%
\label{table1}%
\center  {\tiny { \setlength{\tabcolsep}{3pt}
\begin{tabular}
[c]{|l|c|c|c|c|c|c|c|c|}\hline
ES: sine & $\sigma_{0}$ & $\sigma$  & $\mu$ &$\delta$ & $\kappa$ & $\varepsilon^{\ast}$ & $\omega
^{\ast}$ & UB\\ \hline
Corollary \ref{corollary2} & 0.5 & 1  & 0.01 & $0.003$ & 1 & 0.74 & 8.49 & 0.115\\ \hline
Theorem \ref{theorem1} & 0.5 & 1  & 0.01 & $0.006$ & 0.47 & 0.1 & 62.83 & 0.358\\ \hline
\end{tabular}
} }\end{table}

\begin{figure}[ptb]
\centering
\includegraphics[width=0.4\textwidth]{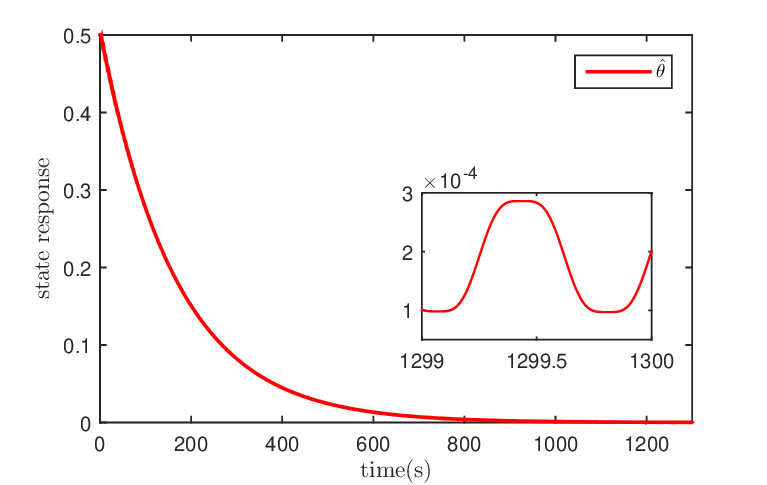}\caption{Example 3.1: Trajectory of the estimate $\hat{\theta}$}%
\label{fig0}%
\end{figure}

Example 3.2. ($2\mathrm{D}$\textit{ static map}) Consider an autonomous vehicle in an
environment without GPS orientation \cite{sk17book}. The goal is to reach the
the origin, i.e. location of the stationary minimum of a measured distance
function%
\[
\left.
\begin{array}
[c]{l}%
J(x(t),y(t))=Q^{\ast}+\frac{1}{2}\left[
\begin{array}
[c]{cc}%
x(t) & y(t)
\end{array}
\right]  H\left[
\begin{array}
[c]{c}%
x(t)\\
y(t)
\end{array}
\right] \\
\text{ \  \  \  \  \  \  \  \  \  \  \  \  \  \  \  \ }=x^{2}(t)+y^{2}(t),
\end{array}
\right.
\]
where $Q^{\ast}=0$ and $H=[2,0;0,2].$ We employ the classical ES with
measurement and input delays:
\[
\left.
\begin{array}
[c]{l}%
x(t)=\hat{x}(t)+a_{1}\sin \left(  \frac{2\pi l_{1}}{\varepsilon}t\right)  ,\\
y(t)=\hat{y}(t)+a_{2}\sin \left(  \frac{2\pi l_{2}}{\varepsilon}t\right)  ,\\
\dot{\hat{x}}(t)=\frac{2k_{1}}{a_{1}}\sin \left(  \frac{2\pi l_{1}}%
{\varepsilon}t\right)  J(t-D_{1}(t)),\\
\dot{\hat{y}}(t)=\frac{2k_{2}}{a_{2}}\sin \left(  \frac{2\pi l_{2}}%
{\varepsilon}t\right)  J(t-D_{2}(t)),
\end{array}
\right.
\]
where $D_{i}(t)=D_{i}^{\mathrm{in}}+D+\Delta D(t)$ ($i=1,2$) with $\left \vert
\Delta D(t)\right \vert \leq \rho=\mu \varepsilon$. We select the delays and
tuning parameters as%
\begin{equation}
\left.
\begin{array}
[c]{l}%
D_{1}^{\mathrm{in}}=0.5,\text{ }D_{2}^{\mathrm{in}}=1,\text{ }D=1,\\
k_{1}=k_{2}=-0.003,\text{ }a_{1}=a_{2}=0.3.
\end{array}
\right.  \label{eq59}%
\end{equation}
Following (\ref{eq34}), we choose%
\begin{equation}
\left.  \varepsilon=\frac{1}{2q},\text{ }q\in \mathbf{N}.\right.  \label{eq60a}%
\end{equation}
To calculate UB, we also choose $\beta=\gamma=0.001$ following Remark
\ref{remark3}. Verifying conditions of Theorem \ref{theorem1} ($[\tilde
{x}(t),\tilde{y}(t)]^{\mathrm{T}}=[\tilde{\vartheta}_{1}(t),\tilde{\vartheta
}_{2}(t)]^{\mathrm{T}}$ since $U=I_{2}$), we arrive at the results shown in Table
\ref{table2}, in which the UB ($\mathrm{\bar{B}}_{i},$ $i=1,2$) corresponds to
$\varepsilon=1/4$ $(<\varepsilon^{\ast}=0.3).$

For the numerical simulations, we choose $\varepsilon=1/4,$ $\mu=0.005,$
$\Delta D(t)=\rho \sin t$ with $\rho=\varepsilon \mu=0.125\cdot10^{-2}$ and the
other parameters as shown in (\ref{eq59}). Under the initial condition
$\hat{x}(s)=-\hat{y}(s)=0.5,$ $s\in \lbrack0,2+\rho]$ (thus $[\tilde
{x}(s),\tilde{y}(s)]^{\mathrm{T}}=[0.5,-0.5]^{\mathrm{T}},$ $s\in
\lbrack0,2+\rho]$)$,$ the simulation results are shown in Fig. \ref{fig1},
from which we can see that the values of UB shown in Table \ref{table2} are confirmed.

\begin{table}[h]
\caption{Example 3.2 with (\ref{eq59}): The values of $\mu$, $\varepsilon^{\ast}$, $\omega^{\ast}=
2\pi/\varepsilon^{\ast}$, $\delta_{i}$ (i=1,2) and UB ($\mathrm{\bar{B}}_{i},$
$i=1,2$)}%
\label{table2}%
\center  {\tiny { \setlength{\tabcolsep}{3pt}
\begin{tabular}
[c]{|l|c|c|c|c|c|c|c|}\hline
ES: sine & $\bar{\sigma}_{01}/\bar{\sigma}_{02}$ & $\bar{\sigma}_{1}/\bar{\sigma}_{2}$ & $\mu$
& $\varepsilon^{\ast}$ & $\omega^{\ast}$ & $\delta_{1}=\delta_{2}$ & $\mathrm{\bar{B}}_{1}/\mathrm{\bar{B}}_{2}$\\ \hline
Theorem \ref{theorem1} & 0.5/0.5 & 1/1  &$0.005$  &0.30 & 20.94 & $0.006$ & $0.124/0.124$ \\ \hline
\end{tabular}
} }\end{table}

\begin{figure}[ptb]
\centering
\includegraphics[width=0.4\textwidth]{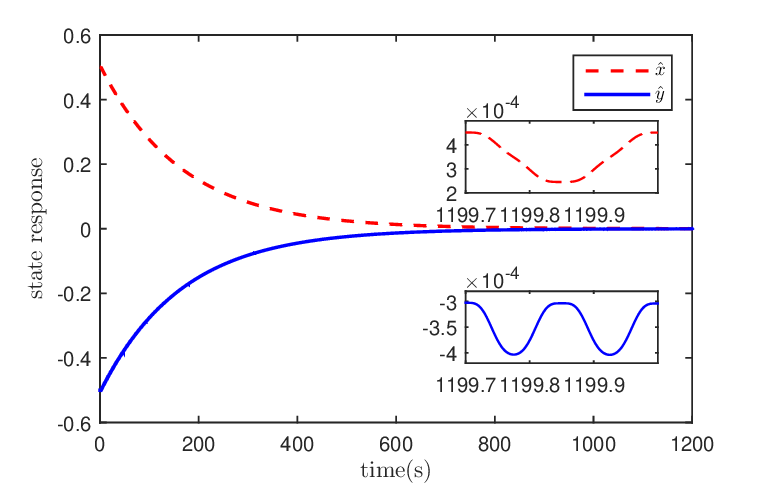}\caption{Example 3.2: Trajectories of the estimate ($\hat{x},\hat{y}$)}%
\label{fig1}%
\end{figure}

Example 3.3. ($2\mathrm{D}$\textit{ static map}) Following \cite{mk21auto,okt17tac}, we
consider the static map (\ref{eq1}) with%
\begin{equation}
\left.  Q^{\ast}=1,\text{ }\theta^{\ast}=\left[
\begin{array}
[c]{c}%
0\\
1
\end{array}
\right]  ,\text{ }\bar{H}=\left[
\begin{array}
[c]{cc}%
-2 & -2\\
-2 & -4
\end{array}
\right]  ,\left \Vert \Delta H\right \Vert \leq \kappa.\right.  \label{eq82b}%
\end{equation}
We find that%
\[
U=\left[
\begin{array}
[c]{cc}%
0.5257 & 0.8507\\
-0.8507 & 0.5257
\end{array}
\right]
\]
satisfies $U^{\mathrm{T}}U=I_{2}$, $\left \Vert U\right \Vert =1$ and%
\[
UHU^{\mathrm{T}}=\left[
\begin{array}
[c]{cc}%
-5.2361 & 0\\
0 & -0.7639
\end{array}
\right]  .
\]
We consider two cases: $\kappa=0$ and $\kappa=0.2.$ We choose the delays
$D_{i}(t)=D_{i}^{\mathrm{in}}+D+\varepsilon \Delta D(t)$ ($i=1,2$) and the
tuning parameters as%
\begin{equation}
\left.
\begin{array}
[c]{l}%
D_{1}^{\mathrm{in}}=0.5,\text{ }D_{2}^{\mathrm{in}}=1.5,\text{ }D=1,\\
\left \vert \Delta D(t)\right \vert \leq \rho=\mu \varepsilon,\text{ }a_{1}%
=a_{2}=0.3,\\
k_{1}=0.2\cdot10^{-3},\text{ }k_{2}=0.135\cdot10^{-2}.
\end{array}
\right.  \label{eq60}%
\end{equation}
Following (\ref{eq34}), we choose $\varepsilon$ as in (\ref{eq60a}). Via
Remark \ref{remark3}, we choose $\beta=\gamma=0.001$ to calculate UB
($\mathrm{\bar{B}}_{i}$ for $\tilde{\vartheta}_{i}(t)$, $\mathrm{B}_{i}$ for
$\tilde{\theta}_{i}(t),$ $i=1,2$). The results that follow from Theorem
\ref{theorem1} for known ($\kappa=0$) and uncertain ($\kappa=0.2$) $H$ are
shown in Table \ref{table3}, in which UB ($\mathrm{\bar{B}}_{i},$ $i=1,2$)
corresponds to $\varepsilon=1/4$ $(<\varepsilon^{\ast}=0.49)$ and
$\varepsilon=1/8$ $(<\varepsilon^{\ast}=0.016)$ for known and uncertain $H,$ respectively.

For the numerical simulations, we choose $\varepsilon=1/4,$ $\mu=0.003,$
$\Delta D(t)=\rho \sin t$ with $\rho=\varepsilon \mu=0.75\cdot10^{-2}$ and the
other parameter values as shown in (\ref{eq82b}) and (\ref{eq60}). For the
real-time estimate $\hat{\theta}(t)$ with the estimation error $\tilde{\theta
}(t),$ following part (ii) of Theorem \ref{theorem1}, we solve
\begin{equation}
\left.
\begin{array}
[c]{l}%
0.5257\sigma_{01}+0.8507\sigma_{02}=0.5,\\
0.8507\sigma_{01}+0.5257\sigma_{02}=0.5,
\end{array}
\right.  \label{eq16a}%
\end{equation}
to obtain $\sigma_{01}=\sigma_{02}=0.3633,$ and the UB ($\mathrm{B}_{i},$
$i=1,2$) can be calculated as%
\begin{equation}
\left.
\begin{array}
[c]{l}%
\mathrm{B}_{1}=0.5257\cdot0.023+0.8507\cdot0.16=0.1482,\\
\mathrm{B}_{2}=0.8507\cdot0.023+0.5257\cdot0.16=0.1037.
\end{array}
\right.  \label{eq16}%
\end{equation}
Under the initial condition $\hat{\theta}(s)=[0.3,0.7]^{\mathrm{T}},$
$s\in \lbrack0,2+\rho]$ (thus $\tilde{\theta}(s)=\hat{\theta}(s)-\theta^{\ast
}=[0.3633,-0.3633]^{\mathrm{T}},$ $s\in \lbrack0,2+\rho]$), the simulation
results are shown in Fig. \ref{fig2}, from which we can see that the values of
UB in (\ref{eq16}) are confirmed.

\begin{table}[h]
\caption{Example 3.3 with (\ref{eq60}): The values of $\mu$, $\varepsilon^{\ast}$, $\omega^{\ast}%
=2\pi/\varepsilon^{\ast}$, $\delta_{i}$ (i=1,2) and UB ($\mathrm{\bar{B}}%
_{i},$ $i=1,2$))}%
\label{table3}
\center  {\tiny { \setlength{\tabcolsep}{3pt}
\begin{tabular}
[c]{|l|c|c|c|c|c|c|c|}\hline
ES: sine & $\bar{\sigma}_{01}/\bar{\sigma}_{02}$ & $\bar{\sigma}_{1}/\bar{\sigma}_{2}$  & $\mu$
& $\varepsilon^{\ast}$ &$\omega^{\ast}$ & $\delta_{1}=\delta_{2}$ & $\mathrm{\bar{B}}_{1}/\mathrm{\bar{B}}_{2}$\\ \hline
Known $H$ & 0.5/0.5 & 0.6/1.0 &$0.003$  &0.49 & 12.82 & $0.001$ & 0.023/0.16\\ \hline
Uncertain $H$ & 0.5/0.5 & 0.6/1.0  &$0.001$  &0.016 &392.70& $0.001$ &0.019/0.13\\ \hline
\end{tabular}
} }\end{table}

\begin{figure}[ptb]
\centering
\includegraphics[width=0.4\textwidth]{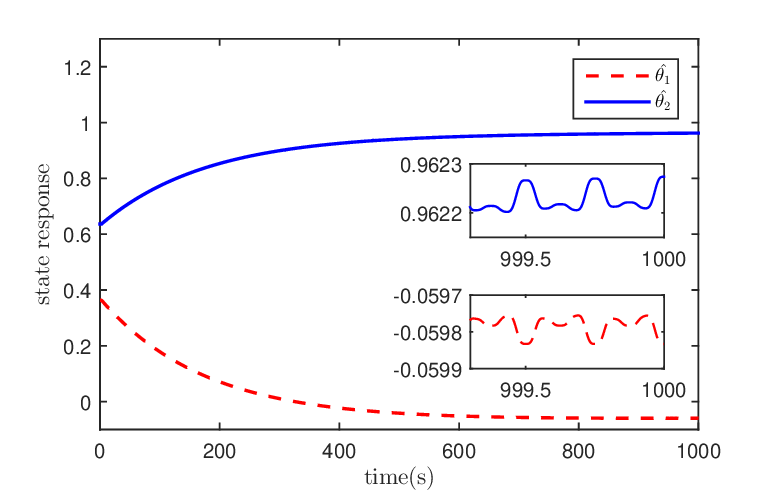}\caption{Example 3.3: Trajectories of the estimate ($\hat{\theta_{1}},\hat{\theta_{2}}$)}
\label{fig2}%
\end{figure}

\section{Sampled-data ES with Large Distinct Delays\label{sec4}}

\subsection{Multi-variable static map}

In this section, we further study the sampled-data ES with square wave
dithers, known large and constant input delays $D_{i}>0$ $(i=1,\ldots,n)$ and
periodic fast sampling. Here we assume that $D_{i}/D_{j}$ $(i,j=1,\ldots,n)$
are rational, then there exist some $m_{i}\in \mathbf{N}$ $(i=1,\ldots,n)$ such
that
\begin{equation}
m_{i}D_{j}=m_{j}D_{i},\text{ }\forall i,j=1,\ldots,n. \label{eq101}
\end{equation}
We also choose the corresponding $m_{i}$ $(i=1,\ldots,n)$ as small as possible.

Consider the multi-variable static map $Q(\theta(t))$ in (\ref{eq1}) without
measurement delay (namely, $D_{\mathrm{out}}(t)=0$), which is measured at the
sampling instants $s_{p}$ satisfying $s_{p}=\frac{p\varepsilon}{2^{n}},$
$p\in \mathbf{Z}_{+}$ with a small parameter $\varepsilon>0.$ Let \textbf{A2-A4
}be satisfied. Then equations (\ref{eq32a})-(\ref{eq11}) hold with
$D_{\mathrm{out}}(t)=0$. Define the square wave signal $\mathrm{sq}(t)$ in the
following form \cite{zfo22tac}
\begin{equation}
\left.  \mathrm{sq}(t)=\left \{
\begin{array}
[c]{cl}%
1, & t\in \left[  p,p+\frac{1}{2}\right)  ,\\
-1, & t\in \left[  p+\frac{1}{2},1\right)  ,
\end{array}
\right.  p\in \mathbf{Z}_{+}.\right.  \label{eq55d}
\end{equation}
We choose the adaptation gain $K$ as (\ref{eq40}) such that $K\mathcal{\bar
{H}}$ is Hurwitz, and
\begin{equation}
\left.
\begin{array}
[c]{l}%
S(t)=[S_{1}(t),\ldots,S_{n}(t)]^{\mathrm{T}},\text{ }S_{i}(t)=a_{i}%
\mathrm{sq}\left(  \frac{l_{i}t}{\varepsilon}\right)  ,\\
M(t)=[M_{1}(t),\ldots,M_{n}(t)]^{\mathrm{T}},\text{ }M_{i}(t)=\frac{1}{a_{i}%
}\mathrm{sq}\left(  \frac{l_{i}t}{\varepsilon}\right)
\end{array}
\right.  \label{eq93}
\end{equation}
with $l_{i}=2^{i-1},$ $l_{i}\neq l_{j},i\neq j$ and $a_{i}$ are non-zero real numbers.
By choosing
\begin{equation}
\left.  \varepsilon=\frac{D_{j}}{qm_{j}},\text{ }q\in \mathbf{N}\right.
\label{eq85}
\end{equation}
and using arguments of Lemma \ref{lemma2}, we find $\mathrm{sq}(l_{i}%
(t-D_{j})/\varepsilon)=\mathrm{sq}(l_{i}t/\varepsilon),$ $(i,j=1,\ldots,n).$
Then for $\forall i,j=1,\ldots,n,$
\begin{equation}
\left.  S_{i}(t-D_{j})=S_{i}(t),\text{ }M_{i}(t-D_{j})=M_{i}(t).\right.
\label{eq87}%
\end{equation}
Let $D_{\text{\textrm{M}}}=\max_{i=1,\ldots,n}D_{i}$ and $m=\max
_{i=1,\ldots,n}m_{i}.$ Via (\ref{eq101}) and (\ref{eq85}) we have
$D_{\text{\textrm{M}}}-D_{i}=\frac{m-m_{i}}{m_{i}}D_{i}=q_{i}\cdot
\frac{\varepsilon}{2^{n}}$ and $q_{i}=2^{n}(m-m_{i})q,$ $q\in \mathbf{N}.$

When each signal $\vartheta_{i}$ at the control update instant $t_{p}^{i}$
experiences a constant delay $D_{i}>0$ ($t_{p}^{i}=s_{p}+D_{i}$), the
gradient-based sampled-data ES algorithm can be designed as:
\begin{equation}
\left.  \dot{\hat{\vartheta}}_{i}(t)=\left \{
\begin{array}
[c]{ll}%
0, & t\in \lbrack0,D_{\text{\textrm{M}}}),\\
k_{i}M_{i}(s_{p})y(s_{p}), & t\in \lbrack t_{p}^{i},t_{p+1}^{i}),
\end{array}
\right.  \right.  \label{eq55c}
\end{equation}
where $p\geq q_{i},p\in \mathbf{Z}_{+},$ $y(t)=Q(U^{\mathrm{T}}\vartheta(t))$
with $Q(U^{\mathrm{T}}\vartheta(t))$ given by (\ref{eq26}) ($D_{\mathrm{out}%
}(t)=0$) and $\vartheta(t)=\hat{\vartheta}(t)+S(t).$ Via (\ref{eq26}),
(\ref{eq55b}) and (\ref{eq55c}), the estimation error is governed by
\begin{equation}
\left.
\begin{array}
[c]{l}%
\dot{\tilde{\vartheta}}_{i}(t)=k_{i}M_{i}(s_{p})Q^{\ast}+\frac{1}{2}k_{i}%
M_{i}(s_{p})[\tilde{\vartheta}(s_{p})+S(s_{p})]^{\mathrm{T}}\\
\times \mathcal{H}[\tilde{\vartheta}(s_{p})+S(s_{p})],\text{ }t\in \lbrack
t_{p}^{i},t_{p+1}^{i}),\text{ }p\geq q_{i},\text{ }p\in \mathbf{Z}_{+},\\
\tilde{\vartheta}_{i}(t)=\tilde{\vartheta}_{i}(D_{\text{\textrm{M}}}),\text{
}t\in \lbrack0,D_{\text{\textrm{M}}}].
\end{array}
\right.  \label{eq86}
\end{equation}
Following the time-delay approach to sampled-data control
\cite{fri14book,zfo22tac}, denote
\begin{equation}
\left.
\begin{array}
[c]{l}%
\eta_{i}(t)=t-D_{i}-s_{p},\text{ }t\in \lbrack t_{p}^{i},t_{p+1}^{i}),\\
0\leq \eta_{i}(t)<\frac{\varepsilon}{2^{n}},\text{ }i=1,\ldots,n\mathbf{.}%
\end{array}
\right.  \label{eq83}
\end{equation}
Note that for $\forall i,j=1,\ldots,n,$
\begin{equation}
\left.  \mathrm{sq}\left(  \frac{l_{j}s_{p}}{\varepsilon}\right)
=\mathrm{sq}\left(  \frac{l_{j}\left(  t-D_{i}\right)  }{\varepsilon}\right)
,\text{ }t\in \lbrack t_{p}^{i},t_{p+1}^{i}),\right.  \label{eq43}
\end{equation}
which with (\ref{eq87}) implies that
\begin{equation}
\left.
\begin{array}
[c]{l}%
M_{j}(s_{p})=M_{j}(t-D_{i})=M_{j}(t),\text{ }t\in \lbrack t_{p}^{i},t_{p+1}%
^{i}),\\
S_{j}(s_{p})=S_{j}(t-D_{i})=S_{j}(t),\text{ }t\in \lbrack t_{p}^{i},t_{p+1}%
^{i}).
\end{array}
\right.  \label{eq84}
\end{equation}
Taking into account (\ref{eq83}) and (\ref{eq84}), the dynamics (\ref{eq86})
becomes
\begin{equation}
\left.
\begin{array}
[c]{l}%
\dot{\tilde{\vartheta}}_{i}(t)=k_{i}M_{i}(t)Q^{\ast}+\frac{1}{2}k_{i}%
M_{i}(t)[\tilde{\vartheta}(t-D_{i}(t))+S(t)]^{\mathrm{T}}\\
\text{ \  \  \  \  \  \ }\times \mathcal{H}[\tilde{\vartheta}(t-D_{i}%
(t))+S(t)],\text{ }t\geq D_{\text{\textrm{M}}},\text{ }i=1,\ldots
,n\mathbf{,}\\
\tilde{\vartheta}_{i}(t)=\tilde{\vartheta}_{i}(D_{\text{\textrm{M}}}),\text{
}t\in \lbrack0,D_{\text{\textrm{M}}}]
\end{array}
\right.  \label{eq84a}%
\end{equation}
with
\begin{equation}
D_{i}(t)=D_{i}+\eta_{i}(t). \label{eq84b}
\end{equation}

Recently, for the case of the single input delay $D_{i}=D>0$ ($i=1,\ldots,n$)
and the dimension $n=1,2,$ motivated by \cite{fz20auto}, a time-delay approach
for the stability analysis of sampled-data ES algorithms was introduced in
\cite{zfo21cdc,zfo22tac}. In the latter papers, the ES dynamics was first
converted into a neutral type model with a nominal delay-free system, and then
the L-K method was used to find sufficient practical stability conditions in
the form of LMIs. Different from the treatments for (\ref{eq84a}) in
\cite{zfo21cdc,zfo22tac}, in this paper we will first transform (\ref{eq84a})
into a neutral type model with a nominal time-delay system, and further
present the resulting neutral system as a retarded one. Finally, motivated by
our previous work in \cite{yf22tac}, we will employ the variation of constants
formula to find sufficient practical stability conditions.

As in Section \ref{sec3}, we integrate from $t-\varepsilon$ to $t$ and divide
by $\varepsilon$ on both sides of (\ref{eq84a}) to obtain%
\begin{equation}
\left.
\begin{array}
[c]{l}%
\frac{1}{\varepsilon}%
%TCIMACRO{\tint \nolimits_{t-\varepsilon}^{t}}%
%BeginExpansion
{\textstyle \int \nolimits_{t-\varepsilon}^{t}}
%EndExpansion
\dot{\tilde{\vartheta}}_{i}(\tau)\mathrm{d}\tau=\frac{1}{\varepsilon}%
%TCIMACRO{\tint \nolimits_{t-\varepsilon}^{t}}%
%BeginExpansion
{\textstyle \int \nolimits_{t-\varepsilon}^{t}}
%EndExpansion
k_{i}M_{i}(\tau)Q^{\ast}\mathrm{d}\tau \\
+\frac{1}{2\varepsilon}%
%TCIMACRO{\tint \nolimits_{t-\varepsilon}^{t}}%
%BeginExpansion
{\textstyle \int \nolimits_{t-\varepsilon}^{t}}
%EndExpansion
k_{i}M_{i}(\tau)S^{\mathrm{T}}(\tau)\mathcal{H}S(\tau)\mathrm{d}\tau \\
+\frac{1}{2\varepsilon}%
%TCIMACRO{\tint \nolimits_{t-\varepsilon}^{t}}%
%BeginExpansion
{\textstyle \int \nolimits_{t-\varepsilon}^{t}}
%EndExpansion
k_{i}M_{i}(\tau)\tilde{\vartheta}^{\mathrm{T}}(\tau-D_{i}(\tau))\mathcal{H}%
\tilde{\vartheta}(\tau-D_{i}(\tau))\mathrm{d}\tau \\
+\frac{1}{\varepsilon}%
%TCIMACRO{\tint \nolimits_{t-\varepsilon}^{t}}%
%BeginExpansion
{\textstyle \int \nolimits_{t-\varepsilon}^{t}}
%EndExpansion
k_{i}M_{i}(\tau)S^{\mathrm{T}}(\tau)\mathcal{H}\tilde{\vartheta}(\tau
-D_{i}(\tau))\mathrm{d}\tau,\\
\text{ }t\geq D_{\text{\textrm{M}}}+\varepsilon,\text{ }i=1,\ldots,n.
\end{array}
\right.  \label{eq100}%
\end{equation}
In view of the definition of $\mathrm{sq}(t)$ in (\ref{eq55d}) and
$l_{i}=2^{i-1},$ $i=1,\ldots,n\mathbf{,}$ we can prove that for $\forall
i,j,k=1,\ldots,n\mathbf{,}$
\begin{equation}
\left.
\begin{array}
[c]{l}%
%TCIMACRO{\tint \nolimits_{t-\varepsilon}^{t}}%
%BeginExpansion
{\textstyle \int \nolimits_{t-\varepsilon}^{t}}
%EndExpansion
\mathrm{sq}\left(  \frac{l_{i}\tau}{\varepsilon}\right)  \mathrm{d}\tau=0,\\
\int_{t-\varepsilon}^{t}\mathrm{sq}\left(  \frac{l_{i}\tau}{\varepsilon
}\right)  \mathrm{sq}\left(  \frac{l_{j}\tau}{\varepsilon}\right)
\mathrm{sq}\left(  \frac{l_{k}\tau}{\varepsilon}\right)  \mathrm{d}%
\tau=0,\text{ }\\
\int \nolimits_{t-\varepsilon}^{t}\frac{a_{i}}{a_{j}}\mathrm{sq}\left(
\frac{l_{i}\tau}{\varepsilon}\right)  \mathrm{sq}\left(  \frac{l_{j}\tau
}{\varepsilon}\right)  \mathrm{d}\tau=\left \{
\begin{array}
[c]{cc}%
\varepsilon, & i=j,\\
0, & i\neq j.
\end{array}
\right.
\end{array}
\right.  \label{eq85a}%
\end{equation}
Due to (\ref{eq85a}), relations (\ref{eq48})-(\ref{eq51}) hold for $S(t)$ and
$M(t)$ defined by (\ref{eq93}), $\tilde{\vartheta}(t)=[\tilde{\vartheta}%
_{1}(t),\ldots,\tilde{\vartheta}_{n}(t)]^{\mathrm{T}}$ satisfying
(\ref{eq84a}) and $D_{i}(t)$ given by (\ref{eq84b}). Let $G_{i}(t)$ be of the
form of (\ref{eq20})-(\ref{eq20c}). Then we can present%
\begin{equation}
\left.
\begin{array}
[c]{l}%
\frac{1}{\varepsilon}%
%TCIMACRO{\tint \nolimits_{t-\varepsilon}^{t}}%
%BeginExpansion
{\textstyle \int \nolimits_{t-\varepsilon}^{t}}
%EndExpansion
\dot{\tilde{\vartheta}}_{i}(\tau)\mathrm{d}\tau=\frac{\mathrm{d}}{\mathrm{d}%
t}\left[  \tilde{\vartheta}_{i}(t)-G_{i}(t)\right]  .
\end{array}
\right.  \label{eq17}%
\end{equation}

Substituting (\ref{eq48})-(\ref{eq51}) and (\ref{eq17}) into (\ref{eq100}) and
employing notations (\ref{eq20a}) and (\ref{eq52a}), we finally arrive at%
\begin{equation}
\left.
\begin{array}
[c]{l}%
\dot{z}_{i}(t)=k_{i}\bar{h}_{i}z_{i}(t-D_{i}(t))\\
\text{ \  \  \  \  \  \  \  \ }+\bar{w}_{i}(t),\text{ }t\geq D_{\text{\textrm{M}}%
}+\varepsilon,\text{ }i=1,\ldots,n,\\
\bar{w}_{i}(t)=k_{i}\bar{h}_{i}G_{i}(t-D_{i}(t))-Y_{1i}(t)\\
\text{ \  \  \ }\  \  \  \  \ -Y_{2i}(t)+Y_{3i}(t).
\end{array}
\right.  \label{eq95}%
\end{equation}
Similar to the arguments in Section \ref{sec3}, we can use the solution
representation formula in Lemma \ref{lemma1} for (\ref{eq95}) to find the
bound on $|z_{i}|,$ and then the bound on $\tilde{\vartheta}_{i}%
:|\tilde{\vartheta}_{i}|\leq|z_{i}|+|G_{i}|.$ Comparatively to
\cite{zfo21cdc,zfo22tac}, this will greatly simplify the stability analysis
process along with the stability conditions, and improve the quantitative
bounds on the parameter $\varepsilon$ and time-delays $D_{i}$ $(i=1,\ldots,n)$.

Similar to Theorem \ref{theorem1}, we will find $k_{i}$ $(i=1,\ldots,n)$ from
the inequalities (\ref{eq23a}) with $\bar{D}_{i}=D_{i}$, which guarantee the
exponential stability with a decay rate $\delta_{i}$ in (\ref{eq23b}) of the
averaged system (\ref{eq23c}).

\begin{theorem}
\label{theorem2}Assume \textbf{A1-A3} hold. Let $k_{i}$ $(i=1,\ldots,n)$
satisfy (\ref{eq23a}) with $\bar{D}_{i}=D_{i}$. Given tuning parameters
$a_{i}$ $(i=1,\ldots,n)$ and $\kappa \geq0$ as well as $\bar{\sigma}_{i}%
>\bar{\sigma}_{0i}>0$ $(i=1,\ldots,n)$, let there exists $\varepsilon^{\ast
}>0$ that satisfy%
\begin{equation}
\left.
\begin{array}
[c]{l}%
\Phi_{1}^{i}=\left \vert k_{i}\bar{h}_{i}\right \vert D_{i\mathrm{M}%
}(\varepsilon^{\ast})-\frac{1}{\mathrm{e}}\leq0,\text{ }i=1,\ldots
,n\mathbf{,}\\
\Phi_{2}^{i}=\mathrm{e}^{\left \vert k_{i}\bar{h}_{i}\right \vert D_{i\mathrm{M}%
}(\varepsilon^{\ast})}\left[  \bar{\sigma}_{0i}+\frac{3\varepsilon^{\ast
}\left \vert k_{i}\right \vert \Delta_{1}}{2\left \vert a_{i}\right \vert }%
+\frac{W_{i}(\varepsilon^{\ast},\kappa)}{\left \vert k_{i}\bar{h}%
_{i}\right \vert }\right] \\
\text{ \  \  \  \ }+\frac{\varepsilon^{\ast}\left \vert k_{i}\right \vert
\Delta_{1}}{2\left \vert a_{i}\right \vert }-\bar{\sigma}_{i}<0,\text{
}i=1,\ldots,n\mathbf{,}%
\end{array}
\right.  \label{eq108}%
\end{equation}
where%
\[
\left.
\begin{array}
[c]{l}%
D_{i\mathrm{M}}=D_{i\mathrm{M}}(\varepsilon)=D_{i}+\frac{\varepsilon}{2^{n}%
},\text{ }i=1,\ldots,n\mathbf{,}\\
W_{i}(\varepsilon,\kappa)=\frac{\varepsilon \left \vert k_{i}\right \vert
}{\left \vert a_{i}\right \vert }\left(  \frac{\left \vert k_{i}\bar{h}%
_{i}\right \vert }{2}\Delta_{1}+\tilde{\Delta}_{1}+\tilde{\Delta}_{2}\right)
+\kappa \left \vert k_{i}\right \vert \sqrt{%
%TCIMACRO{\tsum \limits_{j=1}^{n}}%
%BeginExpansion
{\textstyle \sum \limits_{j=1}^{n}}
%EndExpansion
\bar{\sigma}_{j}^{2}},\\
\tilde{\Delta}_{1}=\frac{\left(  2^{n-1}+1\right)  \Delta_{1}}{2^{n}}\left(
%TCIMACRO{\tsum \limits_{j=1}^{n}}%
%BeginExpansion
{\textstyle \sum \limits_{j=1}^{n}}
%EndExpansion
\frac{\left \vert \bar{h}_{j}k_{j}\right \vert \bar{\sigma}_{j}}{\left \vert
a_{j}\right \vert }+\kappa \sqrt{%
%TCIMACRO{\tsum \limits_{j=1}^{n}}%
%BeginExpansion
{\textstyle \sum \limits_{j=1}^{n}}
%EndExpansion
\bar{\sigma}_{j}^{2}}\sqrt{%
%TCIMACRO{\tsum \limits_{j=1}^{n}}%
%BeginExpansion
{\textstyle \sum \limits_{j=1}^{n}}
%EndExpansion
\frac{k_{j}^{2}}{a_{j}^{2}}}\right)  ,\\
\tilde{\Delta}_{2}=\frac{\left(  2^{n-1}+1\right)  \Delta_{1}}{2^{n}}\left(
%TCIMACRO{\tsum \limits_{j=1}^{n}}%
%BeginExpansion
{\textstyle \sum \limits_{j=1}^{n}}
%EndExpansion
\left \vert \bar{h}_{j}k_{j}\right \vert +\kappa \sqrt{%
%TCIMACRO{\tsum \limits_{j=1}^{n}}%
%BeginExpansion
{\textstyle \sum \limits_{j=1}^{n}}
%EndExpansion
\bar{\sigma}_{j}^{2}}\sqrt{%
%TCIMACRO{\tsum \limits_{j=1}^{n}}%
%BeginExpansion
{\textstyle \sum \limits_{j=1}^{n}}
%EndExpansion
\frac{k_{j}^{2}}{a_{j}^{2}}}\right)
\end{array}
\right.
\]
with $\Delta_{1}$ given by (\ref{eq38}). Then for all $\varepsilon
\in(0,\varepsilon^{\ast}]$ satisfying (\ref{eq85}), the following holds:

\textbf{(i)} Solutions of (\ref{eq84a}) with $|\tilde{\vartheta}%
_{i}(D_{\mathrm{M}})|\leq \bar{\sigma}_{0i}$ satisfying the following bounds:%
\[
\left.
\begin{array}
[c]{l}%
|\tilde{\vartheta}_{i}(t)|<\left \vert \tilde{\vartheta}_{i}(D_{\mathrm{M}%
})\right \vert +\frac{\varepsilon \left \vert k_{i}\right \vert \Delta_{1}%
}{\left \vert a_{i}\right \vert }<\bar{\sigma}_{i},\text{ }t\in \lbrack
D_{\mathrm{M}},D_{\mathrm{M}}+\varepsilon],\\
|\tilde{\vartheta}_{i}(t)|<\left(  1+\left \vert k_{i}\bar{h}_{i}\right \vert
D_{i\mathrm{M}}\right)  \left[  \left \vert \tilde{\vartheta}_{i}%
(D_{\mathrm{M}})\right \vert +\frac{3\varepsilon \left \vert k_{i}\right \vert
\Delta_{1}}{2\left \vert a_{i}\right \vert }\right] \\
\text{ \  \  \  \  \  \  \  \  \ }+D_{i\mathrm{M}}W_{i}(\varepsilon,\kappa
)+\frac{\varepsilon \left \vert k_{i}\right \vert \Delta_{1}}{2\left \vert
a_{i}\right \vert }\\
\text{ \  \  \  \  \  \  \ }<\bar{\sigma}_{i},\text{ }t\in \lbrack D_{\mathrm{M}%
}+\varepsilon,D_{\mathrm{M}}+D_{i\mathrm{M}}+\varepsilon],\\
|\tilde{\vartheta}_{i}(t)|<\mathrm{e}^{-\left \vert k_{i}\bar{h}_{i}\right \vert
\left(  t-D_{\mathrm{M}}-2D_{i\mathrm{M}}-\varepsilon \right)  }\left[
\left \vert \tilde{\vartheta}_{i}(D_{\mathrm{M}})\right \vert +\frac
{3\varepsilon \left \vert k_{i}\right \vert \Delta_{1}}{2\left \vert
a_{i}\right \vert }\right] \\
\text{ \  \  \  \  \  \  \  \ }+\frac{\mathrm{e}^{\left \vert k_{i}\bar{h}%
_{i}\right \vert D_{i\mathrm{M}}}}{\left \vert k_{i}\bar{h}_{i}\right \vert
}W_{i}(\varepsilon,\kappa)+\frac{\varepsilon \left \vert k_{i}\right \vert
\Delta_{1}}{2\left \vert a_{i}\right \vert }\\
\text{ \  \  \  \  \  \ }<\bar{\sigma}_{i},\text{ }t\geq D_{\mathrm{M}%
}+D_{i\mathrm{M}}+\varepsilon.
\end{array}
\right.
\]
These solutions are exponentially attracted to the box%
\begin{equation}
\left.
\begin{array}
[c]{c}%
\vert \tilde{\vartheta}_{i}(t) \vert<\Theta_{i}\triangleq \frac{\mathrm{e}%
^{\left \vert k_{i}\bar{h}_{i}\right \vert D_{i\mathrm{M}}}}{\left \vert
k_{i}\bar{h}_{i}\right \vert }W_{i}(\varepsilon,\kappa)+\frac{\varepsilon
\left \vert k_{i}\right \vert \Delta_{1}}{2\left \vert a_{i}\right \vert },\\
i=1,\ldots,n
\end{array}
\right.  \label{eq74a}%
\end{equation}
with decay rates $\delta_{i}=\left \vert k_{i}\bar{h}_{i}\right \vert $, which
are independent of $\varepsilon$ and delays $D_{i}(t).$

\textbf{(ii)} Consider $\hat{\theta}(t)=U^{\mathrm{T}}\hat{\vartheta}(t)$ with
$U=[b_{ij}]_{n\times n},$ where $\hat{\vartheta}(t)$ is defined by
(\ref{eq55c}). Then the estimation errors $\tilde{\theta}_{i}(t)=\hat{\theta
}_{i}(t)-\theta^{\ast}$ such that $|\tilde{\theta}_{i}(D_{\mathrm{M}}%
)|\leq \sigma_{0i}$ with $%
%TCIMACRO{\tsum \nolimits_{i=1}^{n} }%
%BeginExpansion
{\textstyle \sum \nolimits_{i=1}^{n}}
%EndExpansion
\vert b_{ji} \vert \sigma_{0i}=\bar{\sigma}_{0j}$ $(j=1,\ldots,n\mathbf{)}$ are
exponentially attracted to the box $\vert \tilde{\theta}_{i}(t) \vert
<\sum_{j=1}^{n}b_{ji}\Theta_{j}$ $(i=1,\ldots,n)$ with a decay rate
$\delta=\min_{i=1,\ldots,n}\delta_{i}=\min_{i=1,\ldots,n}\left \vert k_{i}%
\bar{h}_{i}\right \vert ,$ where $\Theta_{j}$ is defined by (\ref{eq74a}).
\end{theorem}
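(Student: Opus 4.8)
The plan is to follow the proof of Theorem~\ref{theorem1} in structure, working from the reduced scalar delay equation~(\ref{eq95}). For each $i$ the variable $z_i$ obeys a scalar delay differential equation of exactly the form~(\ref{eq30}) in Lemma~\ref{lemma1}, with $a=|k_i\bar h_i|>0$, time-varying delay $D_i(t)\in[D_i,\,D_i+\varepsilon/2^n]=[D_i,\,D_{i\mathrm{M}}]$ by~(\ref{eq83})--(\ref{eq84b}), and forcing term $f=\bar w_i$. The crucial structural simplification relative to Section~\ref{sec3} is that the exact dither identities~(\ref{eq84}) eliminate the measurement-delay-uncertainty term $w_i$, so $\bar w_i$ consists only of the four terms $k_i\bar h_i G_i(t-D_i(t))$, $-Y_{1i}$, $-Y_{2i}$, $Y_{3i}$; this is why no $\mu$ appears in Theorem~\ref{theorem2} and why the square-wave averaging constant $(2^{n-1}+1)/2^n$ replaces the factor $2(1+4\mu)$, yielding $\tilde\Delta_1,\tilde\Delta_2$ in place of $\bar\Delta_1,\bar\Delta_2$.

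First I would run the standard a~priori (bootstrap) argument: assume $|\tilde\vartheta_i(t)|<\bar\sigma_i$ holds on the current interval and derive the bounds listed in part~(i). Under this hypothesis the quadratic quantity $\bar Q(\tilde\vartheta(\cdot))$ in~(\ref{eq20c}) is bounded by $\Delta_1$, and since for the square-wave dither $|M_i|\le 1/|a_i|$, one obtains $|\dot{\tilde\vartheta}_i|\le |k_i|\Delta_1/|a_i|$ from~(\ref{eq84a}). These two bounds feed the estimates of $G_i$, of the double-integral terms $Y_{1i},Y_{2i}$, and of $Y_{3i}$ via~(\ref{eq20}) and~(\ref{eq20a}); collecting them gives a bound on $|\bar w_i(t)|$ that is a combination of terms of order $\varepsilon$ (from $G_i,Y_{1i},Y_{2i}$) and a term of order $\kappa$ (from $Y_{3i}$), all absorbed into the quantity $W_i(\varepsilon,\kappa)$.

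Next I would invoke Lemma~\ref{lemma1}. The condition~(\ref{eq23a}) with $\bar D_i=D_i$, strengthened to $\Phi_1^i=|k_i\bar h_i|D_{i\mathrm{M}}-1/\mathrm e\le0$, is exactly the hypothesis $LA\le1/\mathrm e$ needed for the positivity and the two-sided exponential envelope of the fundamental solution $X_i(t,s)$. Applying the variation-of-constants formula~(\ref{eq77}) and using $0<X_i(t,s)\le1$ on $[s,s+D_{i\mathrm{M}}]$ and $X_i(t,s)\le \mathrm e^{-|k_i\bar h_i|(t-s-D_{i\mathrm{M}})}$ afterwards, one bounds $|z_i(t)|$ by the decaying contribution of the initial data on $[0,D_{\mathrm M}+\varepsilon]$ plus the convolution $\int X_i(t,s)|\bar w_i(s)|\,\mathrm ds$; controlling the latter by the envelope of $X_i$ together with the $W_i(\varepsilon,\kappa)$ estimate produces the constant term $\mathrm e^{|k_i\bar h_i|D_{i\mathrm M}}W_i/|k_i\bar h_i|$ of~(\ref{eq74a}). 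Passing back through $|\tilde\vartheta_i|\le|z_i|+|G_i|$ with $|G_i|\le \varepsilon|k_i|\Delta_1/(2|a_i|)$ yields, interval by interval, the three bounds of part~(i); the strict inequalities $\Phi_2^i<0$ in~(\ref{eq108}) are precisely what guarantees each bound stays below $\bar\sigma_i$, so the bootstrap hypothesis is never violated and the argument closes. Exponential attraction to the box $\Theta_i$ then follows on letting $t\to\infty$, the initial-data term decaying at rate $\delta_i=|k_i\bar h_i|$, and part~(ii) is immediate from $\tilde\theta(t)=U^{\mathrm T}\tilde\vartheta(t)$ and the triangle inequality, giving $|\tilde\theta_i|<\sum_j|b_{ji}|\Theta_j$ with decay rate $\min_i\delta_i$.

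I expect the main obstacle to be the self-consistent bounding of $Y_{1i}$ and $Y_{2i}$: they contain $\dot{\tilde\vartheta}(s)$ under a double integral over intervals of width of order $\varepsilon$, so one must first secure the a~priori $L^\infty$ bound on $\dot{\tilde\vartheta}$ under the very hypothesis $|\tilde\vartheta_i|<\bar\sigma_i$ one is trying to prove. Resolving this apparent circularity is the delicate point---it works because the dominant returning term retains the decaying fundamental-solution structure while every self-referential correction is of order $\varepsilon$ or $\kappa$ and hence strictly dominated once $\Phi_2^i<0$; care with the square-wave averaging constants (the $(2^{n-1}+1)/2^n$ factor and the $1/|a_i|$ amplitude of $M_i$) is what distinguishes this computation from the sine case of Theorem~\ref{theorem1}.
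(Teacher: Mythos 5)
Your proposal is correct and follows essentially the same route as the paper: the paper itself proves Theorem~\ref{theorem2} by declaring it "similar to that of Theorem~\ref{theorem1}", i.e., exactly the three-stage argument you outline (a priori bounds under the bootstrap hypothesis $|\tilde{\vartheta}_i|<\bar{\sigma}_i$, variation of constants via Lemma~\ref{lemma1} with the positive fundamental solution under $\Phi_1^i\leq 0$, and closure of the bootstrap by contradiction using $\Phi_2^i<0$). You also correctly identify the only substantive adaptations: the disappearance of $w_i$ (hence of $\mu$) since there is no measurement-delay uncertainty and the identities~(\ref{eq84}) are exact, the halved constants coming from $|M_i|\leq 1/|a_i|$ for the square wave, and the replacement of $2(1+4\mu)$ by $(2^{n-1}+1)/2^n$ arising from the sampled-data delay variation $0\leq\eta_i(t)<\varepsilon/2^n$ in the double integrals bounding $Y_{1i},Y_{2i}$.
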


\begin{proof}
The proof is similar to that of Theorem \ref{theorem1} and omitted.
\end{proof}

\begin{remark}
For the single input delay $D_{i}=D>0$ $(i=1,\ldots,n)$, by using (\ref{eq43})
we have
$
S(s_{p})=S(t-D), M(s_{p})=M(t-D), t\in \lbrack
t_{p},t_{p+1})
$
with $S(t)$ and $M(t)$ given by (\ref{eq93}). Note that the equations in
(\ref{eq85a}) still hold with $\tau$ replaced by $\tau-D.$ Thus, the result of
Theorem \ref{theorem2} holds for all $\varepsilon \in(0,\varepsilon^{\ast}],$
since there is no need to use (\ref{eq85}).
\end{remark}

\begin{remark}
In \cite{zfo22tac}, the single input delay $D$ was treated under restrictive
assumption that $D=$\textrm{O}$(\varepsilon)$ with $\varepsilon$ being a small
parameter. Comparatively to that, our results in Theorem \ref{theorem2} not
only allow distinct $D_{i},$ but also allow $D_{i}$ to be essentially larger
with arbitrary large constant $\bar D_i$. Actually, Theorem \ref{theorem2} guarantees
for any $D_{i}\geq0$ semi-global convergence for small enough $\varepsilon
^{\ast}$ and $\kappa.$ Moreover, Theorem \ref{theorem2} presents much simpler
stability conditions, which allow to get larger decay rate and period of the
dither signal in the numerical examples. Our UB on the estimation error in
(\ref{eq74a}) is of the order of $\mathrm{O}(\varepsilon)$ provided that
$\kappa=0$ and $a_{i},k_{i}$ $(i=1,\ldots,n)$ are of the order of
$\mathrm{O}(1).$ This is smaller than $\mathrm{O}(\sqrt{\varepsilon})$
achieved in \cite{zfo22tac}. In addition, \cite{zfo22tac} has no results for the uncertain $H$. As a comparison, by using the
presented time-delay approach, we can easily solve the uncertainty case.
\end{remark}

Parallel to Corollary \ref{corollary0}, we have:

\begin{corollary}
\label{corollary4} Assume that \textbf{A2-A4} hold. Given any $D_{i}>0$
$(i=1,\ldots,n)$ and $\bar{\sigma}_{0i}$ $>0$ (also $\sigma_{0i}>0$)
$(i=1,\ldots,n),$ and choosing $k_{i}$ $(i=1,\ldots,n)$ to satisfy
(\ref{eq23a}) with $\bar{D}_{i}=D_{i}$, the ES algorithm converges for small
enough $\varepsilon^{\ast}$ and $\kappa.$
\end{corollary}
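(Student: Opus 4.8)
The plan is to show that, for the prescribed choice of $k_i$, the two scalar inequalities (\ref{eq108}) of Theorem \ref{theorem2} can be met once $\varepsilon^*$ and $\kappa$ are taken small enough; practical stability (hence convergence) then follows at once from that theorem. Since Theorem \ref{theorem2} leaves the tuning parameters $a_i$ and the target box sizes $\bar{\sigma}_i$ free, I would fix the $a_i$ to be any nonzero reals and determine $\bar{\sigma}_i$ from the already-chosen $k_i$. For the first inequality, note that $D_{i\mathrm{M}}(\varepsilon^*) = D_i + \varepsilon^*/2^n \to D_i$ as $\varepsilon^* \to 0$. Because the $k_i$ satisfy (\ref{eq23a}) with $\bar{D}_i = D_i$, we have $|k_i \bar{h}_i| D_i - 1/\mathrm{e} < 0$, so $\Phi_1^i \to |k_i \bar{h}_i| D_i - 1/\mathrm{e} < 0$ and by continuity $\Phi_1^i \le 0$ for all sufficiently small $\varepsilon^*$.

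For the second inequality I would examine the limit of $\Phi_2^i$ as $\varepsilon^* \to 0$ and $\kappa \to 0$. With $a_i$ and $\bar{\sigma}_i$ fixed, $\Delta_1$ in (\ref{eq38}) stays bounded as $\kappa \to 0$, the quantities $\tilde{\Delta}_1, \tilde{\Delta}_2$ remain bounded, and hence $W_i(\varepsilon^*, \kappa) = \mathrm{O}(\varepsilon^*) + \mathrm{O}(\kappa) \to 0$; the explicit terms $\frac{3\varepsilon^* |k_i| \Delta_1}{2|a_i|}$ and $\frac{\varepsilon^* |k_i| \Delta_1}{2|a_i|}$ also vanish. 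Consequently $\Phi_2^i \to \mathrm{e}^{|k_i \bar{h}_i| D_i}\bar{\sigma}_{0i} - \bar{\sigma}_i$. Since $|k_i \bar{h}_i| D_i < 1/\mathrm{e}$, the prefactor $\mathrm{e}^{|k_i \bar{h}_i| D_i}$ is a finite number in $(1, \mathrm{e}^{1/\mathrm{e}})$, so choosing $\bar{\sigma}_i > \mathrm{e}^{|k_i \bar{h}_i| D_i}\bar{\sigma}_{0i}$ --- which automatically gives $\bar{\sigma}_i > \bar{\sigma}_{0i}$ as required --- makes this limit strictly negative, whence $\Phi_2^i < 0$ for all sufficiently small $\varepsilon^*$ and $\kappa$.

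With both conditions (\ref{eq108}) in force, Theorem \ref{theorem2} yields exponential attraction to the box $|\tilde{\vartheta}_i(t)| < \Theta_i$ in (\ref{eq74a}) with the $\varepsilon$- and delay-independent decay rate $\delta_i = |k_i \bar{h}_i|$; since $W_i \to 0$ and the residual $\varepsilon$-term vanishes, $\Theta_i \to 0$ as $\varepsilon^*, \kappa \to 0$, so the ultimate bound can be made as small as desired, and part (ii) of the theorem transfers this to $\tilde{\theta}_i$ through $U = [b_{ij}]$. The only delicate point is the limiting analysis of $\Phi_2^i$: one must verify that the exponential prefactor multiplying $\bar{\sigma}_{0i}$ does not push the admissible $\bar{\sigma}_i$ out of range, which is exactly where the hypothesis $|k_i \bar{h}_i| D_i < 1/\mathrm{e}$ of (\ref{eq23a}) is used --- it caps the prefactor by the universal constant $\mathrm{e}^{1/\mathrm{e}}$ and thus always leaves room for a valid choice $\bar{\sigma}_i > \bar{\sigma}_{0i}$.
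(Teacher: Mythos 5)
Your proposal is correct and takes essentially the approach the paper intends: the paper states this corollary without proof, as an immediate (``parallel to Corollary \ref{corollary0}'') consequence of Theorem \ref{theorem2}, and your limiting analysis of $\Phi_{1}^{i}$ and $\Phi_{2}^{i}$ as $\varepsilon^{\ast},\kappa\rightarrow0$ with $a_{i}$, $\bar{\sigma}_{i}$ fixed is exactly the verification implicit there. In particular, your observation that $\mathrm{e}^{\left\vert k_{i}\bar{h}_{i}\right\vert D_{i}}<\mathrm{e}^{1/\mathrm{e}}$ always leaves room for a choice $\bar{\sigma}_{i}>\mathrm{e}^{\left\vert k_{i}\bar{h}_{i}\right\vert D_{i}}\bar{\sigma}_{0i}>\bar{\sigma}_{0i}$, together with $\Theta_{i}\rightarrow0$, correctly supplies the details the paper leaves unstated.
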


\subsection{Single-variable static map}

Finally, we consider the single-variable static map (\ref{eq75b}) with
\textbf{A2'},\textbf{ A3}, \textbf{(}\ref{eq75}\textbf{) }and large known
input delay $D>0.$ Let the dither signals $S(t)$ and $M(t)$ satisfy%
\[
\left.  S(t)=a\cdot \mathrm{sq}\left(  \frac{t}{\varepsilon}\right)  ,\text{
}M(t)=\frac{1}{a}\cdot \mathrm{sq}\left(  \frac{t}{\varepsilon}\right)
\right.
\]
with $\varepsilon$ being a small parameter, and \{$s_{p}$\}, \{$t_{p}$\}
denote the sampling and control update instants, respectively, satisfying%
\[
\left.  s_{p}=\frac{p\varepsilon}{2},\text{ }t_{p}=s_{p}+D,\text{ }%
p\in \mathbf{Z}_{+}.\right.
\]
The ES algorithm is designed as%
\begin{equation}
\left.  \dot{\hat{\theta}}(t)=\left \{
\begin{array}
[c]{ll}%
0, & t\in \lbrack0,D),\\
kM(s_{p})y(s_{p}), & t\in \lbrack t_{p},t_{p+1})\text{ },p\in \mathbf{Z}_{+},
\end{array}
\right.  \right.  \label{eq830}%
\end{equation}
where $y(t)=Q(\theta(t))$ with $Q(\theta(t))$ given by (\ref{eq75b}) and
$\theta(t)=\hat{\theta}(t)+S(t).$ Denote%
\begin{equation}
\left.  \eta(t)=t-D-s_{p},\text{ }t\in \lbrack t_{p},t_{p+1}),\text{ }0\leq
\eta(t)<\frac{\varepsilon}{2^{n}}.\right.  \label{eq83a}%
\end{equation}
In addition, we can present%
\begin{equation}
\left.  M(s_{p})=M(t-D),\text{ }S(s_{p})=S(t-D),\text{ }t\in \lbrack
t_{p},t_{p+1}).\right.  \label{eq83b}%
\end{equation}
Taking into account (\ref{eq830})-(\ref{eq83b}), the estimation error
$\tilde{\theta}(t)=\hat{\theta}(t)-\theta^{\ast}$ is governed by%
\begin{equation}
\left.
\begin{array}
[c]{l}%
\dot{\tilde{\theta}}(t)=kM(t-D)Q^{\ast}+\frac{1}{2}khM(t-D)\\
\text{ \  \  \  \  \  \  \  \ }\times \lbrack S(t-D)+\tilde{\theta}(t-D(t))]^{2}%
,\text{ }t\geq D,\\
\tilde{\theta}(t)=\tilde{\theta}(D),\text{ }t\in \lbrack0,D].
\end{array}
\right.  \label{eq76b}%
\end{equation}
Applying the time-delay approach to averaging of (\ref{eq76b}), we finally
arrive at%
\[
\left.
\begin{array}
[c]{l}%
\dot{z}(t)=khz(t-D(t))+\bar{w}(t),\text{ }t\geq D+\varepsilon,\text{ }%
i\in \mathbf{I[}1,n\mathbf{]},\\
\bar{w}(t)=khG(t-D(t))-Y_{1}(t)-Y_{2}(t)
\end{array}
\right.
\]
with $\{z(t),G(t),Y_{1}(t),Y_{2}(t)\}$ satisfying updated (\ref{eq52a}),
(\ref{eq20}) and (\ref{eq20a}), respectively.

Similar to Corollary \ref{corollary2}, we will find $k$ from the inequality
(\ref{eq78}) with $D^{\mathrm{in}}+D=D$, which guarantees the exponential
stability with a decay rate $\delta$ in (\ref{eq78a}) of the averaged system
(\ref{eq78b}).

\begin{corollary}
\label{corollary3}Assume that \textbf{A2'},\textbf{ A3} and \textbf{(}%
\ref{eq75}\textbf{) }hold\textbf{. }let $k$ satisfy (\ref{eq78}) with
$D^{\mathrm{in}}+D=D$. Given tuning parameters $a$ and $\sigma>\sigma_{0}>0$,
let there exists $\varepsilon^{\ast}>0$ that satisfy%
\[
\left.
\begin{array}
[c]{l}%
\Phi_{1}=\left \vert k\right \vert h_{\mathrm{M}}\left(  D+\frac{\varepsilon
^{\ast}}{2}\right)  -\frac{1}{\mathrm{e}}\leq0\mathbf{,}\\
\Phi_{2}=\mathrm{e}^{\left \vert k\right \vert h_{\mathrm{M}}\left(
D+\varepsilon^{\ast}/2\right)  }\left[  \sigma_{0}+\frac{3\varepsilon^{\ast
}\Delta}{4}+W(\varepsilon^{\ast})\right] \\
\text{ \  \  \  \ }+\frac{\varepsilon^{\ast}\Delta}{4}-\sigma<0\mathbf{,}%
\end{array}
\right.
\]
where
\[
\left.  W(\varepsilon)=\frac{\varepsilon \Delta}{4}\left(  3+\frac{2\sigma
}{\left \vert a\right \vert }\right)  \right.
\]
and $\Delta$ is given in (\ref{eq109}). Then for all $\varepsilon
\in(0,\varepsilon^{\ast}],$ the solution $\tilde{\theta}(t)$ with
$|\tilde{\theta}(D)|\leq \sigma_{0}$ satisfies%
\[
\left.
\begin{array}
[c]{l}%
|\tilde{\theta}(t)|<\left \vert \tilde{\theta}(D)\right \vert +\frac
{\varepsilon}{2}\Delta<\sigma,\text{ }t\in \lbrack D,D+\varepsilon],\\
|\tilde{\theta}(t)|<\left[  1+\left \vert k\right \vert h_{\mathrm{M}}\left(
D+\frac{\varepsilon}{2}\right)  \right]  \left[  \left \vert \tilde{\theta
}(D)\right \vert +\frac{3\varepsilon \Delta}{4}\right] \\
\text{ \  \  \  \  \  \  \  \  \ }+\left(  D+\frac{\varepsilon}{2}\right)  \left \vert
k\right \vert h_{\mathrm{M}}W(\varepsilon)+\frac{\varepsilon \Delta}{4}\\
\text{ \  \  \  \  \  \  \ }<\sigma,\text{ }t\in \left[  D+\varepsilon,2D+\frac
{3\varepsilon}{2}\right]  .\\
|\tilde{\theta}(t)|<\mathrm{e}^{-\left \vert k\right \vert h_{\mathrm{m}}\left(
t-2D-3\varepsilon/2\right)  }\mathrm{e}^{\left \vert k\right \vert
h_{\mathrm{M}}\left(  D+\varepsilon/2\right)  }\left[  \left \vert
\tilde{\vartheta}(D)\right \vert +\frac{3\varepsilon \Delta}{4}\right] \\
\text{ \  \  \  \  \  \  \  \  \ }+\mathrm{e}^{\left \vert k\right \vert h_{\mathrm{M}%
}\left(  D+\varepsilon/2\right)  }W(\varepsilon)+\frac{\varepsilon \Delta}{4}\\
\text{ \  \  \  \  \  \  \ }<\sigma,\text{ }t\geq2D+\frac{3\varepsilon}{2}.
\end{array}
\right.
\]
This solution is exponentially attracted to the interval%
\[
\left.  |\tilde{\theta}(t)|<\mathrm{e}^{\left \vert k\right \vert h_{\mathrm{M}%
}\left(  D+\varepsilon/2\right)  }W(\varepsilon)+\frac{\varepsilon \Delta}%
{4}\right.
\]
with a decay rate $\delta=\left \vert k\right \vert h_{\mathrm{m}}.$
\end{corollary}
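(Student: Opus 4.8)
The plan is to mirror the argument of Corollary~\ref{corollary2}, now with a genuinely delayed nominal part, by analyzing the already-derived retarded system
\[
\dot z(t)=kh\,z(t-D(t))+\bar w(t),\quad t\ge D+\varepsilon,
\]
with $\bar w(t)=khG(t-D(t))-Y_1(t)-Y_2(t)$, and then transferring the resulting estimates to $\tilde\theta$ through $|\tilde\theta(t)|\le|z(t)|+|G(t)|$. First I would rewrite the equation as $\dot z(t)+a\,z(t-D(t))=\bar w(t)$ with $a=-kh=|kh|>0$ (the sign of $k$ being chosen so that $kh<0$ renders $\dot z=khz(t-D)$ Hurwitz), and observe from (\ref{eq83a}) with $n=1$ that $0\le D(t)<L:=D+\varepsilon/2$. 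Since $\Phi_1\le0$ forces $aL\le|k|h_{\mathrm M}(D+\varepsilon/2)\le1/\mathrm e$, Lemma~\ref{lemma1} applies: the fundamental solution $X(t,s)$ is strictly positive and satisfies $X(t,s)\le1$ on $s\le t\le s+L$ and $X(t,s)\le\mathrm e^{-a(t-s-L)}$ for $t\ge s+L$. Because $a=|kh|\ge|k|h_{\mathrm m}=\delta$, the exponential tail decays at least at the claimed rate $\delta$.

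Next I would invoke the variation of constants formula (\ref{eq77}) on $[D+\varepsilon,\infty)$ with forcing $f=\bar w$ and constant history determined by the bounds already available on $[D,D+\varepsilon]$. This separates the estimate into a homogeneous part carried by $X(t,\cdot)$ times the initial data and a forced part $\int X(t,s)\bar w(s)\,\mathrm ds$. To bound $\bar w$ I would use the square-wave orthogonality integrals (\ref{eq85a}) to show that $G$, $Y_1$ and $Y_2$ are each $\mathrm O(\varepsilon)$, collecting their magnitudes into $\Delta$ and $W(\varepsilon)$; here the quadratic term $\bar Q(\tilde\theta)$ is controlled by $\Delta$ of (\ref{eq109}) exactly when the a priori bound $|\tilde\theta|<\sigma$ is in force, since then $|M|\le1/|a|$ and $\bar Q\le Q_{\mathrm M}^{\ast}+\frac{h_{\mathrm M}}{2}(\sigma+|a|)^2$. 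Integrating the kernel (bounding $X\le1$ on the length-$L$ window near $s=t$ and using the exponential tail elsewhere) then converts the $\mathrm O(\varepsilon)$ bound on $\bar w$ into the steady contribution $\mathrm e^{|k|h_{\mathrm M}(D+\varepsilon/2)}W(\varepsilon)+\varepsilon\Delta/4$.

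The three displayed windows would then be treated in order. On $[D,D+\varepsilon]$ the averaging integral is not yet available, so I would integrate (\ref{eq76b}) directly: $|\dot{\tilde\theta}|\le\Delta/2$ yields $|\tilde\theta(t)|<|\tilde\theta(D)|+\varepsilon\Delta/2$. On $[D+\varepsilon,2D+3\varepsilon/2]$ the kernel is still in its $X\le1$ regime, producing the affine-in-initial-data bound with factor $1+|k|h_{\mathrm M}(D+\varepsilon/2)$, to which $|G|\le\varepsilon\Delta/4$ is added to pass from $z$ to $\tilde\theta$. For $t\ge2D+3\varepsilon/2$ the exponential factor $\mathrm e^{-|k|h_{\mathrm m}(t-2D-3\varepsilon/2)}$ governs the decay of the initial data while the forced term saturates at the attracting interval.

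The main obstacle is the circularity of the estimate: bounding $\bar w$ through $G,Y_1,Y_2$ (which contain $\tilde\theta(s)$ and $\dot{\tilde\theta}(s)$) presupposes the very bound $|\tilde\theta(t)|<\sigma$ that is to be proved. I would close this by a continuity/bootstrap argument, assuming $|\tilde\theta|<\sigma$ holds up to some first potential violation time, deriving the three displayed estimates, and then using $\Phi_2<0$ to verify that each right-hand side is strictly below $\sigma$, so the bound can never first be attained and hence persists for all $t$. The smallness of $\varepsilon$ encoded in $\Phi_2<0$ is exactly what keeps the $\mathrm O(\varepsilon)$ disturbance terms small enough to leave the box invariant; finally, since $\Phi_1$ and $\Phi_2$ are monotone in $\varepsilon$, validity at $\varepsilon^{\ast}$ propagates to every $\varepsilon\in(0,\varepsilon^{\ast}]$.
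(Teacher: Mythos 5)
Your proposal follows essentially the same route as the paper: the paper proves Corollary \ref{corollary3} by repeating the argument of Theorem \ref{theorem1} (via Corollary \ref{corollary2}) — a priori bounds on $G$, $Y_{1}$, $Y_{2}$ and hence on $\bar{w}$ under the working hypothesis $|\tilde{\theta}|<\sigma$, the variation of constants formula of Lemma \ref{lemma1} with the delay-dependent positivity and exponential bounds of the fundamental solution (here with $a=|kh|\geq|k|h_{\mathrm{m}}$ for the decay rate and $aL\leq|k|h_{\mathrm{M}}(D+\varepsilon/2)\leq 1/\mathrm{e}$ for applicability), the three time-window estimates, and a contradiction/bootstrap step closed by $\Phi_{2}<0$. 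Your outline matches this in every structural respect, including the correct square-wave constants ($|M|\leq 1/|a|$, hence $|\dot{\tilde{\theta}}|<\Delta/2$ and $|G|\leq\varepsilon\Delta/4$), so it is correct and not a genuinely different proof.
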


\subsection{Examples\label{sec4.2}}

Example 4.1. ($1\mathrm{D}$\textit{ static map}) Consider the single-input map
(\ref{eq75b}) with%
\begin{equation}
Q^{\ast}=0,\text{ }h=2.\label{eq113}%
\end{equation}
Note that%
\[
\left.
\begin{array}
[c]{ll}%
\dot{\hat{\theta}}(t)=ka\cdot \mathrm{sq}(\frac{t}{\varepsilon})y(t) & \text{in
\cite{zfo21cdc},}\\
\dot{\hat{\theta}}(t)=\frac{k}{a}\cdot \mathrm{sq}(\frac{t}{\varepsilon})y(t) &
\text{in Corollary \ref{corollary3}.}%
\end{array}
\right.
\]
For a fair comparison, we select the tuning parameters of the gradient-based
ES as%
\begin{equation}
\left.
\begin{array}
[c]{ll}%
a=0.1,\text{ }k=-1.3 & \text{in \cite{zfo21cdc},}\\
a=0.1,\text{ }k=-1.3\cdot0.1^{2}=-0.013 & \text{in Corollary \ref{corollary3}%
.}%
\end{array}
\right.  \label{eq114a}%
\end{equation}
If $Q^{\ast}$ and $h$ are uncertain and satisfy \textbf{A3} and (\ref{eq75}),
respectively, we consider%
\begin{align}
\left \vert Q^{\ast}\right \vert  &  \leq0.1,\text{ }1.9\leq \left \vert
h\right \vert \leq2.1;\label{eq114}\\
\left \vert Q^{\ast}\right \vert  &  \leq1.0,\text{ }1.0\leq \left \vert
h\right \vert \leq3.0.\label{eq115}%
\end{align}
The results that follow from Theorem \ref{theorem2}, Corollary
\ref{corollary3} and \cite{zfo21cdc} are shown in Table \ref{table4}. It
follows that our results allow larger decay rate $\delta$, larger upper bound
$\varepsilon^{\ast}$ (small lower bound frequency $\omega^{\ast}$), much
larger time-delay $D$ and much smaller UB than those in \cite{zfo21cdc}.
Moreover, our results allow much larger uncertainties in  $Q^{\ast}$ and $h$
than those in \cite{zfo21cdc}.

\begin{table}[h]
\caption{Example 4.1: Comparison of $\delta$, $D$, $\varepsilon^{\ast}$, $\omega^{\ast
}=2\pi/\varepsilon^{\ast}$ and UB}
\label{table4}%
\center {\tiny
\setlength{\tabcolsep}{3pt}
\begin{tabular}
[c]{|l|c|c|c|c|c|c|c|}\hline
ES: square & $\sigma_{0}$ & $\sigma$ & $\delta$ & $D$ & $\varepsilon^{\ast}$ &$\omega
^{\ast}$ & UB\\ \hline
Corollary \ref{corollary3} with (\ref{eq113}) & 1 & $\sqrt{2}$ & 0.026 & 1.0 &
0.071 &88.49 & $2.7\cdot \mathrm{10}^{-4}$\\ \hline
Theorem \ref{theorem2} with (\ref{eq113}) & 1 & $\sqrt{2}$ & 0.026 & 1.0 &
0.071 &88.49 & $2.7\cdot \mathrm{10}^{-4}$\\ \hline
\cite{zfo21cdc} with (\ref{eq113}) & 1 & $\sqrt{2}$ & 0.02 & 0.01 & 0.045 &139.63 &
0.04\\ \hline
Corollary \ref{corollary3} with (\ref{eq114}) & 1 & $\sqrt{2}$ & 0.0247 &
1.0 & 0.065 &96.66 & $2.0\cdot \mathrm{10}^{-3}$\\ \hline
Theorem \ref{theorem2} with (\ref{eq114}) & 1 & $\sqrt{2}$ & 0.026 & 0.5 &
0.052 &120.83 & $1.9\cdot \mathrm{10}^{-3}$\\ \hline
\cite{zfo21cdc} with (\ref{eq114}) & 1 & $\sqrt{2}$ & 0.02 & 0.01 & 0.036 &174.53 &
0.22\\ \hline
Corollary \ref{corollary3} with (\ref{eq115}) & 1 & $\sqrt{2}$ & 0.013 & 1.0 &
0.035 &179.52 & $1.1\cdot \mathrm{10}^{-2}$\\ \hline
Theorem \ref{theorem2} with (\ref{eq115}) & 1 & $\sqrt{2}$ & - & - & - & - &
-\\ \hline
\cite{zfo21cdc} with (\ref{eq115}) & 1 & $\sqrt{2}$ & - & - & - & -  & -\\ \hline
\end{tabular} }\end{table}

Example 4.2. ($2\mathrm{D}$\textit{ static map}) We consider the Example 3.2 in
Section \ref{sec2.2} with $D_{i}=D$ ($i=1,2$) and the parameters chosen as
\cite{zfo22tac}:
\begin{equation}
k_{1}=k_{2}=-0.01, { \ } a_{1}=a_{2}=0.2. \label{eq2aa}
\end{equation}
Verifying conditions
of Theorem \ref{theorem2} and \cite{zfo22tac}, we arrive at results shown in
Table \ref{table5}, in which UB$=($\textrm{\={B}}$_{1}^{2}+$\textrm{\={B}%
}$_{2}^{2})^{1/2}$ with \textrm{\={B}}$_{1}$ and \textrm{\={B}}$_{2}$ be the
ultimate bound values of $\tilde{x}(t)$ and $\tilde{y}(t)$, respectively. By
comparing the data, we find that our results in Theorem \ref{theorem2} allow
larger decay rate $\delta$, larger upper bound $\varepsilon^{\ast}$ (small
lower bound frequency $\omega^{\ast}$), much larger time-delay $D$ and much
smaller UB than those in \cite{zfo22tac}.

\begin{table}[h]
\caption{Example 4.2 with (\ref{eq2aa}): Comparison of $\delta$, $D$, $\varepsilon^{\ast}$, $\omega^{\ast
}=2\pi/\varepsilon^{\ast}$ and UB}
\label{table5}%
\center  {\tiny { \setlength{\tabcolsep}{3pt}
\begin{tabular}
[c]{|l|c|c|c|c|c|c|c|}\hline
ES: square & $\bar{\sigma}_{01}/\bar{\sigma}_{02}$ & $\bar{\sigma}_{1}/\bar{\sigma}_{2}$  & $\delta$ & $D$ & $\varepsilon^{\ast}$ & $\omega^{\ast}$ & UB \\ \hline
Theorem 2 & 1/1 & 2/2  & $0.02$ &$2$  &0.1 &62.83 & $1.6\cdot \mathrm{10}^{-3}$\\ \hline
\cite{zfo22tac} & 1/1 & 2/2  & $0.01$ &$0.01$  &0.09 &69.81& 0.18\\ \hline
\end{tabular}
} }\end{table}

Example 4.3 ($2\mathrm{D}$\textit{ static map}) We consider the Example 3.3 in Section
\ref{sec2.2} with the delays and tuning parameters chosen as%
\begin{equation}
\left.
\begin{array}
[c]{l}%
D_{1}=1.5,\text{ }D_{2}=2.5,\text{ }a_{1}=a_{2}=0.3,\\
k_{1}=0.6\cdot10^{-3},\text{ }k_{2}=0.4\cdot10^{-2}.
\end{array}
\right.  \label{eq27}%
\end{equation}
Following (\ref{eq85}), we choose $\varepsilon$ as in (\ref{eq60a}). The
results that follow from Theorem \ref{theorem2} are shown in Table
\ref{table6}, in which UB ($\mathrm{\bar{B}}_{i},$ $i=1,2$) corresponds to
$\varepsilon=1/2$ $(<\varepsilon^{\ast}=0.79)$ and $\varepsilon=1/4$
$(<\varepsilon^{\ast}=0.36)$ for known ($\kappa=0$) and uncertain ($\kappa=0.2$) $H,$ respectively. It
follows that Theorem \ref{theorem2} works well.

\begin{table}[h]
\caption{Example 4.3 with (\ref{eq27}): The values of $\varepsilon^{\ast}$, $\omega^{\ast}(2\pi
/\varepsilon^{\ast})$, $\delta_{i}$ (i=1,2) and UB ($\mathrm{\bar{B}}_{i},$
$i=1,2$)}
\label{table6}
\center  {\tiny { \setlength{\tabcolsep}{3pt}
\begin{tabular}
[c]{|l|c|c|c|c|c|c|}\hline
ES: square & $\bar{\sigma}_{01}/\bar{\sigma}_{02}$ & $\bar{\sigma}_{1}/\bar{\sigma}_{2}$ & $\varepsilon^{\ast}$ & $\omega
^{\ast}$ & $\delta_{1}=\delta_{2}$ & $\mathrm{\bar{B}}_{1}/\mathrm{\bar{B}}_{2}$\\ \hline
Known $H$ & 0.5/0.5 & 0.6/1.0 &0.79 &7.95& $0.031$ & $4.1\cdot \mathrm{10}^{-3}$/$2.8\cdot \mathrm{10}^{-2}$\\ \hline
Uncertain $H$ & 0.5/0.5 & 0.6/1.0 &0.36 &17.45 & $0.031$ &$5.0\cdot \mathrm{10}^{-3}$/$3.4\cdot \mathrm{10}^{-2}$\\ \hline
\end{tabular}
} }\end{table}For the real-time estimate $\hat{\theta}(t)$ with the estimation
error $\tilde{\theta}(t),$ following part (ii) of Theorem \ref{theorem2}, we
solve (\ref{eq16a}) to obtain%
\[
\left.  \sigma_{01}=\sigma_{02}=0.3633,\right.
\]
and the UB ($\mathrm{B}_{i}$ for $\tilde{\theta}_{i}(t),$ $i=1,2$) can be
calculated as%
\[
\left.
\begin{array}
[c]{l}%
\mathrm{B}_{1}=0.5257\cdot \mathrm{\bar{B}}_{1}+0.8507\cdot \mathrm{\bar{B}}%
_{2}=2.6\cdot10^{-2},\\
\mathrm{B}_{2}=0.8507\cdot \mathrm{\bar{B}}_{1}+0.5257\cdot \mathrm{\bar{B}}%
_{2}=1.82\cdot10^{-2}.
\end{array}
\right.
\]

\section{Conclusion\label{sec5}}

This paper developed a time-delay approach to gradient-based ES of uncertain
quadratic maps in the presence of large measurement delay and input distinct
delays, and sampled-data ES with large distinct input delays. By choosing
gains to stabilize averaged time-delay systems, explicit conditions in terms
of simple inequalities were established to guarantee the practical stability
of the ES control systems. The resulting time-delay method provides a
quantitative bounds on the control parameters and the ultimate bound of
seeking error. Compared with recent quantitative result on the sampled-data
implementation, the presented method not only greatly simplifies the
stability conditions and improves the results, but also allow distinct and
large time delays. Future work may include the extension to discrete-time ES
in the presence of large delays by using the time-delay approach to averaging for
discrete-time systems introduced in \cite{yzf22tac,yf22tac}.

\section*{Appendix}

\subsection*{A1: Proof of Lemma \ref{lemma2}}

Let%
\begin{equation}
\left.  i\cdot \omega_{j}=j\cdot \omega_{i},\text{ }\forall i,j=1,\ldots
,n\mathbf{.}\right.  \label{eq61}%
\end{equation}
For fixed $m_{1},$ we choose%
\begin{equation}
\left.  \omega_{1}=\frac{q\cdot2\pi m_{1}}{\bar{D}_{1}},\text{ }q\in
\mathbf{N}.\right.  \label{eq62}%
\end{equation}
Then via (\ref{eq61}), $\omega_{i}$ ($i=2,\ldots,n$) are in forms of%
\begin{equation}
\left.  \omega_{i}=\frac{q\cdot2\pi im_{1}}{\bar{D}_{1}},\text{ }%
q\in \mathbf{N},\text{ }\forall i=2,\ldots,n\mathbf{.}\right.  \label{eq63}%
\end{equation}
Combining (\ref{eq62}) and (\ref{eq63}) gives
\begin{equation}
\left.  \omega_{i}=\frac{q\cdot2\pi im_{1}}{\bar{D}_{1}},\text{ }%
q\in \mathbf{N},\text{ }\forall i=1,\ldots,n\mathbf{.}\right.  \label{eq63a}%
\end{equation}
Noting (\ref{eq61a}), from (\ref{eq63a}) we further have (\ref{eq66a}),
namely,%
\[
\left.  \omega_{i}\bar{D}_{j}=2\pi \cdot qim_{j},\text{ }q\in \mathbf{N},\text{
}\forall i,j=1,\ldots,n\mathbf{,}\right.
\]
which implies (\ref{eq66}) since $qim_{j}$ are all positive integers. This
completes the proof.

\subsection*{A2: Proof of Theorem \ref{theorem1}}

Since $\tilde{\theta}(t)=U^{\mathrm{T}}\tilde{\vartheta}(t),$ it is not
difficult to obtain part (ii) from part (i). Thus, we just need to prove part
(i). The proof is divided into three parts. (A) First, we present a group of
upper bounds under the assumption that $\tilde{\vartheta}_{i}(t)$
($i=1,\ldots,n$) are bounded for $t\geq D_{\mathrm{M}}$; (B) Second, we show
the practical stability of each $z_{i}$-system in (\ref{eq21}) (and thus each
$\tilde{\vartheta}_{i}$-system in (\ref{eq112})); (C) Third, we show the
availability of the assumption that $\tilde{\vartheta}_{i}(t)$ are bounded for
$t\geq D_{\mathrm{M}}$ by contradiction.

\textit{Proof of part A.} Assume that
\begin{equation}
\left.  \vert \tilde{\vartheta}_{i}(t) \vert<\bar{\sigma}_{i},\text{
}i=1,\ldots,n,\text{ }t\geq D_{\mathrm{M}}.\right.  \label{eq25}%
\end{equation}
When $t\in \lbrack0,D_{\mathrm{M}}],$ we note that $\tilde{\vartheta}_{i}(t)$
($i=1,\ldots,n$) are constants satisfying
\begin{equation}
\left.  \vert \dot{\tilde{\vartheta}}_{i}(t) \vert=0,\text{ } \vert
\tilde{\vartheta}_{i}(t) \vert \leq \bar{\sigma}_{0i}<\bar{\sigma}_{i},\text{
}t\in \lbrack0,D_{\mathrm{M}}],\right.  \label{eq39}%
\end{equation}
which with (\ref{eq25}) yields
\begin{equation}
\left.  \vert \tilde{\vartheta}_{i}(t) \vert<\bar{\sigma}_{i},\text{ }%
t\geq0,\text{ }i=1,\ldots,n\mathbf{,}\right.  \label{eq25a}%
\end{equation}
and
\begin{equation}
\left.  \vert \tilde{\vartheta}(t) \vert=\sqrt{%
%TCIMACRO{\tsum \limits_{j=1}^{n}}%
%BeginExpansion
{\textstyle \sum \limits_{j=1}^{n}}
%EndExpansion
\tilde{\vartheta}_{j}^{2}(t)}<\sqrt{%
%TCIMACRO{\tsum \limits_{j=1}^{n}}%
%BeginExpansion
{\textstyle \sum \limits_{j=1}^{n}}
%EndExpansion
\bar{\sigma}_{j}^{2}},\text{ }t\geq0.\right.  \label{eq42}%
\end{equation}
Employing (\ref{eq26}) and (\ref{eq20c}), we find%
\begin{equation}
\left.
\begin{array}
[c]{l}%
\text{ \  \ }Q(U^{\mathrm{T}}\vartheta(t-D_{i}(t)))\\
\overset{(\ref{eq11})}{=}Q^{\ast}+\frac{1}{2}%
%TCIMACRO{\tsum \limits_{j=1}^{n}}%
%BeginExpansion
{\textstyle \sum \limits_{j=1}^{n}}
%EndExpansion
\bar{h}_{j}\left(  \tilde{\vartheta}_{j}(t-D_{i}(t))+S_{j}(t-D_{i}(t))\right)
^{2}\\
\text{ \  \ }+\frac{1}{2}\left[  \tilde{\vartheta}(t-D_{i}(t))+S(t-D_{i}%
(t))\right]  ^{\mathrm{T}}\\
\text{ \  \ }\times \Delta \mathcal{H}\left[  \tilde{\vartheta}(t-D_{i}%
(t))+S(t-D_{i}(t))\right]  ,\\
\text{ \  \ }\bar{Q}(\tilde{\vartheta}(t-D_{i}(t)))\\
\overset{(\ref{eq11})}{=}Q^{\ast}+\frac{1}{2}%
%TCIMACRO{\tsum \limits_{j=1}^{n}}%
%BeginExpansion
{\textstyle \sum \limits_{j=1}^{n}}
%EndExpansion
\bar{h}_{j}\left(  \tilde{\vartheta}_{j}(t-D_{i}(t))+S_{j}(t)\right)  ^{2}\\
\text{ \  \ }+\frac{1}{2}\left[  \tilde{\vartheta}(t-D_{i}(t))+S(t)\right]
^{\mathrm{T}}\\
\text{ \  \ }\times \Delta \mathcal{H}\left[  \tilde{\vartheta}(t-D_{i}%
(t))+S(t)\right]  .
\end{array}
\right.  \label{eq44b}%
\end{equation}
Note from (\ref{eq32}), (\ref{eq55}) and $\left \Vert \Delta H(t)\right \Vert
\leq \kappa$ in \textbf{A3} that
\begin{equation}
\left.
\begin{array}
[c]{l}%
\left \vert S_{i}(t)\right \vert =\left \vert a_{i}\right \vert ,\text{
}\left \vert M_{i}(t)\right \vert =\frac{2}{\left \vert a_{i}\right \vert },\text{
}i=1,\ldots,n\mathbf{,}\\
\left \vert S^{\mathrm{T}}(t)\right \vert =\left \vert S(t)\right \vert =\sqrt{%
%TCIMACRO{\tsum \limits_{j=1}^{n}}%
%BeginExpansion
{\textstyle \sum \limits_{j=1}^{n}}
%EndExpansion
a_{j}^{2}},\\
\left \Vert \Delta \mathcal{H}(t)\right \Vert \leq \left \Vert U\right \Vert
^{2}\left \Vert \Delta H(t)\right \Vert \leq \kappa.
\end{array}
\right.  \label{eq44}%
\end{equation}
Then via (\ref{eq25a})-(\ref{eq44}), we have
\begin{equation}
\left.
\begin{array}
[c]{l}%
\text{ \  \ }\left \vert Q(U^{\mathrm{T}}\vartheta(t-D_{i}(t)))\right \vert \\
\leq \left \vert Q^{\ast}\right \vert +\frac{1}{2}%
%TCIMACRO{\tsum \limits_{j=1}^{n}}%
%BeginExpansion
{\textstyle \sum \limits_{j=1}^{n}}
%EndExpansion
\left \vert \bar{h}_{j}\right \vert \vert \tilde{\vartheta}_{j}(t-D_{i}%
(t))+S_{j}(t-D_{i}(t)) \vert^{2}\\
\text{ \  \ }+\frac{1}{2}\left \Vert \Delta \mathcal{H}\right \Vert \vert
\tilde{\vartheta}(t-D_{i}(t))+S(t-D_{i}(t)) \vert^{2}\\
<\Delta_{1},\text{ }t\geq D_{\mathrm{M}},\\
\text{ \  \ } \vert \bar{Q}(\tilde{\vartheta}(t-D_{i}(t))) \vert \\
\leq \left \vert Q^{\ast}\right \vert +\frac{1}{2}%
%TCIMACRO{\tsum \limits_{j=1}^{n}}%
%BeginExpansion
{\textstyle \sum \limits_{j=1}^{n}}
%EndExpansion
\left \vert \bar{h}_{j}\right \vert \vert \tilde{\vartheta}_{j}(t-D_{i}%
(t))+S_{j}(t) \vert^{2}\\
\text{ \  \ }+\frac{1}{2}\left \Vert \Delta \mathcal{H}\right \Vert \vert
\tilde{\vartheta}(t-D_{i}(t))+S(t) \vert^{2}\\
<\Delta_{1},\text{ }t\geq D_{\mathrm{M}}%
\end{array}
\right.  \label{eq28}%
\end{equation}
with $\Delta_{1}$ given in (\ref{eq38}), by which, (\ref{eq39}) and
(\ref{eq44}) we further have for $i=1,\ldots,n\mathbf{,}$
\begin{equation}
\left.  \left \vert \dot{\tilde{\vartheta}}_{i}(t)\right \vert <\frac
{2\left \vert k_{i}\right \vert \Delta_{1}}{\left \vert a_{i}\right \vert },\text{
}t\geq0,\right.  \label{eq28a}%
\end{equation}
and then
\begin{equation}
\left.
\begin{array}
[c]{l}%
\left \vert \tilde{\vartheta}_{i}(t)\right \vert =\left \vert \tilde{\vartheta
}_{i}(D_{\mathrm{M}})+%
%TCIMACRO{\tint \nolimits_{D_{\mathrm{M}}}^{t}}%
%BeginExpansion
{\textstyle \int \nolimits_{D_{\mathrm{M}}}^{t}}
%EndExpansion
\dot{\tilde{\vartheta}}_{i}(s)\mathrm{d}s\right \vert \\
<\left \vert \tilde{\vartheta}_{i}(D_{\mathrm{M}})\right \vert +\frac
{2\varepsilon \left \vert k_{i}\right \vert \Delta_{1}}{\left \vert a_{i}%
\right \vert },\text{ }t\in \lbrack D_{\mathrm{M}},D_{\mathrm{M}}+\varepsilon].
\end{array}
\right.  \label{eq64}%
\end{equation}
This implies the first inequality in (\ref{eq65a}) since $\Phi_{2}^{i}<0$ in
(\ref{eq24a}) implies that $\bar{\sigma}_{0i}+\frac{2\varepsilon^{\ast
}\left \vert k_{i}\right \vert \Delta_{1}}{\left \vert a_{i}\right \vert }%
<\bar{\sigma}_{i}.$ Moreover, employing (\ref{eq28a}) we have
\begin{equation}
\left.  \left \vert \dot{\tilde{\vartheta}}(t)\right \vert =\sqrt{%
%TCIMACRO{\tsum \limits_{j=1}^{n}}%
%BeginExpansion
{\textstyle \sum \limits_{j=1}^{n}}
%EndExpansion
\dot{\tilde{\vartheta}}_{j}(t)}<2\Delta_{1}\sqrt{%
%TCIMACRO{\tsum \limits_{j=1}^{n}}%
%BeginExpansion
{\textstyle \sum \limits_{j=1}^{n}}
%EndExpansion
\frac{k_{j}^{2}}{a_{j}^{2}}},\text{ }t\geq0.\right.  \label{eq64a}%
\end{equation}
By using (\ref{eq44}) and the second inequality in (\ref{eq28}), from
(\ref{eq20}) we have
\begin{equation}
\left.
\begin{array}
[c]{l}%
\left \vert G_{i}(t)\right \vert =\left \vert \frac{1}{\varepsilon}%
\int \nolimits_{t-\varepsilon}^{t}(\tau-t+\varepsilon)k_{i}M_{i}(\tau)\bar
{Q}(\tilde{\vartheta}(\tau-D_{i}(\tau)))\mathrm{d}\tau \right \vert \\
\text{ \  \  \  \  \  \  \ }\leq \frac{1}{\varepsilon}%
%TCIMACRO{\tint \nolimits_{t-\varepsilon}^{t}}%
%BeginExpansion
{\textstyle \int \nolimits_{t-\varepsilon}^{t}}
%EndExpansion
(\tau-t+\varepsilon)\left \vert k_{i}M_{i}(\tau)\bar{Q}(\tilde{\vartheta}%
(\tau-D_{i}(\tau)))\right \vert \mathrm{d}\tau \\
\text{ \  \  \  \  \  \  \ }<\frac{2\left \vert k_{i}\right \vert \Delta_{1}%
}{\varepsilon \left \vert a_{i}\right \vert }%
%TCIMACRO{\tint \nolimits_{t-\varepsilon}^{t}}%
%BeginExpansion
{\textstyle \int \nolimits_{t-\varepsilon}^{t}}
%EndExpansion
(\tau-t+\varepsilon)\mathrm{d}\tau \\
\text{ \  \  \  \  \  \  \ }=\frac{\varepsilon \left \vert k_{i}\right \vert \Delta
_{1}}{\left \vert a_{i}\right \vert },\text{ }t\geq D_{\mathrm{M}}%
+\varepsilon,\text{ }i=1,\ldots,n\mathbf{,}%
\end{array}
\right.  \label{eq13}%
\end{equation}
this with $G_{i}(t)=0,$ $t\in \lbrack \varepsilon,D_{\mathrm{M}}+\varepsilon)$
in (\ref{eq52a}) gives \vspace{-0.2cm}
\begin{equation}
\left.
\begin{array}
[c]{l}%
|k_{i}\bar{h}_{i}G_{i}(t-D_{i}(t))|\leq \left \vert k_{i}\bar{h}_{i}\right \vert
|G_{i}(t-D_{i}(t))|\\
<\frac{\varepsilon k_{i}^{2}\left \vert \bar{h}_{i}\right \vert \Delta_{1}%
}{\left \vert a_{i}\right \vert },\text{ }t\geq D_{\mathrm{M}}+\varepsilon
,\text{ }i=1,\ldots,n\mathbf{.}%
\end{array}
\right.  \label{eq45}%
\end{equation}
Employing (\ref{eq11}), (\ref{eq21a}), (\ref{eq25a}), (\ref{eq42}),
(\ref{eq44}), (\ref{eq28a}) and (\ref{eq64a}), we have from (\ref{eq20a}) that%
\begin{equation}
\left.
\begin{array}
[c]{l}%
|Y_{1i}(t)|\leq \frac{1}{\varepsilon}%
%TCIMACRO{\tint \nolimits_{t-\varepsilon}^{t}}%
%BeginExpansion
{\textstyle \int \nolimits_{t-\varepsilon}^{t}}
%EndExpansion%
%TCIMACRO{\tint \nolimits_{\tau-D_{i}(\tau)}^{t-D_{i}(t)}}%
%BeginExpansion
{\textstyle \int \nolimits_{\tau-D_{i}(\tau)}^{t-D_{i}(t)}}
%EndExpansion
\left \vert k_{i}M_{i}(\tau)\right \vert \left \vert \tilde{\vartheta
}^{\mathrm{T}}(s)\mathcal{H}\dot{\tilde{\vartheta}}(s)\right \vert
\mathrm{d}s\mathrm{d}\tau \\
\text{ \  \  \  \  \  \  \  \ }\leq \frac{2\left \vert k_{i}\right \vert }%
{\varepsilon \left \vert a_{i}\right \vert }%
%TCIMACRO{\tint \nolimits_{t-\varepsilon}^{t}}%
%BeginExpansion
{\textstyle \int \nolimits_{t-\varepsilon}^{t}}
%EndExpansion%
%TCIMACRO{\tint \nolimits_{\tau-D_{i}(\tau)}^{t-D_{i}(t)}}%
%BeginExpansion
{\textstyle \int \nolimits_{\tau-D_{i}(\tau)}^{t-D_{i}(t)}}
%EndExpansion%
%TCIMACRO{\tsum \limits_{j=1}^{n}}%
%BeginExpansion
{\textstyle \sum \limits_{j=1}^{n}}
%EndExpansion
\left \vert \bar{h}_{j}\tilde{\vartheta}_{j}(s)\dot{\tilde{\vartheta}}%
_{j}(s)\right \vert \mathrm{d}s\mathrm{d}\tau \\
\text{ \  \  \  \  \  \  \  \  \  \  \ }+\frac{2\left \vert k_{i}\right \vert
}{\varepsilon \left \vert a_{i}\right \vert }%
%TCIMACRO{\tint \nolimits_{t-\varepsilon}^{t}}%
%BeginExpansion
{\textstyle \int \nolimits_{t-\varepsilon}^{t}}
%EndExpansion%
%TCIMACRO{\tint \nolimits_{\tau-D_{i}(\tau)}^{t-D_{i}(t)}}%
%BeginExpansion
{\textstyle \int \nolimits_{\tau-D_{i}(\tau)}^{t-D_{i}(t)}}
%EndExpansion
\left \vert \tilde{\vartheta}^{\mathrm{T}}(s)\Delta \mathcal{H}\dot
{\tilde{\vartheta}}(s)\right \vert \mathrm{d}s\mathrm{d}\tau \\
\text{ \  \  \  \  \  \  \  \ }<\frac{4\left \vert k_{i}\right \vert \Delta_{1}%
}{\varepsilon \left \vert a_{i}\right \vert }%
%TCIMACRO{\tsum \limits_{j=1}^{n}}%
%BeginExpansion
{\textstyle \sum \limits_{j=1}^{n}}
%EndExpansion
\frac{\left \vert \bar{h}_{j}k_{j}\right \vert \bar{\sigma}_{j}}{\left \vert
a_{j}\right \vert }%
%TCIMACRO{\tint \nolimits_{t-\varepsilon}^{t}}%
%BeginExpansion
{\textstyle \int \nolimits_{t-\varepsilon}^{t}}
%EndExpansion%
%TCIMACRO{\tint \nolimits_{\tau-D_{i}(\tau)}^{t-D_{i}(t)}}%
%BeginExpansion
{\textstyle \int \nolimits_{\tau-D_{i}(\tau)}^{t-D_{i}(t)}}
%EndExpansion
1\mathrm{d}s\mathrm{d}\tau \\
\text{ \  \  \  \  \  \  \  \  \  \  \ }+\frac{4\kappa \left \vert k_{i}\right \vert
\Delta_{1}}{\varepsilon \left \vert a_{i}\right \vert }\sqrt{%
%TCIMACRO{\tsum \limits_{j=1}^{n}}%
%BeginExpansion
{\textstyle \sum \limits_{j=1}^{n}}
%EndExpansion
\bar{\sigma}_{j}^{2}}\sqrt{%
%TCIMACRO{\tsum \limits_{j=1}^{n}}%
%BeginExpansion
{\textstyle \sum \limits_{j=1}^{n}}
%EndExpansion
\frac{k_{j}^{2}}{a_{j}^{2}}}%
%TCIMACRO{\tint \nolimits_{t-\varepsilon}^{t}}%
%BeginExpansion
{\textstyle \int \nolimits_{t-\varepsilon}^{t}}
%EndExpansion%
%TCIMACRO{\tint \nolimits_{\tau-D_{i}(\tau)}^{t-D_{i}(t)}}%
%BeginExpansion
{\textstyle \int \nolimits_{\tau-D_{i}(\tau)}^{t-D_{i}(t)}}
%EndExpansion
1\mathrm{d}s\mathrm{d}\tau \\
\text{ \  \  \  \  \  \  \  \ }\leq \frac{4\left \vert k_{i}\right \vert \Delta_{1}%
}{\varepsilon \left \vert a_{i}\right \vert }\left(
%TCIMACRO{\tsum \limits_{j=1}^{n}}%
%BeginExpansion
{\textstyle \sum \limits_{j=1}^{n}}
%EndExpansion
\frac{\left \vert \bar{h}_{j}k_{j}\right \vert \bar{\sigma}_{j}}{\left \vert
a_{j}\right \vert }+\kappa \sqrt{%
%TCIMACRO{\tsum \limits_{j=1}^{n}}%
%BeginExpansion
{\textstyle \sum \limits_{j=1}^{n}}
%EndExpansion
\bar{\sigma}_{j}^{2}}\sqrt{%
%TCIMACRO{\tsum \limits_{j=1}^{n}}%
%BeginExpansion
{\textstyle \sum \limits_{j=1}^{n}}
%EndExpansion
\frac{k_{j}^{2}}{a_{j}^{2}}}\right) \\
\text{ \  \  \  \  \  \  \  \  \  \  \ }\times%
%TCIMACRO{\tint \nolimits_{t-\varepsilon}^{t}}%
%BeginExpansion
{\textstyle \int \nolimits_{t-\varepsilon}^{t}}
%EndExpansion
\left(  t-\tau+2\varepsilon \mu \right)  \mathrm{d}\tau \\
\text{ \  \  \  \  \  \  \  \ }=\frac{\varepsilon \left \vert k_{i}\right \vert
}{\left \vert a_{i}\right \vert }\bar{\Delta}_{1},\text{ }t\geq D_{\mathrm{M}%
}+\varepsilon,\text{ }i=1,\ldots,n
\end{array}
\right.  \label{eq14}%
\end{equation}
with $\bar{\Delta}_{1}$ given in (\ref{eq38}),%
\begin{equation}
\left.
\begin{array}
[c]{l}%
|Y_{2i}(t)|\leq \frac{1}{\varepsilon}%
%TCIMACRO{\tint \nolimits_{t-\varepsilon}^{t}}%
%BeginExpansion
{\textstyle \int \nolimits_{t-\varepsilon}^{t}}
%EndExpansion%
%TCIMACRO{\tint \nolimits_{\tau-D_{i}(\tau)}^{t-D_{i}(t)}}%
%BeginExpansion
{\textstyle \int \nolimits_{\tau-D_{i}(\tau)}^{t-D_{i}(t)}}
%EndExpansion
\left \vert k_{i}M_{i}(\tau)\right \vert \left \vert S^{\mathrm{T}}%
(\tau)\mathcal{H}\dot{\tilde{\vartheta}}(s)\right \vert \mathrm{d}%
s\mathrm{d}\tau \\
\text{ \  \  \  \  \  \  \  \ }\leq \frac{2\left \vert k_{i}\right \vert }%
{\varepsilon \left \vert a_{i}\right \vert }%
%TCIMACRO{\tint \nolimits_{t-\varepsilon}^{t}}%
%BeginExpansion
{\textstyle \int \nolimits_{t-\varepsilon}^{t}}
%EndExpansion%
%TCIMACRO{\tint \nolimits_{\tau-D_{i}(\tau)}^{t-D_{i}(t)}}%
%BeginExpansion
{\textstyle \int \nolimits_{\tau-D_{i}(\tau)}^{t-D_{i}(t)}}
%EndExpansion%
%TCIMACRO{\tsum \limits_{j=1}^{n}}%
%BeginExpansion
{\textstyle \sum \limits_{j=1}^{n}}
%EndExpansion
\left \vert \bar{h}_{j}S_{j}(\tau)\dot{\tilde{\vartheta}}_{j}(s)\right \vert
\mathrm{d}s\mathrm{d}\tau \\
\text{ \  \  \  \  \  \  \  \  \  \  \ }+\frac{2\left \vert k_{i}\right \vert
}{\varepsilon \left \vert a_{i}\right \vert }%
%TCIMACRO{\tint \nolimits_{t-\varepsilon}^{t}}%
%BeginExpansion
{\textstyle \int \nolimits_{t-\varepsilon}^{t}}
%EndExpansion%
%TCIMACRO{\tint \nolimits_{\tau-D_{i}(\tau)}^{t-D_{i}(t)}}%
%BeginExpansion
{\textstyle \int \nolimits_{\tau-D_{i}(\tau)}^{t-D_{i}(t)}}
%EndExpansion
\left \vert S^{\mathrm{T}}(\tau)\Delta \mathcal{H}\dot{\tilde{\vartheta}%
}(s)\right \vert \mathrm{d}s\mathrm{d}\tau \\
\text{ \  \  \  \  \  \  \  \ }<\frac{4\left \vert k_{i}\right \vert \Delta_{1}%
}{\varepsilon \left \vert a_{i}\right \vert }%
%TCIMACRO{\tsum \limits_{j=1}^{n}}%
%BeginExpansion
{\textstyle \sum \limits_{j=1}^{n}}
%EndExpansion
\left \vert \bar{h}_{j}k_{j}\right \vert
%TCIMACRO{\tint \nolimits_{t-\varepsilon}^{t}}%
%BeginExpansion
{\textstyle \int \nolimits_{t-\varepsilon}^{t}}
%EndExpansion%
%TCIMACRO{\tint \nolimits_{\tau-D_{i}(\tau)}^{t-D_{i}(t)}}%
%BeginExpansion
{\textstyle \int \nolimits_{\tau-D_{i}(\tau)}^{t-D_{i}(t)}}
%EndExpansion
1\mathrm{d}s\mathrm{d}\tau \\
\text{ \  \  \  \  \  \  \  \  \  \  \ }+\frac{4\kappa \left \vert k_{i}\right \vert
\Delta_{1}}{\varepsilon \left \vert a_{i}\right \vert }\sqrt{%
%TCIMACRO{\tsum \limits_{j=1}^{n}}%
%BeginExpansion
{\textstyle \sum \limits_{j=1}^{n}}
%EndExpansion
a_{j}^{2}}\sqrt{%
%TCIMACRO{\tsum \limits_{j=1}^{n}}%
%BeginExpansion
{\textstyle \sum \limits_{j=1}^{n}}
%EndExpansion
\frac{k_{j}^{2}}{a_{j}^{2}}}%
%TCIMACRO{\tint \nolimits_{t-\varepsilon}^{t}}%
%BeginExpansion
{\textstyle \int \nolimits_{t-\varepsilon}^{t}}
%EndExpansion%
%TCIMACRO{\tint \nolimits_{\tau-D_{i}(\tau)}^{t-D_{i}(t)}}%
%BeginExpansion
{\textstyle \int \nolimits_{\tau-D_{i}(\tau)}^{t-D_{i}(t)}}
%EndExpansion
1\mathrm{d}s\mathrm{d}\tau \\
\text{ \  \  \  \  \  \  \  \ }\leq \frac{4\left \vert k_{i}\right \vert \Delta_{1}%
}{\varepsilon \left \vert a_{i}\right \vert }\left(
%TCIMACRO{\tsum \limits_{j=1}^{n}}%
%BeginExpansion
{\textstyle \sum \limits_{j=1}^{n}}
%EndExpansion
\left \vert \bar{h}_{j}k_{j}\right \vert +\kappa \sqrt{%
%TCIMACRO{\tsum \limits_{j=1}^{n}}%
%BeginExpansion
{\textstyle \sum \limits_{j=1}^{n}}
%EndExpansion
\bar{\sigma}_{j}^{2}}\sqrt{%
%TCIMACRO{\tsum \limits_{j=1}^{n}}%
%BeginExpansion
{\textstyle \sum \limits_{j=1}^{n}}
%EndExpansion
\frac{k_{j}^{2}}{a_{j}^{2}}}\right) \\
\text{ \  \  \  \  \  \  \  \  \  \  \ }\times%
%TCIMACRO{\tint \nolimits_{t-\varepsilon}^{t}}%
%BeginExpansion
{\textstyle \int \nolimits_{t-\varepsilon}^{t}}
%EndExpansion
\left(  t-\tau+2\varepsilon \mu \right)  \mathrm{d}\tau \\
\text{ \  \  \  \  \  \  \  \ }=\frac{\varepsilon \left \vert k_{i}\right \vert
}{\left \vert a_{i}\right \vert }\bar{\Delta}_{2},\text{ }t\geq D_{\mathrm{M}%
}+\varepsilon,\text{ }i=1,\ldots,n
\end{array}
\right.  \label{eq15}%
\end{equation}
with $\bar{\Delta}_{2}$ given in (\ref{eq38}), and%
\begin{equation}
\left.
\begin{array}
[c]{l}%
\left \vert Y_{3i}(t)\right \vert \leq \left \vert C(k_{i})\right \vert \left \Vert
\Delta \mathcal{H}\right \Vert \left \vert \tilde{\vartheta}(t-D_{i}%
(t))\right \vert \\
\text{ \  \  \ }<\left \vert k_{i}\right \vert \kappa \sqrt{%
%TCIMACRO{\tsum \limits_{j=1}^{n}}%
%BeginExpansion
{\textstyle \sum \limits_{j=1}^{n}}
%EndExpansion
\bar{\sigma}_{j}^{2}},\text{ }t\geq D_{\mathrm{M}}+\varepsilon,\text{
}i=1,\ldots,n.
\end{array}
\right.  \label{eq44c}%
\end{equation}
Noting the form of $S(t)$ in (\ref{eq55}) with $\omega_{i}$ satisfying
(\ref{eq34a}) and $\left \vert \Delta D(t)\right \vert \leq \rho$ with $\rho$
chosen as (\ref{eq21a}), we find%
\begin{equation}
\left.
\begin{array}
[c]{l}%
\left \vert
%TCIMACRO{\tint \nolimits_{t-\Delta D(t)}^{t}}%
%BeginExpansion
{\textstyle \int \nolimits_{t-\Delta D(t)}^{t}}
%EndExpansion
\dot{S}_{i}(s)\mathrm{d}s\right \vert \leq2\pi \mu i\left \vert a_{i}\right \vert
,\text{ }i=1,\ldots,n\mathbf{,}\\
\left \vert
%TCIMACRO{\tint \nolimits_{t-\Delta D(t)}^{t}}%
%BeginExpansion
{\textstyle \int \nolimits_{t-\Delta D(t)}^{t}}
%EndExpansion
\dot{S}(s)\mathrm{d}s\right \vert \leq2\pi \mu \sqrt{%
%TCIMACRO{\tsum \limits_{j=1}^{n}}%
%BeginExpansion
{\textstyle \sum \limits_{j=1}^{n}}
%EndExpansion
j^{2}a_{j}^{2}}.
\end{array}
\right.  \label{eq44a}%
\end{equation}
Employing (\ref{eq11}), (\ref{eq21a}), (\ref{eq25a}), (\ref{eq42}),
(\ref{eq44}) and (\ref{eq44a}), we obtain%
\begin{equation}
\left.
\begin{array}
[c]{l}%
\text{ \  \ }\left \vert k_{i}M_{i}(t)S^{\mathrm{T}}(t)\mathcal{H}%
%TCIMACRO{\tint \nolimits_{t-\Delta D(t)}^{t}}%
%BeginExpansion
{\textstyle \int \nolimits_{t-\Delta D(t)}^{t}}
%EndExpansion
\dot{S}(s)\mathrm{d}s\right \vert \\
\leq \left \vert k_{i}M_{i}(t)\right \vert
%TCIMACRO{\tsum \limits_{j=1}^{n}}%
%BeginExpansion
{\textstyle \sum \limits_{j=1}^{n}}
%EndExpansion
\left \vert \bar{h}_{j}\right \vert \left \vert S_{j}(t)\right \vert \left \vert
%TCIMACRO{\tint \nolimits_{t-\Delta D(t)}^{t}}%
%BeginExpansion
{\textstyle \int \nolimits_{t-\Delta D(t)}^{t}}
%EndExpansion
\dot{S}_{j}(s)\mathrm{d}s\right \vert \\
\text{ \  \ }+\left \vert k_{i}M_{i}(t)\right \vert \left \Vert \Delta
\mathcal{H}(t)\right \Vert \left \vert S(t)\right \vert \left \vert
%TCIMACRO{\tint \nolimits_{t-\Delta D(t)}^{t}}%
%BeginExpansion
{\textstyle \int \nolimits_{t-\Delta D(t)}^{t}}
%EndExpansion
\dot{S}(s)\mathrm{d}s\right \vert \\
\leq \frac{4\pi \mu \left \vert k_{i}\right \vert }{\left \vert a_{i}\right \vert }%
%TCIMACRO{\tsum \limits_{j=1}^{n}}%
%BeginExpansion
{\textstyle \sum \limits_{j=1}^{n}}
%EndExpansion
\left \vert \bar{h}_{j}\right \vert ja_{j}^{2}+\frac{4\pi \mu \kappa \left \vert
k_{i}\right \vert }{\left \vert a_{i}\right \vert }\sqrt{%
%TCIMACRO{\tsum \limits_{j=1}^{n}}%
%BeginExpansion
{\textstyle \sum \limits_{j=1}^{n}}
%EndExpansion
a_{j}^{2}}\sqrt{%
%TCIMACRO{\tsum \limits_{j=1}^{n}}%
%BeginExpansion
{\textstyle \sum \limits_{j=1}^{n}}
%EndExpansion
j^{2}a_{j}^{2}},
\end{array}
\right.  \label{eq70}%
\end{equation}%
\begin{equation}
\left.
\begin{array}
[c]{l}%
\text{ \  \ }\left \vert \frac{k_{i}M_{i}(t)}{2}\left(
%TCIMACRO{\tint \nolimits_{t-\Delta D(t)}^{t}}%
%BeginExpansion
{\textstyle \int \nolimits_{t-\Delta D(t)}^{t}}
%EndExpansion
\dot{S}(s)\mathrm{d}s\right)  ^{\mathrm{T}}\mathcal{H}\left(
%TCIMACRO{\tint \nolimits_{t-\Delta D(t)}^{t}}%
%BeginExpansion
{\textstyle \int \nolimits_{t-\Delta D(t)}^{t}}
%EndExpansion
\dot{S}(s)\mathrm{d}s\right)  \right \vert \\
\leq \frac{\left \vert k_{i}M_{i}(t)\right \vert }{2}%
%TCIMACRO{\tsum \limits_{j=1}^{n}}%
%BeginExpansion
{\textstyle \sum \limits_{j=1}^{n}}
%EndExpansion
\left \vert \bar{h}_{j}\right \vert \left \vert
%TCIMACRO{\tint \nolimits_{t-\Delta D(t)}^{t}}%
%BeginExpansion
{\textstyle \int \nolimits_{t-\Delta D(t)}^{t}}
%EndExpansion
\dot{S}_{j}(s)\mathrm{d}s\right \vert ^{2}\\
\text{ \  \ }+\frac{\left \vert k_{i}M_{i}(t)\right \vert }{2}\left \Vert
\Delta \mathcal{H}(t)\right \Vert \left \vert
%TCIMACRO{\tint \nolimits_{t-\Delta D(t)}^{t}}%
%BeginExpansion
{\textstyle \int \nolimits_{t-\Delta D(t)}^{t}}
%EndExpansion
\dot{S}(s)\mathrm{d}s\right \vert ^{2}\\
\leq \frac{4\pi^{2}\mu^{2}\left \vert k_{i}\right \vert }{\left \vert
a_{i}\right \vert }%
%TCIMACRO{\tsum \limits_{j=1}^{n}}%
%BeginExpansion
{\textstyle \sum \limits_{j=1}^{n}}
%EndExpansion
\left \vert \bar{h}_{j}\right \vert j^{2}a_{j}^{2}+\frac{4\pi^{2}\mu^{2}%
\kappa \left \vert k_{i}\right \vert }{\left \vert a_{i}\right \vert }%
%TCIMACRO{\tsum \limits_{j=1}^{n}}%
%BeginExpansion
{\textstyle \sum \limits_{j=1}^{n}}
%EndExpansion
j^{2}a_{j}^{2},
\end{array}
\right.  \label{eq71}%
\end{equation}
and%
\begin{equation}
\left.
\begin{array}
[c]{l}%
\text{ \  \ }\left \vert k_{i}M_{i}(t)\left(
%TCIMACRO{\tint \nolimits_{t-\Delta D(t)}^{t}}%
%BeginExpansion
{\textstyle \int \nolimits_{t-\Delta D(t)}^{t}}
%EndExpansion
\dot{S}(s)\mathrm{d}s\right)  ^{\mathrm{T}}\mathcal{H}\tilde{\vartheta
}(t-D_{i}(t))\right \vert \\
\leq \left \vert k_{i}M_{i}(t)\right \vert
%TCIMACRO{\tsum \limits_{j=1}^{n}}%
%BeginExpansion
{\textstyle \sum \limits_{j=1}^{n}}
%EndExpansion
\left \vert \bar{h}_{j}\right \vert \left \vert
%TCIMACRO{\tint \nolimits_{t-\Delta D(t)}^{t}}%
%BeginExpansion
{\textstyle \int \nolimits_{t-\Delta D(t)}^{t}}
%EndExpansion
\dot{S}_{j}(s)\mathrm{d}s\right \vert \left \vert \tilde{\vartheta}_{j}%
(t-D_{i}(t))\right \vert \\
+\left \vert k_{i}M_{i}(t)\right \vert \left \Vert \Delta \mathcal{H}%
(t)\right \Vert \left \vert
%TCIMACRO{\tint \nolimits_{t-\Delta D(t)}^{t}}%
%BeginExpansion
{\textstyle \int \nolimits_{t-\Delta D(t)}^{t}}
%EndExpansion
\dot{S}(s)\mathrm{d}s\right \vert \left \vert \tilde{\vartheta}(t-D_{i}%
(t))\right \vert \\
<\frac{4\pi \mu \left \vert k_{i}\right \vert }{\left \vert a_{i}\right \vert }%
%TCIMACRO{\tsum \limits_{j=1}^{n}}%
%BeginExpansion
{\textstyle \sum \limits_{j=1}^{n}}
%EndExpansion
\left \vert \bar{h}_{j}a_{j}\right \vert j\bar{\sigma}_{j}+\frac{4\pi \mu
\kappa \left \vert k_{i}\right \vert }{\left \vert a_{i}\right \vert }\sqrt{%
%TCIMACRO{\tsum \limits_{j=1}^{n}}%
%BeginExpansion
{\textstyle \sum \limits_{j=1}^{n}}
%EndExpansion
j^{2}a_{j}^{2}}\sqrt{%
%TCIMACRO{\tsum \limits_{j=1}^{n}}%
%BeginExpansion
{\textstyle \sum \limits_{j=1}^{n}}
%EndExpansion
\bar{\sigma}_{j}^{2}}.
\end{array}
\right.  \label{eq72}%
\end{equation}
Then via (\ref{eq70})-(\ref{eq72}), we get from (\ref{eq33a}) that%
\begin{equation}
\left.
\begin{array}
[c]{l}%
\left \vert w_{i}(t)\right \vert <\frac{4\pi \mu \left \vert k_{i}\right \vert
}{\left \vert a_{i}\right \vert }%
%TCIMACRO{\tsum \limits_{j=1}^{n}}%
%BeginExpansion
{\textstyle \sum \limits_{j=1}^{n}}
%EndExpansion
\left \vert \bar{h}_{j}\right \vert ja_{j}^{2}+\frac{4\pi \mu \kappa \left \vert
k_{i}\right \vert }{\left \vert a_{i}\right \vert }\sqrt{%
%TCIMACRO{\tsum \limits_{j=1}^{n}}%
%BeginExpansion
{\textstyle \sum \limits_{j=1}^{n}}
%EndExpansion
a_{j}^{2}}\sqrt{%
%TCIMACRO{\tsum \limits_{j=1}^{n}}%
%BeginExpansion
{\textstyle \sum \limits_{j=1}^{n}}
%EndExpansion
j^{2}a_{j}^{2}}\\
+\frac{4\pi^{2}\mu^{2}\left \vert k_{i}\right \vert }{\left \vert a_{i}%
\right \vert }%
%TCIMACRO{\tsum \limits_{j=1}^{n}}%
%BeginExpansion
{\textstyle \sum \limits_{j=1}^{n}}
%EndExpansion
\left \vert \bar{h}_{j}\right \vert j^{2}a_{j}^{2}+\frac{4\pi^{2}\mu^{2}%
\kappa \left \vert k_{i}\right \vert }{\left \vert a_{i}\right \vert }%
%TCIMACRO{\tsum \limits_{j=1}^{n}}%
%BeginExpansion
{\textstyle \sum \limits_{j=1}^{n}}
%EndExpansion
j^{2}a_{j}^{2}\\
+\frac{4\pi \mu \left \vert k_{i}\right \vert }{\left \vert a_{i}\right \vert }%
%TCIMACRO{\tsum \limits_{j=1}^{n}}%
%BeginExpansion
{\textstyle \sum \limits_{j=1}^{n}}
%EndExpansion
\left \vert \bar{h}_{j}a_{j}\right \vert j\bar{\sigma}_{j}+\frac{4\pi \mu
\kappa \left \vert k_{i}\right \vert }{\left \vert a_{i}\right \vert }\sqrt{%
%TCIMACRO{\tsum \limits_{j=1}^{n}}%
%BeginExpansion
{\textstyle \sum \limits_{j=1}^{n}}
%EndExpansion
j^{2}a_{j}^{2}}\sqrt{%
%TCIMACRO{\tsum \limits_{j=1}^{n}}%
%BeginExpansion
{\textstyle \sum \limits_{j=1}^{n}}
%EndExpansion
\bar{\sigma}_{j}^{2}}\\
=\frac{\mu \left \vert k_{i}\right \vert }{\left \vert a_{i}\right \vert }%
\Delta_{2},\text{ }i=1,\ldots,n
\end{array}
\right.  \label{eq15a}%
\end{equation}
with $\Delta_{2}$ given in (\ref{eq38}). By using (\ref{eq45})-(\ref{eq44c}),
(\ref{eq15a}) and noting $\bar{w}_{i}(t)$ in (\ref{eq21}), we have%
\begin{equation}
\left.
\begin{array}
[c]{l}%
\left \vert \bar{w}_{i}(t)\right \vert \leq \left \vert k_{i}h_{i}G_{i}%
(t-D_{i}(t))\right \vert +\left \vert Y_{1i}(t)\right \vert \\
\text{ \  \  \  \  \  \  \  \  \  \ }+\left \vert Y_{2i}(t)\right \vert +\left \vert
Y_{3i}(t)\right \vert +\left \vert w_{i}(t)\right \vert \\
\text{ \  \  \  \  \  \  \ }<W_{i}(\varepsilon,\mu,\kappa),\text{ }t\geq
D_{\mathrm{M}}+\varepsilon,\text{ }i=1,\ldots,n
\end{array}
\right.  \label{eq19}%
\end{equation}
with $W_{i}(\varepsilon,\mu,\kappa)$ given by (\ref{eq24b}).

In addition, via (\ref{eq52a}), (\ref{eq64}) and (\ref{eq13}) we find for
$i=1,\ldots,n\mathbf{,}$%
\begin{equation}
\left.
\begin{array}
[c]{l}%
\left \vert z_{i}(t)\right \vert \leq \left \vert \tilde{\vartheta}_{i}%
(t)\right \vert +|G_{i}(t)|\\
<\left \vert \tilde{\vartheta}_{i}(D_{\mathrm{M}})\right \vert +\frac
{3\varepsilon \left \vert k_{i}\right \vert \Delta_{1}}{\left \vert a_{i}%
\right \vert },\text{ }t\in \lbrack \varepsilon,D_{\mathrm{M}}+\varepsilon],
\end{array}
\right.  \label{eq56}%
\end{equation}
where we have used $\tilde{\vartheta}_{i}(t)=\tilde{\vartheta}_{i}%
(D_{\mathrm{M}}),$ $t\in \lbrack0,D_{\mathrm{M}}].$

\textit{Proof of part B.} Define $X_{i}(t,s)$ as the solution of the following
homogeneous equation%
\[
\left.
\begin{array}
[c]{l}%
\dot{z}_{i}(t)=k_{i}\bar{h}_{i}z(t-D_{i}(t)),\text{ }t\geq s,\text{
}i=1,\ldots,n\mathbf{,}\\
z_{i}(t)=0,t<s,\text{ }z_{i}(s)=1,\text{ }s\geq0.
\end{array}
\right.
\]
By using Lemma \ref{lemma1}, under the condition $\Phi_{1}^{i}\leq0$
$(i=1,\ldots,n)$ in (\ref{eq24a}), there hold for $i=1,\ldots,n\mathbf{,}$%
\begin{equation}
\left.  0<X_{i}(t,s)\leq \left \{
\begin{array}
[c]{ll}%
1, & s\leq t\leq s+D_{i\mathrm{M}},\\
\mathrm{e}^{-\left \vert k_{i}\bar{h}_{i}\right \vert \left(  t-s-D_{i\mathrm{M}%
}\left(  \varepsilon \right)  \right)  }, & t\geq s+D_{i\mathrm{M}}%
\end{array}
\right.  \right.  \label{eq69}%
\end{equation}
with $D_{i\mathrm{M}}$ $(i=1,\ldots,n)$ given in (\ref{eq21b}). By using
(\ref{eq77}) in Lemma \ref{lemma1} for (\ref{eq21}) we further have%
\[
\left.
\begin{array}
[c]{l}%
z_{i}(t)=X_{i}(t,D_{\mathrm{M}}+\varepsilon)z_{i}(D_{\mathrm{M}}%
+\varepsilon)\\
\text{ \  \  \  \  \  \  \ }+\int \nolimits_{D_{\mathrm{M}}+\varepsilon}^{t}%
X_{i}(t,s)k_{i}\bar{h}_{i}\varphi_{i}(s-D_{i}(s))\mathrm{d}s\\
\text{ \  \  \  \  \  \  \ }+\int \nolimits_{D_{\mathrm{M}}+\varepsilon}^{t}%
X_{i}(t,s)\bar{w}_{i}(s)\mathrm{d}s,
\end{array}
\right.
\]
where $\varphi_{i}(s-D_{i}(s))=0$ if $s-D_{i}(s)>D_{\mathrm{M}}+\varepsilon$
and $\varphi_{i}(s-D_{i}(s))=z_{i}(s-D_{i}(s))$ if $\varepsilon \leq
s-D_{i}(s)\leq D_{\mathrm{M}}+\varepsilon.$ Then it follows from (\ref{eq19})
and (\ref{eq56}) that%
\begin{equation}
\left.
\begin{array}
[c]{l}%
\left \vert z_{i}(t)\right \vert \leq \left \vert X_{i}(t,D_{\mathrm{M}%
}+\varepsilon)\right \vert \left \vert z_{i}(D_{\mathrm{M}}+\varepsilon
)\right \vert \\
\text{ \  \  \  \  \  \  \  \  \ }+\left \vert k_{i}\bar{h}_{i}\right \vert
\int \nolimits_{D_{\mathrm{M}}+\varepsilon}^{t}\left \vert X_{i}(t,s)\right \vert
\left \vert \varphi_{i}(s-D_{i}(s))\right \vert \mathrm{d}s\\
\text{ \  \  \  \  \  \  \  \  \ }+\int \nolimits_{D_{\mathrm{M}}+\varepsilon}%
^{t}\left \vert X_{i}(t,s)\right \vert \left \vert \bar{w}_{i}(s)\right \vert
\mathrm{d}s\\
\text{ \  \  \  \  \  \ }<\left[  \left \vert \tilde{\vartheta}_{i}(D_{\mathrm{M}%
})\right \vert +\frac{3\varepsilon \left \vert k_{i}\right \vert \Delta_{1}%
}{\left \vert a_{i}\right \vert }\right]  \left \vert X_{i}(t,D_{\mathrm{M}%
}+\varepsilon)\right \vert \\
\text{ \  \  \  \  \  \  \  \  \ }+\left \vert k_{i}\bar{h}_{i}\right \vert \left[
\left \vert \tilde{\vartheta}_{i}(D_{\mathrm{M}})\right \vert +\frac
{3\varepsilon \left \vert k_{i}\right \vert \Delta_{1}}{\left \vert a_{i}%
\right \vert }\right] \\
\text{ \  \  \  \  \  \  \  \  \ }\times \int \nolimits_{D_{\mathrm{M}}+\varepsilon
}^{D_{\mathrm{M}}+D_{i\mathrm{M}}+\varepsilon}\left \vert X_{i}(t,s)\right \vert
\mathrm{d}s\\
\text{ \  \  \  \  \  \  \  \  \ }+W_{i}(\varepsilon,\mu,\kappa)\int
\nolimits_{D_{\mathrm{M}}+\varepsilon}^{t}\left \vert X_{i}(t,s)\right \vert
\mathrm{d}s
\end{array}
\right.  \label{eq23}%
\end{equation}
When $t\in \lbrack D_{\mathrm{M}}+\varepsilon,D_{\mathrm{M}}+D_{i\mathrm{M}%
}+\varepsilon],$ by using (\ref{eq69}), inequality (\ref{eq23}) can be
continued as%
\[
\left.
\begin{array}
[c]{l}%
\left \vert z_{i}(t)\right \vert <\left(  1+\left \vert k_{i}\bar{h}%
_{i}\right \vert D_{i\mathrm{M}}\right)  \left(  \left \vert \tilde{\vartheta
}_{i}(D_{\mathrm{M}})\right \vert +\frac{3\varepsilon \left \vert k_{i}%
\right \vert \Delta_{1}}{\left \vert a_{i}\right \vert }\right) \\
\text{ \  \  \  \  \  \  \  \ }+D_{i\mathrm{M}})W_{i}(\varepsilon,\mu,\kappa),
\end{array}
\right.
\]
by which, (\ref{eq52a}) and (\ref{eq13}), we further have%
\begin{equation}
\left.
\begin{array}
[c]{l}%
\left \vert \tilde{\vartheta}_{i}(t)\right \vert <\left(  1+\left \vert k_{i}%
\bar{h}_{i}\right \vert D_{i\mathrm{M}}\right)  \left(  \left \vert
\tilde{\vartheta}_{i}(D_{\mathrm{M}})\right \vert +\frac{3\varepsilon \left \vert
k_{i}\right \vert \Delta_{1}}{\left \vert a_{i}\right \vert }\right) \\
\text{ \  \  \  \  \  \  \  \  \ }+D_{i\mathrm{M}}W_{i}(\varepsilon,\mu,\kappa
)+\frac{\varepsilon \left \vert k_{i}\right \vert \Delta_{1}}{\left \vert
a_{i}\right \vert },
\end{array}
\right.  \label{eq29}%
\end{equation}
which implies the second inequality in (\ref{eq65a}) due to $\Phi_{2}^{i}<0$
in (\ref{eq24a}) since $\mathrm{e}^{\left \vert k_{i}\bar{h}_{i}\right \vert
D_{i\mathrm{M}}}\geq1+\left \vert k_{i}\bar{h}_{i}\right \vert D_{i\mathrm{M}}.$

When $t\geq D_{\mathrm{M}}+D_{i\mathrm{M}}+\varepsilon,$ by using
(\ref{eq69}), inequality (\ref{eq23}) can be continued as%
\[
\left.
\begin{array}
[c]{l}%
\left \vert z_{i}(t)\right \vert <\mathrm{e}^{-\left \vert k_{i}\bar{h}%
_{i}\right \vert \left(  t-D_{\mathrm{M}}-D_{i\mathrm{M}}-\varepsilon \right)
}\left[  \left \vert \tilde{\vartheta}_{i}(D_{\mathrm{M}})\right \vert
+\frac{3\varepsilon \left \vert k_{i}\right \vert \Delta_{1}}{\left \vert
a_{i}\right \vert }\right] \\
\text{ \  \  \  \  \  \  \  \ }+\left \vert k_{i}\bar{h}_{i}\right \vert \left[
\left \vert \tilde{\vartheta}_{i}(D_{\mathrm{M}})\right \vert +\frac
{3\varepsilon \left \vert k_{i}\right \vert \Delta_{1}}{\left \vert a_{i}%
\right \vert }\right] \\
\text{ \  \  \  \  \  \  \  \ }\times \int \nolimits_{D_{\mathrm{M}}+\varepsilon
}^{D_{\mathrm{M}}+D_{i\mathrm{M}}+\varepsilon}\mathrm{e}^{-\left \vert
k_{i}\bar{h}_{i}\right \vert \left(  t-s-D_{i\mathrm{M}}\right)  }\mathrm{d}s\\
\text{ \  \  \  \  \  \  \  \ }+W_{i}(\varepsilon,\mu,\kappa)\int
\nolimits_{D_{\mathrm{M}}+\varepsilon}^{t}\mathrm{e}^{-\left \vert k_{i}\bar
{h}_{i}\right \vert \left(  t-s-D_{i\mathrm{M}}\right)  }\mathrm{d}s\\
\text{ \  \  \  \  \ }=\mathrm{e}^{-\left \vert k_{i}\bar{h}_{i}\right \vert \left(
t-D_{\mathrm{M}}-D_{i\mathrm{M}}-\varepsilon \right)  }\left[  \left \vert
\tilde{\vartheta}_{i}(D_{\mathrm{M}})\right \vert +\frac{3\varepsilon \left \vert
k_{i}\right \vert \Delta_{1}}{\left \vert a_{i}\right \vert }\right] \\
\text{ \  \  \  \  \  \  \  \ }+\mathrm{e}^{-\left \vert k_{i}\bar{h}_{i}\right \vert
\left(  t-D_{\mathrm{M}}-D_{i\mathrm{M}}-\varepsilon \right)  }\left(
\mathrm{e}^{\left \vert k_{i}\bar{h}_{i}\right \vert D_{i\mathrm{M}}}-1\right)
\\
\text{ \  \  \  \  \  \  \  \ }\times \left[  \left \vert \tilde{\vartheta}%
_{i}(D_{\mathrm{M}})\right \vert +\frac{3\varepsilon \left \vert k_{i}\right \vert
\Delta_{1}}{\left \vert a_{i}\right \vert }\right] \\
\text{ \  \  \  \  \  \  \  \ }+\frac{W_{i}(\varepsilon,\mu,\kappa)}{\left \vert
k_{i}\bar{h}_{i}\right \vert }\left[  \mathrm{e}^{\left \vert k_{i}\bar{h}%
_{i}\right \vert D_{i\mathrm{M}}}-\mathrm{e}^{-\left \vert k_{i}\bar{h}%
_{i}\right \vert \left(  t-D_{M}-D_{i\mathrm{M}}-\varepsilon \right)  }\right]
\\
\text{ \  \  \  \  \ }\leq \mathrm{e}^{-\left \vert k_{i}\bar{h}_{i}\right \vert
\left(  t-D_{\mathrm{M}}-2D_{i\mathrm{M}}-\varepsilon \right)  }\left[
\left \vert \tilde{\vartheta}_{i}(D_{\mathrm{M}})\right \vert +\frac
{3\varepsilon \left \vert k_{i}\right \vert \Delta_{1}}{\left \vert a_{i}%
\right \vert }\right] \\
\text{ \  \  \  \  \  \  \  \ }+\frac{\mathrm{e}^{\left \vert k_{i}\bar{h}%
_{i}\right \vert D_{i\mathrm{M}}}}{\left \vert k_{i}\bar{h}_{i}\right \vert
}W_{i}(\varepsilon,\mu,\kappa).
\end{array}
\right.
\]
The latter together with (\ref{eq52a}) and (\ref{eq13}) yield%
\[
\left.
\begin{array}
[c]{l}%
\left \vert \tilde{\vartheta}_{i}(t)\right \vert <\mathrm{e}^{-\left \vert
k_{i}\bar{h}_{i}\right \vert \left(  t-D_{\mathrm{M}}-2D_{i\mathrm{M}%
}-\varepsilon \right)  }\left[  \left \vert \tilde{\vartheta}_{i}(D_{\mathrm{M}%
})\right \vert +\frac{3\varepsilon \left \vert k_{i}\right \vert \Delta_{1}%
}{\left \vert a_{i}\right \vert }\right] \\
\text{ \  \  \  \  \  \  \  \  \  \ }+\frac{\mathrm{e}^{\left \vert k_{i}\bar{h}%
_{i}\right \vert D_{i\mathrm{M}}}}{\left \vert k_{i}\bar{h}_{i}\right \vert
}W_{i}(\varepsilon,\mu,\kappa)+\frac{\varepsilon \left \vert k_{i}\right \vert
\Delta_{1}}{\left \vert a_{i}\right \vert },
\end{array}
\right.
\]
which implies the third inequality in (\ref{eq65a}) due to $\Phi_{2}^{i}<0$ in
(\ref{eq24a}).

\textit{Proof of part C.} we show that the conditions in (\ref{eq24a})
guarantee that bounds in (\ref{eq25}) hold.

\textbf{(i)} When $t\in \lbrack D_{\mathrm{M}},D_{\mathrm{M}}+\varepsilon],$
since $|\tilde{\vartheta}_{i}(D_{\mathrm{M}})|\leq \bar{\sigma}_{0i}%
<\bar{\sigma}_{i}$ and $\tilde{\vartheta}_{i}(t)$ is continuous in time,
(\ref{eq25}) holds for some $t>D_{\mathrm{M}}.$ We assume by contradiction
that for some $t\in(D_{\mathrm{M}},D_{\mathrm{M}}+\varepsilon]$ the formula
(\ref{eq25}) does not hold for some $i$. Namely, there exists the smallest
time instant $t^{\ast}\in(D_{\mathrm{M}},D_{\mathrm{M}}+\varepsilon]$ such
that $|\tilde{\vartheta}_{i}(t^{\ast})|=\bar{\sigma}_{i},$ $|\tilde{\vartheta
}_{i}(t)|<\bar{\sigma}_{i},$ $t\in \lbrack D_{\mathrm{M}},t^{\ast}).$ Then by
employing inequalities of (\ref{eq25})-(\ref{eq28}), we obtain%
\[
\left.  \left \vert \dot{\tilde{\vartheta}}_{i}(t)\right \vert \leq \frac
{2k_{i}\Delta_{1}}{\left \vert a_{i}\right \vert },\text{ }t\in \lbrack
D_{\mathrm{M}},t^{\ast}].\right.
\]
As a result,%
\[
\left.
\begin{array}
[c]{l}%
\left \vert \tilde{\vartheta}_{i}(t)\right \vert =\left \vert \tilde{\vartheta
}_{i}(D_{\mathrm{M}})+%
%TCIMACRO{\tint \nolimits_{D_{M}}^{t}}%
%BeginExpansion
{\textstyle \int \nolimits_{D_{M}}^{t}}
%EndExpansion
\dot{\tilde{\vartheta}}_{i}(s)\mathrm{d}s\right \vert \\
\text{ \  \  \  \  \  \  \  \  \ }\leq \left \vert \tilde{\vartheta}_{i}(D_{\mathrm{M}%
})\right \vert +\frac{2\varepsilon \left \vert k_{i}\right \vert \Delta_{1}%
}{\left \vert a_{i}\right \vert },\text{ }t\in \lbrack D_{\mathrm{M}},t^{\ast}].
\end{array}
\right.
\]
Furthermore, the feasibility of $\Phi_{2}^{i}<0$ in (\ref{eq24a}) ensures that%
\[
\left.  \left \vert \tilde{\vartheta}_{i}(t^{\ast})\right \vert \leq \bar{\sigma
}_{0i}+\frac{2\varepsilon^{\ast}\left \vert k_{i}\right \vert \Delta_{1}%
}{\left \vert a_{i}\right \vert }<\bar{\sigma}_{i}.\right.
\]
This contradicts to the definition of $t^{\ast}$ such that $|\tilde{\vartheta
}_{i}(t^{\ast})|=\bar{\sigma}_{i}.$ Hence (\ref{eq25}) holds for $t\in \lbrack
D_{\mathrm{M}},D_{\mathrm{M}}+\varepsilon].$

\textbf{(ii)} When $t\in \lbrack D_{\mathrm{M}}+\varepsilon,2D_{\mathrm{M}%
}+\varepsilon],$ since $|\tilde{\vartheta}_{i}(D_{\mathrm{M}}+\varepsilon
)|<\bar{\sigma}_{i}$ as shown in \textbf{(i) }and $\tilde{\vartheta}_{i}(t)$
is continuous in time, (\ref{eq25}) holds for some $t>D_{\mathrm{M}%
}+\varepsilon.$ We assume by contradiction that for some $t\in(D_{\mathrm{M}%
}+\varepsilon,2D_{\mathrm{M}}+\varepsilon]$ the formula (\ref{eq25}) does not
hold for some $i$. Namely, there exists the smallest time instant $t^{\ast}%
\in(D_{\mathrm{M}}+\varepsilon,2D_{\mathrm{M}}+\varepsilon]$ such that
$|\tilde{\vartheta}_{i}(t^{\ast})|=\bar{\sigma}_{i},$ $|\tilde{\vartheta}%
_{i}(t)|<\bar{\sigma}_{i},$ $t\in \lbrack D_{\mathrm{M}}+\varepsilon,t^{\ast
}).$ This together with $|\tilde{\vartheta}_{i}(t)|<\bar{\sigma}_{i},t\in \lbrack
D_{\mathrm{M}},D_{\mathrm{M}}+\varepsilon]$ in \textbf{(i)} give
$|\tilde{\vartheta}_{i}(t)|\leq \bar{\sigma}_{i},$ $t\in \lbrack D_{\mathrm{M}%
},t^{\ast}].$ Similar to the proofs in parts (A) and (B), under the condition
$\Phi_{1}^{i}\leq0$ in (\ref{eq24a}), we finally arrive at (\ref{eq29}) in its
non-strict version for $t\in \lbrack D_{\mathrm{M}}+\varepsilon,t^{\ast}].$
Moreover, the feasibility of $\Phi_{2}^{i}<0$ in (\ref{eq24a}) ensures that
$|\tilde{\vartheta}_{i}(t^{\ast})|<\bar{\sigma}_{i}.$ This contradicts to the
definition of $t^{\ast}$ such that $|\tilde{\vartheta}_{i}(t^{\ast}%
)|=\bar{\sigma}_{i}.$ Hence (\ref{eq25}) holds for $t\in \lbrack D_{\mathrm{M}%
}+\varepsilon,2D_{\mathrm{M}}+\varepsilon].$

\textbf{(iii)} By arguments of (\textbf{ii}) it can be proved that
(\ref{eq24a}) guarantees (\ref{eq25}) for all $t\geq2D_{\mathrm{M}%
}+\varepsilon$, which finishes the proof.

\end{document}